\numberwithin{equation}{section}
\theoremstyle{plain}
\newtheorem{theorem}{Theorem}
\numberwithin{theorem}{section}
\newtheorem{lemma}{Lemma}       	
\numberwithin{lemma}{section}
\newtheorem{proposition}{Proposition}
\numberwithin{proposition}{section}
\newtheorem{corollary}{Corollary}
\numberwithin{corollary}{section}
\theoremstyle{definition}
\newtheorem{definition}{Definition}
\numberwithin{definition}{section}
\numberwithin{example}{section}
\newtheorem{remark}{Remark}
\numberwithin{remark}{section}
\numberwithin{assumption}{section}
\newcommand\Eb{\mathds{E}}
\newcommand\Ib{\mathds{1}}
\newcommand\Pb{\mathds{P}}
\newcommand\Rb{\mathds{R}}
\newcommand\Ac{\mathscr{A}}
\newcommand\Bc{\mathscr{B}}
\newcommand\Cc{\mathcal{C}}
\newcommand\Dc{\mathscr{D}}
\newcommand\Fc{\mathscr{F}}
\newcommand\Lc{\mathscr{L}}
\newcommand\eps{\varepsilon}
\newcommand\Om{\Omega}
\newcommand\sig{\sigma}
\newcommand\gam{\gamma}
\newcommand\lam{\lambda}
\newcommand\del{\delta}
\newcommand\kap{\kappa}
\newcommand\xu{\underline{x}}
\newcommand\fv{\mathbf{f}}
\newcommand\yv{\mathbf{y}}
\newcommand\uv{\mathbf{u}}
\newcommand\wv{\mathbf{w}}
\newcommand\ct{\tilde{c}}
\newcommand\yt{\widetilde{y}}
\newcommand\etat{\widetilde{\eta}}
\newcommand\deltat{\tilde{\del}}
\newcommand\pht{\widetilde{\varphi}}
\newcommand\dd{\mathrm{d}}
\newcommand\ee{\mathrm{e}}
\newcommand{\cs}{{c^*}}
\newcommand\pis{{\pi^*}}
\newcommand\thts{{\theta^*}}
\newcommand{\xa}{{x^{*}}}
\newcommand{\ya}{y^*}
\newcommand{\R}{\mathbb{R}}
\newcommand{\esup}{\operatorname{ess}\sup}
\newcommand{\F}{\mathbb{F}}
\newcommand \al {\alpha}
\newcommand \bet {\beta}
\newcommand \vp {v}
\newcommand \vpt {v}
\providecommand{\keywords}[1]{\noindent\textbf{\textit{Keywords: }} #1}
\begin{document}

\title{Optimal Investment and Consumption under a Habit-Formation Constraint}

\author{
Bahman Angoshtari
\thanks{Department of Mathematics, University of Miami.  \textbf{e-mail}: \url{bangoshtari@miami.edu}}
\and
Erhan Bayraktar
\thanks{Department of Mathematics, University of Michigan.  \textbf{e-mail}: \url{erhan@umich.edu} E. Bayraktar is supported in part by the National Science Foundation under grant DMS-1613170 and by the Susan M. Smith Professorship.}
\and
Virginia R. Young
\thanks{Department of Mathematics, University of Michigan.  \textbf{e-mail}: \url{vryoung@umich.edu} V. R. Young is supported in part by the Cecil J. and Ethel M. Nesbitt Professorship.}
}

\date{This version: \today}

\maketitle

\begin{abstract}
We formulate an infinite-horizon optimal investment and consumption problem, in which an individual forms a habit based on the exponentially weighted average of her past consumption rate, and in which she invests in a Black-Scholes market.
The individual is constrained to consume at a rate higher than a certain proportion $\alpha$ of her consumption habit. Our habit-formation model allows for both addictive ($\alpha=1$) and nonaddictive ($0<\alpha<1$) habits. The optimal investment and consumption policies are derived explicitly in terms of the solution of a system of differential equations with free boundaries, which is analyzed in detail. If the wealth-to-habit ratio is below (resp.\ above) a critical level $x^*$, the individual consumes at (resp.\ above) the minimum rate and invests more (resp.\ less) aggressively in the risky asset. Numerical results show that the addictive habit formation requires significantly more wealth to support the same consumption rate compared to a moderately nonaddictive habit. Furthermore, an individual with a more addictive habit invests less in the risky asset compared to an individual with a less addictive habit but with the same wealth-to-habit ratio and risk aversion, which provides an explanation for the equity-premium puzzle.   
\end{abstract}

\keywords{Optimal investment and consumption, habit formation, habit persistence,  average past consumption, stochastic control, free-boundary problem.}

%
%

\section{Introduction}\label{sec:intro}

The study of consumption habit formation is a classical topic in financial economics and the literature goes back to the late 1960's. See, for instance, \cite{Pollak1970}, \cite{RyderHeal1973}, \cite{sundaresan1989}, \cite{Constantinides1990}, \cite{DetempleZapatero1991}, \cite{DetempleZapatero1992} for early works, and \cite{DetempleKaratzas2003}, \cite{MUNK2008}, \cite{EnglezosKaratzas2009}, \cite{Muraviev2011}, and \cite{2015Yu} for more recent studies. In this literature, habit formation is modeled through the so-called \emph{habit-formation preference} $\Eb\left[\int_0^T U(t, C_t-Z_t)\dd t\right]$, in which $U:[0,T]\times\Rb\to \Rb$ is a given utility function and $Z_t$ is the agent's \emph{habit} (or standard of living) defined as the exponentially weighted running average of past consumption rates $C_s$, $0\le s< t$. If the consumption rate is allowed to fall below the habit, the habit-formation model is called \emph{nonaddictive}. Otherwise, a model with a constraint $C_t\ge Z_t$ is called \emph{addictive}. habit-formation models are notoriously more difficult to solve than their non-habit formation counterparts. Indeed, explicit forms for optimal policies are rare and, in most cases, the optimal policy is specified in terms of a solution of a PDE or an unknown process characterized via the martingale representation theorem.

A related literature on consumption ratcheting and drawdown is devoted to models of optimal consumption under a more severe form of habit formation in which the reference point for forming habit is the running maximum of past consumption rates (instead of their running average). \cite{Dybvig1995} found the optimal investment and consumption policies for an investor in a Black-Scholes financial market who seeks to maximize discounted utility of consumption, while imposing a ratcheting constraint on the rate of consumption (that is, the consumption rate has to be a non-decreasing process). \cite{Arun2012} extended \cite{Dybvig1995} by allowing the rate of consumption to decrease, but not below a fraction of its maximum rate (that is, a so-called drawdown constraint on the consumption rate). See, also, \cite{jeon2018portfolio} and \cite{2019Roche} for similar models. \cite{AngoshtariBayraktarYoung2019} solved a problem setting similar to that of \cite{Arun2012} that also allowed for agent's bankruptcy (in the context of an optimal dividend problem), which occurred with positive probability. In a related yet different setting, \cite{AlbrecherAzcueMuler2020} and \cite{AlbrecherAzcueMuler2021} considered an optimal dividend problem in a Brownian risk model while imposing a ratcheting constraint on the dividend rates. In the studies above, habit formation is modeled by imposing a constraint on admissible consumption policies, rather than through the objective function, which is the approach taken for classical habit-formation models. Recently, \cite{DengLiPhamYu2020} provided a direct link to the classical literature of habit formation by solving an optimal investment and consumption model with a habit-formation preference (that is, they modeled habit formation through the objective function rather than through the admissibility set), in which the habit is presented by the running maximum of consumption.

Habit-formation models based on the running maximum have been more tractable and produced more explicit policies than those with the running average as the reference point. The former class of models, however, represent a more extreme form of habit formation in the sense that the effect of past consumption does not ``fade away'' with time, as one expects.  Indeed, under a drawdown constraint, our future habits will change forever if we decide to increase consumption beyond its historical maximum.  In reality, recent levels of consumption have more effect on our current consumption habit than how we consumed a long time ago, and the effect of past consumption fades away with time. These observations motivated us to consider a habit-formation model in which the reference point of habit is the running average of consumption (as in the habit-formation literature), and the habit-formation mechanism operates though a constraint on admissible consumption policies (as in the consumption ratcheting and drawdown literature). In a sense, we also provide a connection between these two bodies of work, however in the opposite direction of \cite{DengLiPhamYu2020}.  

In \cite{AngoshtariBayraktarYoung2020}, we provided the first step by solving a deterministic optimal consumption problem with the objective of maximizing the functional $\int_0^{+\infty} \ee^{-\del t} \frac{\big[C(t)/Z(t)\big]^{1-\gam}}{1-\gam}\dd t$ while imposing the habit-formation constraint $C(t)\ge \al Z(t)$ for all $t\ge0$. Here, $C(t)$, $t\ge0$, is the deterministic consumption rate and
\begin{linenomath*}\begin{align}
Z(t) = \ee^{-\rho\,t}\left(z + \int_0^{t} \rho\,\ee^{\rho\,u} C(u) \dd u\right);\quad t\ge0,
\end{align}\end{linenomath*}
is the agent's habit at time $t$. In particular, we assumed that the individual funds her consumption solely through a riskless asset offering an interest rate $r>0$; thus, wealth and consumption processes were deterministic. To avoid bankruptcy, we showed that the wealth-to-habit ratio must always be above a certain level $\xu$ given by \eqref{eq:xu} below. We showed that there exists a threshold $\xa$ such that if the ratio of wealth-to-habit is above (resp.\ below) $\xa$, it is optimal to consume at a rate greater than (resp.\ equal to) the minimum acceptable rate imposed by the habit-formation constraint. We also found a significant difference between impatient individuals (those with $\del\ge \rho(1-\al)+r$) and patient individuals (those with $0<\del<\rho(1-\al)+r$). Impatient individuals always consume above the minimum rate (that is, $\xa = \xu$) and, thereby, eventually attain the minimum wealth-to-habit ratio $\xu$, while patient individuals might consume at the minimum rate (that is, $\xa > \xu$) and, thereby, attain a wealth-to-habit ratio greater than the minimum acceptable level.  We obtained explicit results in terms of the solution of a nonlinear free-boundary problem.

In this paper, we extend the model in \cite{AngoshtariBayraktarYoung2020} by assuming that the agent invests in a Black-Scholes financial market. We formulate and solve a stochastic control problem to obtain the optimal investment and consumption policies. We find that the optimal consumption policy has a similar general structure as what we found in the riskless case. That is, there exists a critical level $\xa$ of wealth-to-habit ratio such that the agent consumes above the minimum rate if her wealth-to-habit ratio is above $\xa$ and consumes at the minimum rate otherwise. The value of $\xa$ and the optimal consumption function are, however, different from their counterparts in the riskless case. In particular, we don't see the structural difference between the consumption functions of patient and impatient individuals in that $\xa>\xu$ for all values of $\del$. As for the investment policy, we found that the agent optimally invests ``more aggressively'' in the stock when her wealth-to-habit ratio is below $\xa$ compared to when it is above $\xa$. By more aggressive investment, we mean that an (infinitesimal) increase in wealth-to-habit results in a larger increase in stock's holdings. Finally, numerical analysis shows that increasing $\al$ (while keeping wealth-to-habit ratio and risk aversion constant) decreases the optimal investment in the risky asset. In other words, individuals with more addictive habit formation (that is, larger $\al$) optimally invest \emph{less} in the risky asset. Thus, the market has to provide a higher premium to attract such an individual which indicates that our model provides an explanation for the equity premium puzzle of \cite{mehra1985equity}.

On the mathematical side, the results presented here rely on analyzing a coupled system of first-order ODEs with a free boundary, as opposed to a single ODE in \cite{AngoshtariBayraktarYoung2020}. The analysis of such a system is more delicate (see Proposition \ref{prop:FBP-sol}) and provides the main technical backbone of the paper. A second technical point of the paper is the verification theorem (Theorem \ref{thm:VF}), which did not pose many difficulties in \cite{AngoshtariBayraktarYoung2020} when there is no stochasticity involved. Besides, the fact that the drift coefficient of the optimal wealth SDE has more than linear growth and the coefficients are only semi-explicit makes certain parts of the verification argument somewhat non-standard.

The paper is organized as follows. In Section \ref{sec:setup}, we introduce the consumption habit process and its basic properties, formulate a stochastic control problem for finding the optimal investment and consumption policy, and prove a verification lemma for the stochastic control problem. In Section \ref{sec:optimal}, we formulate the Hamilton-Jacobi-Bellman (HJB) free-boundary-problem and solve it semi-explicitly by applying the Legendre transform. This section also includes the main result of the paper, namely, Theorem \ref{thm:VF}, in which we verify that the solution of the HJB free-boundary problem yields the value function and the optimal investment and consumption policies. In Section \ref{sec:numerics}, we include a series of numerical examples that highlight certain properties of the optimal policy. Proofs of auxiliary results are included in Appendices \ref{app:verif} and \ref{app:aux_constrained}.
%
%

\section{Problem formulation}\label{sec:setup}

We consider an individual who invests in a market consisting of a riskless and a risky asset in order to maximize her utility of lifetime consumption. We assume that the riskless asset pays interest at a fixed rate $r>0$ and that the price of the risky asset $(S_t)_{t\ge0}$ follows a geometric Brownian motion
\begin{linenomath*}\begin{align}
\frac{\dd S_t}{S_t} = \mu \dd t + \sig \dd B_t;\quad t \ge 0.
\end{align}\end{linenomath*}
Here, $\mu > r$ and $\sig > 0$ are constants, and $(B_t)_{t\ge0}$ is a standard Brownian motion in a filtered probability space $\big(\Om, \Fc, \Pb, \F = (\Fc_t)_{t\ge0}\big)$, in which the filtration $\F$ is generated by the Brownian motion and satisfies the usual conditions.

Let $\pi_t$ denote the amount invested in the risky asset, and let $C_t$ denote the individual's consumption rate at time $t$, so that $\int_0^t C_u\dd u$ is the total consumption over the time interval $[0,t]$. Then, her wealth process $(W_t)_{t\ge0}$ follows the dynamics
\begin{linenomath*}\begin{align}\label{eq:wealth}
\dd W_t &= \big(r\,W_t + (\mu-r) \pi_t - C_t\big)\dd t + \sig \pi_t \dd B_t,
\end{align}\end{linenomath*}
for $t\ge0$, with $W_0=w>0$.

For a given consumption process $(C_t)_{t\ge0}$, we define the individual's \emph{habit process} (that is, consumption habit) as the process $(Z_t)_{t\ge0}$ given by
\begin{linenomath*}\begin{align}\label{eq:Z}
Z_t = \ee^{-\rho\,t}\left(z + \int_0^{t} \rho\,\ee^{\rho\,u} C_u\dd u\right);\quad t\ge0,
\end{align}\end{linenomath*}
which has the following equivalent differential form:
\begin{linenomath*}\begin{align}\label{eq:Z2}
\begin{cases}
	\dd Z_t = -\rho(Z_t - C_{t})\dd t;\quad t\ge0,\\
	Z_0=z.
\end{cases}
\end{align}\end{linenomath*}
Here, $\rho>0$ is a constant, and $z>0$ represents the initial consumption habit of the individual.  The parameter $\rho$ determines how much current habit is influenced by the recent rate of consumption relative to the consumption rate farther in the past.  As $\rho$ increases, more weight is given to recent consumption.  In the limiting cases, $\rho = 0$ implies $Z_t = z$, and $\rho = + \infty$ implies $Z_t = C_{t}$.

For $t>0$, the consumption habit $Z_t$ given by \eqref{eq:Z} is the exponentially weighted moving average of past consumption $(C_s)_{s<t}$. To see this, let us assume that the individual lived (and consumed) over the time period $(-\infty, t)$. Let $z$ be the exponentially weighted average of her consumption rate before time zero, that is, $z = \int_{-\infty}^0 \rho\,\ee^{\rho u} C_u \dd u$.
(Note that $\int_{-\infty}^0 \rho \, \ee^{\rho u} \dd u = 1$.)  By substituting for $z$ in \eqref{eq:Z}, we obtain
\begin{linenomath*}\begin{align}
Z_t &= \int_{-\infty}^0 \rho\,\ee^{-\rho(t-u)} C_u\dd u + \int_0^{t} \rho\,\ee^{-\rho(t-u)} C_u\dd u\\
&= \int_{-\infty}^{t} \rho\,\ee^{-\rho(t-u)} C_u\dd u,
\end{align}\end{linenomath*}
with $\int_{-\infty}^{t} \rho\,\ee^{-\rho(t-u)} \dd u = 1$.
Thus, $Z_t$ is the exponentially weighted moving average of $(C_s)_{s<t}$, as claimed.

We consider a \emph{consumption habit formation} for the individual by assuming that, at any time $t\ge0$, she is unwilling to consume at a rate that is below a certain proportion of her  habit $Z_t$. In particular, we impose the following constraint on the individual's consumption process
\begin{linenomath*}\begin{align}\label{eq:Habit}
C_t \ge \al\,Z_t;\quad \Pb{\text -}a.s., \, t \ge 0,
\end{align}\end{linenomath*}
in which $0< \al\le 1$ is a constant that measures the individual's tolerance for her current consumption to drop below her habit.  The larger the value of $\al$, the less tolerant the individual is in allowing her current consumption to fall below her habit. Note that the consumption habit process $(Z_t)_{t\ge0}$ depends on $z$ and on the consumption process $(C_t)_{t\ge0}$. To ease the notational burden, however, we write $Z_t$ instead of the more accurate $Z_t^{z, (C_s)_{0 \le s \le t}}$. 

The following lemma establishes a lower bound for the consumption habit process, and we use it in later arguments.  We omit the proof of this lemma because it closely follows the proof of Lemma 2.1 in \cite{AngoshtariBayraktarYoung2020}.

\begin{lemma}\label{lem:Z-Lbound}
Let $(C_t)_{t\ge0}$ be a consumption process satisfying \eqref{eq:Habit}, in which $(Z_t)_{t\ge0}$ is given by \eqref{eq:Z}. We, then, have
\begin{linenomath*}\begin{align}\label{eq:Z-Lbound}
	Z_t\ge Z_s \ee^{-\rho(1-\al)(t-s)},
\end{align}\end{linenomath*}
$\Pb$-a.s., for all $0\le s \le t$. In particular, $Z_t \ge z \ee^{-\rho(1-\al)t}$, $\Pb$-a.s., for all $t \ge 0$.   \qed
\end{lemma}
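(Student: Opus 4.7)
The plan is to work pathwise: for $\Pb$-a.e.\ $\omega$ the map $t\mapsto Z_t(\omega)$ is absolutely continuous and satisfies the ordinary (not stochastic) differential equation \eqref{eq:Z2}, since no Brownian term appears. Hence I can treat the bound as a deterministic statement about the ODE $\dot Z=-\rho(Z-C)$ under the constraint $C\ge\alpha Z$, and then simply quote that the resulting inequality holds almost surely.

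First, I would rewrite the constraint \eqref{eq:Habit} in the convenient form $C_t-Z_t\ge -(1-\alpha)Z_t$, and substitute it into \eqref{eq:Z2} to obtain the differential inequality
\begin{linenomath*}\begin{align}
\dd Z_t \;=\; -\rho(Z_t-C_t)\,\dd t \;\ge\; -\rho(1-\alpha)\,Z_t\,\dd t.
\end{align}\end{linenomath*}
Next, I would multiply by the integrating factor $e^{\rho(1-\alpha)t}$ to conclude that the process $t\mapsto Z_t\,e^{\rho(1-\alpha)t}$ is (pathwise) nondecreasing; equivalently,
\begin{linenomath*}\begin{align}
\dd\!\left(Z_t\,\ee^{\rho(1-\alpha)t}\right)\;=\;\ee^{\rho(1-\alpha)t}\bigl[\rho(1-\alpha)Z_t+\dd Z_t/\dd t\bigr]\dd t\;\ge\;0.
\end{align}\end{linenomath*}
Comparing its values at times $s\le t$ and rearranging yields \eqref{eq:Z-Lbound}. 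Setting $s=0$ and using $Z_0=z$ gives the particular lower bound $Z_t\ge z\,\ee^{-\rho(1-\alpha)t}$.

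I do not expect any serious obstacle: because $Z$ evolves by a pathwise ODE with no diffusion coefficient, no stochastic calculus (Itô formula, localization, etc.) is needed, and the a.s.\ qualifier is inherited directly from the a.s.\ validity of the constraint \eqref{eq:Habit}. The only mild point worth noting is that $C$ may only be progressively measurable rather than continuous, so the manipulations above should be understood in the sense of absolutely continuous functions; this is already implicit in the integral representation \eqref{eq:Z} and causes no difficulty. As noted by the authors, this argument is essentially identical to that of Lemma 2.1 in \cite{AngoshtariBayraktarYoung2020}, so omitting the detailed proof is justified.
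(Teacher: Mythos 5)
Your proof is correct and follows exactly the approach one would expect for this result (and that the referenced Lemma~2.1 in \cite{AngoshtariBayraktarYoung2020} uses): combine the habit constraint $C_t\ge\alpha Z_t$ with the pathwise ODE \eqref{eq:Z2} to get the differential inequality $\dd Z_t\ge -\rho(1-\alpha)Z_t\,\dd t$, then apply the integrating factor $\ee^{\rho(1-\alpha)t}$ (equivalently, Gr\"onwall). Your remark that the argument should be read in the absolutely continuous sense, since $C$ is merely progressively measurable, is the right level of care, and no stochastic calculus is needed because $Z$ has no diffusion term.
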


We assume that the individual avoids bankruptcy with probability one. The following lemma provides the corresponding necessary and sufficient condition. In it, we use the notation
\begin{linenomath*}\begin{align}\label{eq:xu}
\xu=\xu(\al):=\frac{\al}{r+\rho(1-\al)}\,,
\end{align}\end{linenomath*}
for $\al\in[0,1]$. Note that $\xu$ is strictly increasing in $\al$, $\xu(0)=0$, and $\xu(1)=1/r$.  Again, we omit the proof of this lemma because it closely follows the proof of Lemma 2.2 in \cite{AngoshtariBayraktarYoung2020}.

\begin{lemma}\label{lem:NoRuin}
Let the $\F$-adapted process $(\pi_t,C_t)_{t\ge0}$  satisfy condition \eqref{eq:integrable0} below, and let $(W_t)_{t\ge0}$ and $(Z_t)_{t\ge0}$ be given by \eqref{eq:wealth} and \eqref{eq:Z}, respectively. Then, $W_t>0$ for all $t\ge0$ $\Pb$-a.s.\ if and only if
\begin{linenomath*}\begin{align}\label{eq:NoRuin}
	\frac{W_t}{Z_t} \ge \xu,
\end{align}\end{linenomath*}
$\Pb$-a.s., for all $t \ge 0$.  \qed
\end{lemma}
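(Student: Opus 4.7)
The ``$\Leftarrow$'' direction is immediate: by Lemma \ref{lem:Z-Lbound} we have $Z_t \ge z\,\ee^{-\rho(1-\al)t} > 0$ $\Pb$-a.s.\ for every $t\ge0$, and $\xu>0$ because $\al>0$, so $W_t \ge \xu\, Z_t > 0$ $\Pb$-a.s. The substantive direction is ``$\Rightarrow$'', and my plan is to upgrade the deterministic budget-constraint argument of \cite{AngoshtariBayraktarYoung2020} to the Brownian setting by passing to the risk-neutral measure.

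Concretely, I would introduce $\Pbt$ via $\frac{\dd \Pbt}{\dd \Pb}\big|_{\Fc_t} = \mathcal{E}(-\lam B)_t$ with $\lam:=(\mu-r)/\sig$, so that $\Bt_t := B_t + \lam t$ is a $\Pbt$-Brownian motion and the wealth SDE simplifies to $\dd W_t = (rW_t - C_t)\dd t + \sig\,\pi_t\,\dd \Bt_t$. A short It\^o computation then shows that
\begin{align}
M_t \,:=\, \ee^{-rt}W_t + \int_0^t \ee^{-rs} C_s\, \dd s
\end{align}
is a local $\Pbt$-martingale, driven purely by $\dd\Bt$. Under the hypothesis $W_t>0$ $\Pb$-a.s.\ together with $C_s\ge \al Z_s\ge0$, $M$ is non-negative and hence a true $\Pbt$-supermartingale. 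Applying the supermartingale inequality at $t_0\ge0$ and $T\ge t_0$, then letting $T\uparrow\infty$ and invoking monotone convergence on the integral and $\ee^{-rT}W_T\ge0$ on the terminal term, I obtain
\begin{align}
\ee^{-rt_0}\,W_{t_0} \,\ge\, \Ebt\!\left[\int_{t_0}^{\infty}\ee^{-rs}C_s\,\dd s\,\bigg|\,\Fc_{t_0}\right] \quad \Pb\text{-a.s.}
\end{align}

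To close the argument, I would bound the integral pathwise using the habit constraint $C_s\ge \al Z_s$ and Lemma \ref{lem:Z-Lbound} applied between $t_0$ and $s\ge t_0$ (which gives $Z_s\ge Z_{t_0}\,\ee^{-\rho(1-\al)(s-t_0)}$); a direct integration combined with the definition \eqref{eq:xu} then yields
\begin{align}
\int_{t_0}^{\infty}\ee^{-rs}C_s\,\dd s \,\ge\, \al\, Z_{t_0}\int_{t_0}^{\infty}\ee^{-rs-\rho(1-\al)(s-t_0)}\,\dd s \,=\, \xu\, Z_{t_0}\, \ee^{-rt_0},
\end{align}
from which $W_{t_0}/Z_{t_0}\ge \xu$ follows $\Pb$-a.s.\ for every fixed $t_0$; the sample-path continuity of $W$ and $Z$ then upgrades this to an inequality holding for all $t_0\ge0$ simultaneously $\Pb$-a.s. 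The only real obstacle will be the integrability bookkeeping---justifying Girsanov's theorem, promoting the local martingale to a true supermartingale via Fatou, and interchanging $T\to\infty$ with the conditional expectation---all of which should follow from the admissibility condition \eqref{eq:integrable0} stated later in the paper.
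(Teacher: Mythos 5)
Your proof is correct and is the natural stochastic analogue of the deterministic budget-constraint argument in \cite{AngoshtariBayraktarYoung2020} that the paper itself defers to: passing to the risk-neutral measure and using the supermartingale property of discounted wealth plus cumulative discounted consumption to obtain $\ee^{-rt_0}W_{t_0}\ge\Ebt\big[\int_{t_0}^{\infty}\ee^{-rs}C_s\,\dd s\,\big|\,\Fc_{t_0}\big]$, then bounding the integrand pathwise via the habit constraint and Lemma \ref{lem:Z-Lbound}. Note only that, like the paper's statement, your argument tacitly relies on the constraint \eqref{eq:Habit} being in force for the ``$\Rightarrow$'' direction (without it the lemma is false, e.g.\ $C\equiv0$), and that for ``$\Leftarrow$'' the elementary bound $Z_t\ge z\,\ee^{-\rho t}>0$ from $C\ge0$ suffices without invoking Lemma \ref{lem:Z-Lbound}.
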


Following the proof of Lemma 2.2 in \cite{AngoshtariBayraktarYoung2020}, we provide a detailed discussion of the condition in \eqref{eq:NoRuin}, and we invite the interested reader to refer to that paper.  That said, we repeat that, as $\al \to 0^+$, the requirement for consumption \eqref{eq:Habit} becomes $C_t \ge 0$, and inequality \eqref{eq:NoRuin} becomes moot, which we expect because this limiting case is the market model considered by \cite{Merton1969}.
Also, note that, in the special case of $\al = 1$, the requirement for consumption \eqref{eq:Habit} becomes $C_t \ge Z_t$, and inequality \eqref{eq:NoRuin} becomes $rW_t \ge Z_t$, which is consistent with feasibility condition adapted by \cite{Dybvig1995}, namely, that $rW_t \ge C_{t-}$. Note, however, that our preference specification in \eqref{eq:EU} differs from Merton's and Dybvig's for the case $\al \to 0^+$. Therefore, our optimal policies do not converge to theirs as $\al\to0^+$ or for $\al=1$.

\begin{remark}
	In the classical habit formation preference, the conventional definition of $Z_t$ is $\dd Z_t=-\rho_1Z_t\dd t+\rho_2 C_t\dd t$. When comparing with the classical literature, one should note that the reference point in our model is $\tilde{Z}_t = \al Z_t$ and not $Z_t$. Since $\dd Z_t = -\rho (Z_t - C_t) \dd t$, our reference point satisfies
	\begin{align}
		\dd \tilde{Z}_t = \al \dd Z_t = -\al\rho (Z_t - C_t) \dd t = \big(-\rho \tilde{Z_t} + \al\rho C_t\big)\dd t.
	\end{align}
	Thus, by setting $\rho_1=\rho$ and $\rho_2=\al \rho$, we obtain the same dynamics for the reference point $(\tilde{Z}_t)$ as in the classical literature. More specifically, the case $\al=1$ in our model corresponds to the case $\rho_1=\rho_2$ in the classical literature, while our case $0<\al<1$ corresponds to the case $\rho_1\ne \rho_2$. \qed
\end{remark}

In the following, we define the set of \emph{admissible} investment and consumption policies as those that avoid bankruptcy while satisfying the individual's consumption habit-formation constraint.

\begin{definition}\label{def:Admissible0}
Let $\widetilde{\Ac}(\al)$ be the set of all processes $(\pi_t,C_t)_{t\ge0}$ such that $(\pi_t)_{t\ge0}$ is $\F$-adapted, $(C_t)_{t\ge0}$ is non-negative and $\F$-progressively measurable, 
\begin{linenomath*}\begin{align}\label{eq:integrable0}
	\int_0^t \big(\pi_u^2 + C_u\big)\dd u <+\infty;\quad t\ge0, \, \Pb\text{-a.s.,}
\end{align}\end{linenomath*}
and conditions \eqref{eq:Habit} and \eqref{eq:NoRuin} hold, namely,
\begin{linenomath*}\begin{align}
	C_t \ge \al Z_t,\quad \text{and}\quad W_t\ge \xu Z_t,
\end{align}\end{linenomath*}
$\Pb$-a.s., for all $t \ge 0$, in which $(W_t)_{t\ge0}$ and $(Z_t)_{t\ge0}$ are given by \eqref{eq:wealth} and \eqref{eq:Z}, respectively.  \qed
\end{definition}

Next, we formulate the individual's lifetime consumption and investment problem as a stochastic control problem. For any admissible investment and consumption policy $(\pi_t,C_t)_{t\ge0}$, let us introduce the \emph{wealth-to-habit} process
\begin{linenomath*}\begin{align}\label{eq:X}
X_t := \frac{W_t}{Z_t};\quad t\ge0,
\end{align}\end{linenomath*}
and note that, by \eqref{eq:wealth} and \eqref{eq:Z2},
\begin{linenomath*}\begin{align}\label{eq:X-SDE}
\begin{cases}
	\dd X_t = \Big((\rho+r)X_t + (\mu-r)\theta_t - (1+\rho X_t)c_t\Big)\dd t + \sig\theta_t \dd B_t;\quad t\ge0,\\
	X_0=x:=\frac{w}{z} \ge \xu,
\end{cases}
\end{align}\end{linenomath*}
in which we have defined the \emph{investment-to-habit} process $(\theta_t)_{t\ge0}$ and the \emph{consumption-to-habit} process $(c_t)_{t\ge0}$ by, $\theta_t := \frac{\pi_t}{Z_t}$ and $c_t := \frac{C_t}{Z_t}$,
respectively.

We define the set of \emph{admissible} investment-to-habit and consumption-to-habit policies as follows.
\begin{definition}\label{def:Admissible}
Let $\Ac=\Ac(\al)$ be the set of all processes $(\theta_t,c_t)_{t\ge0}$ such that $(\theta_t)_{t\ge 0}$ is $\F$-adapted, $(c_t)_{t\ge0}$ is $\F$-progressively measurable, 
\begin{linenomath*}\begin{align}\label{eq:integrable}
	\int_0^t \big(\theta_u^2 + c_u\big)\dd u <+\infty;\quad  \Pb\text{-a.s.}, \, t \ge 0,
\end{align}\end{linenomath*}
and
\begin{linenomath*}\begin{align}
	c_t \ge \al,\quad \text{and}\quad X_t\ge \xu,
\end{align}\end{linenomath*}
$\Pb$-a.s., for all $t \ge 0$, in which $(X_t)_{t\ge0}$ is given by \eqref{eq:X}. \qed
\end{definition}

As the following proposition states, our two definitions of admissible policies are equivalent in the sense that any admissible investment and consumption policy corresponds to an admissible relative investment and consumption policy and vice versa. Its proof is an application of It\^o's lemma and, thus, omitted.

\begin{proposition}\label{prop:Admiss}
Assume that $(\pi_t,C_t)_{t\ge0} \in \widetilde{\Ac}(\al)$ and let $(Z_t)_{t\ge0}$ be given by \eqref{eq:Z}. Then, we have  $(\pi_t/Z_t, C_t/Z_t)_{t\ge0}\in\Ac(\al)$.
Conversely, assume that $(\theta_t,c_t)_{t\ge0}\in\Ac(\al)$, and let $(W_t)_{t\ge}$ be the solution of
\begin{linenomath*}\begin{align}
	\begin{cases}\displaystyle
		\frac{\dd W_t}{W_t} = \left(r + (\mu-r) \frac{\theta_t}{X_t} - \frac{c_t}{X_t}\right)\dd t + \sig \, \frac{\theta_t}{X_t} \, \dd B_t;\quad t\ge0,\\
		W_0=w,
	\end{cases}
\end{align}\end{linenomath*}
in which $(X_t)_{t\ge0}$ is given by \eqref{eq:X-SDE}. We, then, have $(\pi_t:=\theta_t W_t / X_t, C_t:=c_t W_t / X_t)\in \widetilde{\Ac}(\al)$.   \qed
\end{proposition}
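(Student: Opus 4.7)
The plan is to unwind the natural homogeneity of the wealth and habit dynamics under the change of variables $X_t=W_t/Z_t$, using It\^o's formula (the habit process $Z$ has finite variation, so no cross-variation terms appear when taking $W/Z$).

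For the forward implication, given $(\pi_t,C_t)_{t\ge0}\in\widetilde{\Ac}(\al)$, I set $\theta_t:=\pi_t/Z_t$ and $c_t:=C_t/Z_t$, where $Z_t$ is defined from $C_t$ by \eqref{eq:Z}. Adaptedness and progressive measurability of $(\theta_t,c_t)$ are inherited from $(\pi_t,C_t)$ because $Z$ is continuous and $\F$-adapted. The integrability condition \eqref{eq:integrable} follows from \eqref{eq:integrable0} together with Lemma \ref{lem:Z-Lbound}: on any compact time interval $[0,t]$, the lower bound $Z_u\ge z\ee^{-\rho(1-\al)t}$ gives a uniform upper bound for $1/Z_u^2$ and $1/Z_u$, so $\int_0^t(\theta_u^2+c_u)\dd u<\infty$ a.s. The constraints $c_t\ge\al$ and $X_t\ge\xu$ are just \eqref{eq:Habit} and \eqref{eq:NoRuin} divided by $Z_t>0$. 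Finally, applying It\^o's formula to $X_t=W_t/Z_t$, using \eqref{eq:wealth} and \eqref{eq:Z2}, yields the SDE \eqref{eq:X-SDE} after a direct cancellation.

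For the converse, given $(\theta_t,c_t)\in\Ac(\al)$, the process $X_t$ is well-defined by \eqref{eq:X-SDE} and stays bounded below by $\xu>0$. The linear SDE for $W_t$ in the statement therefore has adapted, pathwise locally-integrable coefficients $\theta_t/X_t$ and $c_t/X_t$ (by \eqref{eq:integrable} and $X_t\ge\xu$), and admits a unique positive strong solution given by the explicit stochastic exponential. Define $\pi_t:=\theta_t W_t/X_t$, $C_t:=c_t W_t/X_t$, and set $\Zt_t:=W_t/X_t$ with $\Zt_0=w/x=z$. Applying It\^o's formula to $\Zt_t=W_t/X_t$, the diffusion terms cancel identically (both $(1/X)\cdot\sig\theta W/X$ and $(W/X^2)\cdot\sig\theta$ equal $\sig\theta W/X^2$), and the drift simplifies to
\begin{linenomath*}\begin{align}
\dd\Zt_t=\rho(c_t-1)\,\Zt_t\,\dd t=-\rho\bigl(\Zt_t-c_t\Zt_t\bigr)\dd t=-\rho\bigl(\Zt_t-C_t\bigr)\dd t,
\end{align}\end{linenomath*}
since $C_t=c_t\Zt_t$. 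Thus $\Zt_t$ satisfies \eqref{eq:Z2} with consumption $C_t$ and initial value $z$, so $\Zt_t=Z_t$ is precisely the habit process attached to $(C_t)$. Rewriting the SDE for $W_t$ with $\pi_t$ and $C_t$ in place of $\theta_t W_t/X_t$ and $c_t W_t/X_t$ recovers \eqref{eq:wealth}. The admissibility conditions then transfer: $C_t=c_tZ_t\ge\al Z_t$ gives \eqref{eq:Habit}, $W_t=X_tZ_t\ge\xu Z_t$ gives \eqref{eq:NoRuin}, and $\int_0^t(\pi_u^2+C_u)\dd u\le(\max_{u\le t}Z_u^2)\int_0^t(\theta_u^2+c_u)\dd u<\infty$ a.s. using continuity of $Z$.

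The only mildly delicate point is the cancellation in the It\^o computation for $W/X$ in the converse direction; it is essentially forced by the specific form of the drift in \eqref{eq:X-SDE}, which is what makes the two definitions of admissibility exactly equivalent. The rest is bookkeeping, justifying why the authors omit the proof.
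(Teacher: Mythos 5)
Your proof is correct and is precisely the ``application of It\^o's lemma'' that the paper alludes to before omitting the argument; the forward direction divides \eqref{eq:wealth} by $Z_t$ via It\^o (with $Z$ of finite variation) to recover \eqref{eq:X-SDE}, and the converse reconstructs the habit as $\Zt_t=W_t/X_t$ and checks it satisfies \eqref{eq:Z2}. Two very minor cosmetic points worth tightening: in the converse It\^o computation you only mention the cancellation of the first-order Brownian terms, but one should also note that the second-order correction terms $-X_t^{-2}\dd\langle W,X\rangle_t + W_t X_t^{-3}\dd\langle X\rangle_t$ cancel each other (both equal $\pm\sig^2\theta_t^2 W_t X_t^{-3}\dd t$) before the drift reduces to $\rho(c_t-1)\Zt_t\dd t$; and in the final integrability bound, $\pi_u^2+C_u=\theta_u^2 Z_u^2+c_u Z_u$ is dominated by $\big(\max_{u\le t}(Z_u+Z_u^2)\big)\big(\theta_u^2+c_u\big)$ rather than by $\big(\max_{u\le t}Z_u^2\big)\big(\theta_u^2+c_u\big)$, since $Z_u$ may be below $1$; the conclusion is unchanged.
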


We assume that the individual values her consumption relative to her habit. In particular, for a given consumption process $(C_t)_{t\ge0}$, the expected utility of her lifetime consumption is given by
\begin{linenomath*}\begin{align}\label{eq:EU}
\Eb\left(\int_0^{\tau_d} \frac{1}{1-\gam}\left(\frac{C_t}{Z_t}\right)^{1-\gam}\, \ee^{-\deltat\,t}\,\dd t\right)
= \Eb\left(\int_0^{+\infty} \frac{1}{1-\gam}\left(\frac{C_t}{Z_t}\right)^{1-\gam}\, \ee^{-(\tilde \lam+\deltat)\,t}\,\dd t\right),
\end{align}\end{linenomath*}
in which $\deltat>0$ is the individual's subjective time preference, $\gam>1$ is her (constant) relative risk aversion, and $\tau_d$ is the random time of her death, which we assume is exponentially distributed with mean $1/\tilde \lam > 0$, and $\tau_d$ is independent of the Brownian motion.

In light of Proposition \ref{prop:Admiss}, the individual's optimal investment-consumption problem is, thus, formulated by the following stochastic control problem:
\begin{linenomath*}\begin{align}\label{eq:VF}
V(x)=V(x,\al) := \sup_{(\theta_t,c_t)\in\Ac(\al)} \Eb_{x}
\left(\int_0^{+\infty} \frac{c_t^{1-\gam}}{1-\gam} \,\ee^{-\del t}\, \dd t\right);\quad x\ge\xu,
\end{align}\end{linenomath*}
in which $\del = \deltat + \tilde \lam$, and $\Eb_{x}$ denotes conditional expectation given $X_0 = x$.

We end this section by proving a verification theorem for the stochastic control problem \eqref{eq:VF}. For its statement,  we define the operator $\Lc_{\theta, c}$ on twice-differentiable functions by
\begin{linenomath*}\begin{align}\label{eq:mL}
\Lc_{\theta, c} v(x) = -\del v(x) +\Big((\rho+r)x + (\mu-r)\theta\Big) v'(x)+\frac{1}{2}\sig^2\theta^2 v''(x) + \frac{c^{1-\gam}}{1-\gam}-c(1+\rho x)v'(x).
\end{align}\end{linenomath*}

\begin{theorem}\label{thm:verif}
Suppose $\vp\in\Cc^{2}\big([\xu,+\infty)\big)$ satisfies the following properties: for any $x \ge \xu$,
\begin{enumerate}
	\item[$(i)$] $\Lc_{\theta,c}\vp(x)\le 0$ for all $\theta\in\Rb$ and $c\ge \al$.
	
	\item[$(ii)$] $v'(x)>0$, $\vp(\xu) = \frac{\al^{1-\gam}}{\del(1-\gam)}$, and $\lim_{x\to\xu^+} v'(x)=+\infty$.
	
	\item[$(iii)$] $\lim_{T\to +\infty} \Eb_{x}\big(\ee^{-\del T} \vp(X_T)\big) = 0$ for any wealth-to-habit process $(X_t)_{t\ge0}$ that arising from an admissible policy $(\theta_t,c_t)_{t\ge0} \in \Ac(\al)$.
	
	\item[$(iv)$] $\Lc_{\thts(x),\cs(x)}\vp(x)=0$ for some functions $\thts(x)$ and $\cs(x)\ge\al$.
	\item[$(v)$] For $\thts$ and $\cs$ in condition $(iii)$, the following stochastic differential equation has a unique strong solution:
	\begin{linenomath*}\begin{align}
		\begin{cases}
			\dd X^*_t = \Big((\rho+r)X^*_t + (\mu-r)\thts(X^*_t) - (1+\rho X^*_t)\cs(X^*_t)\Big)\dd t + \sig\thts(X^*_t) \dd B_t;\quad t\ge0,\\
			X^*_0=x,
		\end{cases}
	\end{align}\end{linenomath*}
	and $\big(\thts(X^*_t), \cs(X^*_t)\big)_{t\ge0}\in\Ac$.
\end{enumerate}
Then, $v = V$ on $[\xu, +\infty)$, and $\big(\thts(X^*_t), \cs(X^*_t)\big)_{t\ge0}$ is an optimal policy.
\end{theorem}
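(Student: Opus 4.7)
The plan is to follow the standard verification strategy for infinite-horizon stochastic control: combine hypotheses (i)--(iii) to obtain $v \ge V$, combine (iv)--(v) with (iii) to obtain $v \le V$, and conclude equality together with optimality of $(\thts,\cs)$. Fix an arbitrary admissible $(\theta_t, c_t) \in \Ac(\al)$ with associated wealth-to-habit process $(X_t)$ given by \eqref{eq:X-SDE}, and apply It\^o's formula to $\ee^{-\del t}\vp(X_t)$. Rearranging via \eqref{eq:mL} gives
\begin{linenomath*}
\begin{align}
\ee^{-\del T}\vp(X_T) - \vp(x) = \int_0^T \ee^{-\del t}\left[\Lc_{\theta_t,c_t}\vp(X_t) - \frac{c_t^{1-\gam}}{1-\gam}\right]\dd t + \int_0^T \ee^{-\del t}\sig\,\theta_t\vp'(X_t)\,\dd B_t.
\end{align}
\end{linenomath*}

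By hypothesis (i), the bracketed drift is dominated by $-c_t^{1-\gam}/(1-\gam)$. I would then localize with $\tau_n := \inf\{t\ge 0 : \int_0^t \sig^2\theta_s^2 \vp'(X_s)^2\,\dd s \ge n\} \wedge n$ to kill the stochastic integral under expectation, obtaining
\begin{linenomath*}
\begin{align}
\vp(x) \ge \Eb_x\!\left[\int_0^{T\wedge\tau_n} \ee^{-\del t}\frac{c_t^{1-\gam}}{1-\gam}\dd t\right] + \Eb_x\!\left[\ee^{-\del(T\wedge\tau_n)}\vp(X_{T\wedge\tau_n})\right].
\end{align}
\end{linenomath*}
Since $\gam>1$, both $c_t^{1-\gam}/(1-\gam)$ and $\vp$ itself are non-positive (the latter because $\vp(\xu) = \al^{1-\gam}/(\del(1-\gam))<0$ and $\vp'>0$ would in fact make $\vp$ bounded above by $0$ along admissible trajectories, but only the sign is needed here). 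Monotone convergence in $n$ with $T$ fixed, followed by monotone convergence in $T$ combined with hypothesis (iii), sends the right-hand side to $\Eb_x[\int_0^\infty \ee^{-\del t} c_t^{1-\gam}/(1-\gam)\,\dd t]$. Taking the supremum over $\Ac(\al)$ yields $\vp \ge V$ on $[\xu,+\infty)$.

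For the matching lower bound, I would repeat the same It\^o expansion along the specific controls $\thts(X^*_t),\cs(X^*_t)$ from (iv)--(v). Hypothesis (iv) makes the bracketed drift vanish identically, so every inequality above becomes an equality; the strong-solution property and admissibility in (v), together with (iii) applied to $X^*$, then yield
\begin{linenomath*}
\begin{align}
\vp(x) = \Eb_x\!\left[\int_0^\infty \ee^{-\del t}\frac{\cs(X^*_t)^{1-\gam}}{1-\gam}\,\dd t\right] \le V(x).
\end{align}
\end{linenomath*}
Combining the two bounds gives $\vp = V$ on $[\xu,+\infty)$ and identifies $(\thts(X^*_t),\cs(X^*_t))_{t\ge0}$ as an optimal policy.

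The main obstacle will be rigorously executing the limit-exchange in the localization step, for the two reasons the authors have already flagged. First, $\vp'$ blows up at the free boundary $\xu$ (hypothesis (ii)), so arguing $\tau_n \uparrow \infty$ almost surely requires ruling out the pathology that admissible trajectories might linger near $\xu$; here one would exploit Lemma \ref{lem:NoRuin}, the lower bound of Lemma \ref{lem:Z-Lbound}, and the square-integrability condition \eqref{eq:integrable}. Second, the drift in \eqref{eq:X-SDE} has super-linear growth in $x$, so the usual polynomial-moment bounds for $X$ are unavailable; hypothesis (iii) is precisely what finesses the behavior at $T=\infty$ without such estimates. The sign structure $\gam>1$ ensures every relevant integrand and terminal value is non-positive, which is what permits the use of monotone convergence rather than uniform integrability throughout.
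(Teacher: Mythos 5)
Your overall architecture (It\^o + localization, hypothesis (i) to bound the drift, hypothesis (iii) to kill the terminal term, then repeat with equality via (iv)--(v)) is the same as the paper's, and the two steps are organized identically. But the passage from $T\wedge\tau_n$ to $T$ — the part you yourself flag as ``the main obstacle'' — is exactly where your argument breaks and where the paper does something you have not captured. First, monotone convergence does not apply to the term $\Eb_x\big[\ee^{-\del(T\wedge\tau_n)}\vp(X_{T\wedge\tau_n})\big]$: even under the (unjustified) assumption $\vp\le 0$, this expression is not monotone in $n$, because both the exponential factor and $X_{T\wedge\tau_n}$ move as $n$ grows. Second, your proposed workaround — proving $\tau_n\uparrow\infty$ almost surely via Lemmas 2.1--2.2 and the integrability condition — is not the route the paper takes, and it is not needed: the paper deliberately does \emph{not} prove $\tau_\infty:=\esup_n\tau_n = \infty$. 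Instead, it observes that since $\lim_{x\to\xu^+}v'(x)=+\infty$, $\tau_\infty<\infty$ forces $X_{\tau_\infty}=\xu$, and (by the argument behind Lemma 2.2) absorption at $\xu$ forces $c_t=\al$ for $t\ge\tau_\infty$. On that event one has the exact identity $\ee^{-\del\tau_\infty}\vp(\xu)=\int_{\tau_\infty}^{\infty}\frac{c_t^{1-\gam}}{1-\gam}\ee^{-\del t}\,\dd t$, which turns the surviving terminal term into precisely the missing tail of the consumption integral. The paper then splits into the events $\{\tau_\infty<T\}$ and $\{\tau_\infty\ge T\}$, uses this identity on the first, and uses hypothesis (iii) only on the second (as $T\to\infty$). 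Without this identity your inequality $\vp(x)\ge\Eb_x[\int_0^{\tau_\infty}\cdots]$ is the \emph{weaker} direction (the integrand is negative, so truncating the integral raises its value), and it does not yield $\vp\ge V$.

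One smaller issue: your parenthetical ``\,$\vp(\xu)<0$ and $\vp'>0$ would in fact make $\vp$ bounded above by $0$'' is a non sequitur — an increasing function that starts negative need not stay nonpositive, and nothing in hypotheses (i)--(v) says it does. The paper never needs $\vp\le 0$; it only uses $\vp'>0$ and $\vp(\xu)=\frac{\al^{1-\gam}}{\del(1-\gam)}$ to derive the two-sided bound $\frac{\al^{1-\gam}}{\del(1-\gam)}\le \ee^{-\del(T\wedge\tau_n)}\vp(X_{T\wedge\tau_n})+\int_0^{T\wedge\tau_n}\frac{c_t^{1-\gam}}{1-\gam}\ee^{-\del t}\dd t\le v(x)$ (the upper bound from the supermartingale property), which is what licenses dominated convergence as $n\to\infty$. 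Your Step 2 is fine once the Step 1 mechanism is correctly in place; the same absorption bookkeeping applies verbatim to the candidate optimal policy.
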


\begin{proof}
See Appendix \ref{app:verif}.
\end{proof}

%
%

\section{Optimal investment and consumption policy}\label{sec:optimal}

In this section, we consider the stochastic control problem \eqref{eq:VF} when \eqref{eq:Habit} is a habit-formation constraint, that is, when $0<\al\le 1$. In other words, we exclude the case $\al=0$. 

Theorem \ref{thm:verif} implies that the value function $V(\cdot \, ;\al)$ is a solution of the following  differential equation:
\begin{linenomath*}\begin{align}\label{eq:HJB}
- \del v(x) + (\rho+r)x v'
+ \sup_{\theta} \left[(\mu-r)\theta v' + \frac{1}{2}\sig^2\theta^2 v''\right]
+\sup \limits_{c\ge \al} \left[ \dfrac{c^{1 - \gam}}{1 - \gam} - (1+\rho x) c v' \right]=0;\quad x\ge\xu.
\end{align}\end{linenomath*}
For the rest of this section, we construct a classical solution of \eqref{eq:HJB}, and then use Theorem \ref{thm:verif} to show that the solution equals the value function $V(\cdot \, ;\al)$ in \eqref{eq:VF}.

To construct a candidate solution, we hypothesize that the optimal investment and consumption policy has the following form. There exists a critical value of wealth-to-habit ratio $\xa\ge\xu$, such that,
\begin{enumerate}
\item[$(a)$] If $\xu\le X_t\le \xa$, it is optimal to consume at the minimum rate, that is, $c_t^* = \al$.  Also, if $X_t=\xu$, it is optimal to invest fully in the riskless asset, that is, $\theta_t^* = 0$.

\item[$(b)$] If $X_t>\xa$, it is optimal to consume more than the minimum rate.
\end{enumerate}
The optimal expressions for $c$ and $\theta$ in \eqref{eq:HJB} are given by
\begin{linenomath*}\begin{align}\label{eq:cs-cand}
\cs(x) :=
\begin{cases}
	\displaystyle
	\al;&\quad (1+\rho x) v'(x) \ge \al^{-\gam},\vspace{1ex}\\
	\displaystyle
	\big((1+\rho x) v'(x) \big)^{-\frac{1}{\gam}};&\quad 0<(1+\rho x)v'(x)< \al^{-\gam},
\end{cases}
\end{align}\end{linenomath*}
and
\begin{linenomath*}\begin{align}\label{eq:thts-cand}
\thts(x) := -\frac{\mu-r}{\sig^2}\frac{\vp'(x)}{\vp''(x)},	
\end{align}\end{linenomath*}
respectively. To obtain these equations, we assume that $\vp_x > 0$ and $\vp_{xx}<0$, which we show in Proposition \ref{prop:HJB} below. Thus, for $(a)$ and $(b)$ in our hypothesis to be true, we must have
\begin{linenomath*}\begin{align}\label{eq:VE}
\begin{cases}
	\displaystyle
	\lim_{x\to\xu^+}\frac{\vp'(x)}{\vp''(x)} = 0,&\vspace{1ex}\\
	(1+\rho x)v'(x) \ge \al^{-\gam};&\quad \xu\le x\le \xa,\vspace{1ex}\\
	0<(1+\rho x)v'(x) < \al^{-\gam};&\quad x>\xa.
\end{cases}
\end{align}\end{linenomath*}
Under these additional conditions, \eqref{eq:HJB} becomes the following free-boundary problem (FBP):
\begin{linenomath*}\begin{align}\label{eq:FBP}
\begin{cases}
	\displaystyle
	\kap \frac{v'^2(x)}{v''(x)} - \al\left(\frac{x}{\xu}-1\right) v'(x) + \del v(x) = \frac{\al^{1-\gam}}{1-\gam};
	&\quad \xu\le x \le \xa,\vspace{1ex}\\
	\displaystyle
	\kap \frac{v'^2(x)}{v''(x)} - (r+\rho) x v'(x) + \del v(x) = \frac{\gam}{1-\gam}\big((1+\rho x)v'(x)\big)^{1-\frac{1}{\gam}};
	&\quad x > \xa,\vspace{1ex}\\
	\displaystyle
	\lim_{x\to\xu^+}\frac{\vp'(x)}{\vp''(x)} = 0,&\vspace{1ex}\\
	(1+\rho\xa)v'(\xa) = \al^{-\gam},
\end{cases}
\end{align}\end{linenomath*}
in which $\xa\ge \xu$ is unknown, and in which $\kap$ is defined by
\begin{linenomath*}\begin{align}\label{eq:kap}
\kap = \frac{(\mu-r)^2}{2\sig^2}.
\end{align}\end{linenomath*}
In anticipation that $v$ is increasing and concave, we apply the Legendre transform to $v$ to define its convex dual $u$ by
\begin{linenomath*}\begin{align}\label{eq:convexcon2}
u(y) := \sup_{x\ge\xu}\big\{v(x) - xy \big\};\quad y>0.
\end{align}\end{linenomath*}
Here, we assume $\lim_{x\to\xu^+} \vp'(x)=+\infty$, which we show in Proposition \ref{prop:HJB} below. By using the relationships
\begin{linenomath*}\begin{align}\label{eq:Legendre}
v\big(I(y)\big) = u(y) - y u'(y),\quad I(y) = -u'(y),\quad\text{and}\quad v''\big(I(y)\big)=-\frac{1}{u''(y)},
\end{align}\end{linenomath*}
in which $I(\cdot)$ is the inverse of $v'(\cdot)$ (that is, $v'\big(I(y)\big) = y$, for $y>0$), FBP \eqref{eq:FBP} transforms into the following FBP:
\begin{linenomath*}\begin{align}
&\displaystyle\label{eq:FBPDual1}
-\kap y^2 u''(y) + \left(r+\rho(1-\al)-\del\right)y u'(y) + \del u(y) = \frac{\al^{1-\gam}}{1-\gam} -\al y;
\qquad y\ge \ya,\vspace{1ex}\\
&\displaystyle\label{eq:FBPDual2}
-\kap y^2 u''(y) + \left(r+\rho -\del\right)y u'(y) + \del u(y) = \frac{\gam}{1-\gam}\big(y-\rho y u'(y)\big)^{1-\frac{1}{\gam}};
\qquad 0< y < \ya,\vspace{1ex}\\
&\displaystyle\label{eq:FBPDual3}
\lim_{y\to+\infty} u'(y) = -\xu,&\vspace{1ex}\\
&\displaystyle\label{eq:FBPDual4}
\lim_{y\to+\infty}y u''(y) = 0,&\vspace{1ex}
\intertext{and}
\label{eq:FBPDual5}
&\ya-\rho \ya u'(\ya) = \al^{-\gam},
\end{align}\end{linenomath*}
in which $\ya=v'(\xa)$ is unknown.

It is easier to analyze $u$'s second-order differential equation in \eqref{eq:FBPDual1} on $[\ya, +\infty)$ by transforming it into a system of first-order ODEs.  Specifically, by formally defining $\varphi$ and $H$ by
\begin{linenomath*}\begin{align}\label{eq:varphi}
\varphi(y) = y - \rho y u'(y),
\end{align}\end{linenomath*}
and
\begin{equation}\label{eq:H}
H(y) = \dfrac{1}{\varphi(y)} \left[ \del u(y) - \dfrac{\gam}{1 - \gam} \varphi(y)^{1 - \frac{1}{\gam}} - \dfrac{r + \rho - \del}{\rho} \big( \varphi(y) - y \big) \right],
\end{equation}
respectively, and by manipulating these expressions via the differential equation \eqref{eq:FBPDual1} and the free-boundary condition in \eqref{eq:FBPDual5}, we obtain the system in part $(i)$ of the following proposition. (As an aside, we find the value of $H$ at the free-boundary $\ya$ by first solving for $u$ on $(0, \ya)$ and by using continuity of $u$ to obtain $u(\ya)$ and, then, $H(\ya)$.)  Proposition \ref{prop:FBP-sol} provides the complete solution of FBP \eqref{eq:FBPDual1}--\eqref{eq:FBPDual5}.

\begin{proposition}\label{prop:FBP-sol}
Define the constant $\lam\in(-\del/\kap, 0)$ by
\begin{linenomath*}\begin{align}\label{eq:lambda}
	\lam := \frac{1}{2\kap}\left( \big(\kap + r + \rho(1-\al) - \del\big) - \sqrt{\big(\kap + r + \rho(1-\al) - \del\big)^2 + 4\del\kap}\right),
\end{align}\end{linenomath*}
and define constants $0<\eta_1<\eta_2$ by
\begin{linenomath*}\begin{align}\label{eq:eta1eta2}
	\eta_1 :=  \frac{\lam\al^{-\gam}}{(\lam-1)(1+\rho\xu)},
	\quad\text{and}\quad
	\eta_2 :=  \frac{\al^{-\gam}}{1+\rho\xu}.
\end{align}\end{linenomath*}
Then, we have:
\begin{enumerate}
	\item[$(i)$]   There exists a constant $\ya\in(\eta_1, \eta_2)$, an increasing function $\varphi:\left(0,\ya\right]\to (0,\al^{-\gam}]$, and a function $H:\left(0,\ya\right]\to (0,\kap/\rho]$  satisfying the system:
	\begin{linenomath*}\begin{align}\label{eq:phi-H-FBP-sys}
		\begin{cases}\displaystyle
			\varphi'(y) = \frac{\rho}{\kap y}\left(\frac{\kap}{\rho}-H(y)\right)\varphi(y),\vspace{1ex}\\
			\displaystyle
			H'(y) = \frac{\rho}{\kap y}\left(\frac{\kap}{\rho}-H(y)\right)
			\left(\varphi(y)^{-\frac{1}{\gam}} - H(y) - \frac{r+\rho-\del}{\rho}\right)
			+\frac{r+\rho}{\rho\,\varphi(y)}
			- \frac{\del}{\rho y},
			\vspace{1ex}\\
			\varphi(\ya) = \al^{-\gam},\vspace{1ex}\\
			H(\ya) = \displaystyle\frac{\kap}{\rho}\left[1 -\lam \left(1-\frac{\ya}{\eta_1}\right)\right],
		\end{cases}
	\end{align}\end{linenomath*}
	for $0<y\le\ya$. Furthermore, $\lim_{y\to0^+}H(y)\le\kap/\rho$.
	
	\item[$(ii)$] A solution of FBP \eqref{eq:FBPDual1}-\eqref{eq:FBPDual5} is given by $\ya$ as in $(i)$ and by $u:\Rb^+ \to \Rb$ given by
	\begin{linenomath*}\begin{align}\label{eq:u_ge_ys}
		u(y) =
		\begin{cases}
			\displaystyle
			\frac{\ya(1+\rho \xu)-\al^{-\gam}}{\rho\lam} \left(\frac{y}{\ya}\right)^{\lam} -\xu y + \frac{\al^{1-\gam}}{\del(1-\gam)};
			&\quad y\ge\ya,\vspace{1ex}\\
			\displaystyle
			\frac{1}{\del}\left[
			\varphi(y)H(y)
			+ \frac{\gam}{1-\gam}\varphi(y)^{1-\frac{1}{\gam}}
			+ \frac{r+\rho-\del}{\rho}\Big(\varphi(y)-y\Big)
			\right];
			&\quad 0<y<\ya,
		\end{cases}
	\end{align}\end{linenomath*}
	in which $\varphi$ and $H$ are as in $(i)$. Furthermore, $u \in \Cc^2(\R^+)$ is strictly decreasing and convex, and \hfill \break $\lim_{y\to0^+} u'(y)=-\infty$.
\end{enumerate}
\end{proposition}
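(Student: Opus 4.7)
The plan is to build the solution of \eqref{eq:FBPDual1}--\eqref{eq:FBPDual5} in two regions, $y\ge\ya$ and $0<y<\ya$, glued along a free boundary $\ya$ that is pinned down by a shooting argument on the first-order system \eqref{eq:phi-H-FBP-sys}. First I would solve \eqref{eq:FBPDual1} on $[\ya,+\infty)$ directly: it is an inhomogeneous Cauchy--Euler equation whose homogeneous characteristic polynomial $\kap\mu^2-(\kap+r+\rho(1-\al)-\del)\mu-\del$ has one negative root $\lam$ and one root exceeding $1$ (the polynomial is negative at both $\mu=0$ and $\mu=1$). A particular solution matching the forcing is $-\xu y+\al^{1-\gam}/(\del(1-\gam))$; the asymptotic conditions \eqref{eq:FBPDual3}--\eqref{eq:FBPDual4} discard the growing mode, and \eqref{eq:FBPDual5} fixes the remaining coefficient in terms of $\ya$, yielding the first branch of \eqref{eq:u_ge_ys}. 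Substituting this explicit $u(\ya)$ together with $\varphi(\ya)=\al^{-\gam}$ into \eqref{eq:H} and simplifying via the characteristic identity for $\lam$ produces the terminal value $H(\ya)=(\kap/\rho)[1-\lam(1-\ya/\eta_1)]$.

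Second, I would reduce \eqref{eq:FBPDual2} on $(0,\ya)$ to the system \eqref{eq:phi-H-FBP-sys}. Differentiating \eqref{eq:varphi} gives the identity $\rho y^2 u''=\varphi-y\varphi'$; substituting into \eqref{eq:FBPDual2} and using \eqref{eq:H} to trade $\del u$ for $\varphi$, $H$, and lower-order terms yields the $\varphi$-equation after elementary algebra, and differentiating \eqref{eq:H} and performing the analogous substitution yields the $H$-equation. Conversely, any $C^1$ solution $(\varphi,H)$ of \eqref{eq:phi-H-FBP-sys} defines, via the second branch of \eqref{eq:u_ge_ys}, a $C^2$ solution of \eqref{eq:FBPDual2}, so the reduction is an equivalence.

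The heart of the proof is to select $\ya\in(\eta_1,\eta_2)$ so that the backward Cauchy problem for \eqref{eq:phi-H-FBP-sys} from $y=\ya$ admits a $C^1$ solution on all of $(0,\ya]$ taking values in the open rectangle $(0,\al^{-\gam})\times(0,\kap/\rho)$. I would proceed by shooting in $\ya\in(\eta_1,\eta_2)$: the terminal data vary continuously, with $H(\ya)\to\kap/\rho$ as $\ya\to\eta_1^+$ and $H(\ya)\to 0$ as $\ya\to\eta_2^-$, while $\varphi(\ya)\equiv\al^{-\gam}$. Picard--Lindel\"of gives a unique maximal backward solution on some interval $(\tau(\ya),\ya]$ since the RHS is smooth wherever $y>0$ and $\varphi>0$. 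The crucial invariance is that whenever $H<\kap/\rho$ the first ODE forces $\varphi'>0$, so backwards in $y$ the function $\varphi$ is strictly decreasing and stays in $(0,\al^{-\gam})$; the trajectory can therefore exit the rectangle only through $\{H=\kap/\rho\}$, $\{H=0\}$, or $\{\varphi\to 0^+\}$. Combining a sign analysis of the vector field on each of these faces with continuous dependence on $\ya$ and the opposite limiting behavior of $H(\ya)$ at the endpoints selects a value of $\ya$ for which the trajectory reaches $y=0^+$ inside the rectangle and satisfies $\lim_{y\to 0^+}H(y)\le\kap/\rho$. I expect this invariance/shooting step to be the main obstacle, since the $\varphi$-equation degenerates along $\{H=\kap/\rho\}$ and the $H$-equation must be controlled carefully near that face.

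For part $(ii)$, the $C^2$ regularity of the piecewise $u$ in \eqref{eq:u_ge_ys} at $\ya$ is automatic from the matching of $u$, $u'$, $u''$ induced by $\varphi(\ya)=\al^{-\gam}$ together with the explicit value of $H(\ya)$. Strict convexity follows by direct computation: on $[\ya,+\infty)$ one has $u''(y)=A\lam(\lam-1)y^{\lam-2}$ with $A>0$ (because $\ya<\eta_2$ makes the numerator of $A$ negative while $\lam<0$) and $\lam(\lam-1)>0$; on $(0,\ya)$, differentiating \eqref{eq:varphi} and substituting the first equation of \eqref{eq:phi-H-FBP-sys} yields the closed-form expression $u''(y)=H(y)\varphi(y)/(\kap y^2)>0$. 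Strict decrease of $u$ follows from convexity combined with $u'(\ya)=(\ya-\al^{-\gam})/(\rho\ya)<0$ and $u'(+\infty)=-\xu<0$; the boundary limit $\lim_{y\to 0^+}u'(y)=-\infty$ follows from $u'(y)=(y-\varphi(y))/(\rho y)$ together with $\varphi$ being bounded below by a strictly positive constant near $y=0^+$, which is a consequence of the monotonic backward extension established in the previous paragraph.
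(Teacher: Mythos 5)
Your overall plan matches the paper's: solve the Cauchy--Euler equation on $[\ya,+\infty)$, reduce \eqref{eq:FBPDual2} to the first-order system \eqref{eq:phi-H-FBP-sys}, run a shooting argument in $\ya$, then read off monotonicity and convexity of $u$ from the closed-form expressions $u'=(y-\varphi)/(\rho y)$ and $u''=H\varphi/(\kappa y^2)$. However, there are two genuine gaps.

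First, the shooting step — which you yourself flag as ``the main obstacle'' — is described but not carried out, and carrying it out is where essentially all the work of part $(i)$ lies. Observing that $H(\ya)$ sweeps from $\kap/\rho$ to $0$ as $\ya$ runs from $\eta_1$ to $\eta_2$ does not by itself produce a $\ya$ for which the backward trajectory survives to $y=0^+$ inside the rectangle. You need (a) a comparison/invariance argument showing the trajectory cannot exit through $\{\varphi=0\}$ at all, and cannot exit through $\{H=\kap/\rho\}$ once $y\ge\eta_1$ (this is the content of the paper's Lemma \ref{lem:phiH-eta}$(i)$, which in turn requires the quasimonotone comparison Lemma \ref{lem:sys-comparison} applied with the cleverly chosen barrier $w_2(y)=\tfrac{\kap}{\rho}[1-\lam(1-y/\eta_1)]$); and (b) a monotonicity-in-$\eta$ statement (Lemma \ref{lem:phiH-eta}$(iv)$) that lets you argue, via the two exit sets being open and exhaustive, that the threshold value of $\eta$ gives a trajectory exiting through neither face. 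Without these, ``sign analysis of the vector field on each face plus continuous dependence'' is a plan, not a proof, and in particular it is not obvious a priori that the exit side is monotone in $\ya$ or that the ``good'' set of $\ya$ is an interval.

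Second, your proof of $\lim_{y\to0^+}u'(y)=-\infty$ relies on ``$\varphi$ being bounded below by a strictly positive constant near $y=0^+$,'' which you attribute to ``the monotonic backward extension established in the previous paragraph.'' Monotonicity of $\varphi$ only yields that $\varphi_0:=\lim_{y\to0^+}\varphi(y)$ exists and is $\ge 0$; nothing in the construction rules out $\varphi_0=0$, and the paper's part $(i)$ makes no such claim. The paper instead argues by contradiction: if $u'(0^+)=M$ were finite, then $\varphi(y)/y\to 1-\rho M>1$, forcing $\varphi_0=0$, and L'H\^{o}pital together with the $\varphi$-ODE and $\lim_{y\to0^+}H(y)\in(0,\kap/\rho]$ gives $\lim\varphi(y)/y=\lim\varphi(y)/y\cdot(1-\rho H(0^+)/\kap)<\lim\varphi(y)/y$, a contradiction. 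You should either adopt this argument or actually prove $\varphi_0>0$; the latter does not follow from what you have established.

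The explicit Euler-equation computation on $[\ya,+\infty)$, the reduction to the $(\varphi,H)$ system, the $C^2$ matching at $\ya$, the sign of the constant $A$, the convexity formulas, and the decreasingness of $u$ are all correct and match the paper.
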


\begin{proof}
In various parts of this proof, we will use the fact that $\lam$ in \eqref{eq:lambda} solves the quadratic equation
\begin{linenomath*}\begin{align}\label{eq:lam_quadratic_eq}
	f(\lam):=-\kap\lam^2+\big(\kap+r+\rho(1-\al)-\del\big)\lam + \del=0.
\end{align}\end{linenomath*}	
Let $\lam'$ denote the other zero of the quadratic function $f$, which is given by
\begin{linenomath*}\begin{align}\label{eq:lambdap}
	\lam' := \frac{1}{2\kap}\left( \big(\kap + r + \rho(1-\al) - \del \big) + \sqrt{\big(\kap + r + \rho(1-\al) - \del\big)^2 + 4\del\kap}\right)>1.
\end{align}\end{linenomath*}
That $\lam\in(-\del/\kap, 0)$ follows from $f(-\del/\kap)=-\big(r+\rho(1-\al)\big)\del/\kap<0$ and $f(0)=\del>0$. That $\lam'>1$ follows from $f(1)=r+\rho(1-\al)>0$ and $\lim_{\xi\to+\infty} f(\xi) = -\infty$. Below, we prove $(i)$ and then $(ii)$.\vspace{1em}

\noindent \underline{Proof of $(i)$}: When reading this part of the proof, it is helpful to refer to Figure \ref{fig:phi_H} in Section \ref{sec:numerics} for visual reference. Define the set
\begin{linenomath*}\begin{align}\label{eq:DC}
	\Dc:=\left\{(y,\varphi,H): y,\varphi>0, \, 0<H<\frac{\kap}{\rho}\right\},
\end{align}\end{linenomath*}
and functions
\begin{linenomath*}\begin{align}\label{eq:g1}
	g_1(y,\varphi,H) &:= \frac{\rho}{\kap y}\left(\frac{\kap}{\rho}-H\right)\varphi,
	\intertext{and}
	\label{eq:g2}
	g_2(y,\varphi,H) &:= \frac{\rho}{\kap y}
	\left(\frac{\kap}{\rho}-H\right)
	\left(\varphi^{-\frac{1}{\gam}} - H - \frac{r+\rho-\del}{\rho}\right)
	+\frac{r+\rho}{\rho\,\varphi}-\frac{\del}{\rho y},
\end{align}\end{linenomath*}
for $(y,\varphi,H)\in\Dc$. For a constant $\eta\in(\eta_1,\eta_2)$, consider the boundary-value problem
\begin{linenomath*}\begin{align}\label{eq:phiH-eta}
	\begin{cases}\displaystyle
		\varphi'(y)=g_1(y, \varphi(y), H(y)\big)\vspace{1ex},\\
		H'(y)= g_2(y, \varphi(y), H(y)\big)\vspace{1ex},\\
		\varphi(\eta) = \al^{-\gam},\vspace{1ex}\\
		H(\eta) = \displaystyle\frac{\kap}{\rho}\left[1 -\lam \left(1-\frac{\eta}{\eta_1}\right)\right],
	\end{cases}
\end{align}\end{linenomath*}
for $\big(y, \varphi(y), H(y)\big) \in \Dc$. Because $\al > 0$ and $\eta > \eta_1$, the boundary conditions in \eqref{eq:phiH-eta} are inside $\Dc$. Furthermore, $g_1$ and $g_2$ are locally Lipschitz continuous with respect to $\varphi$ and $H$ in $\Dc$, since they are only unbounded (or have unbounded partial derivatives) if $y=0$ or $\varphi=0$. It, then, follows that \eqref{eq:phiH-eta} has a unique solution that extends to the boundary of $\Dc$. Denote this solution by $(\varphi_\eta(\cdot), H_\eta(\cdot)\big):(\eps(\eta),\eta]\to\Rb$ for some constant $\eps(\eta)\in[0, \eta)$ such that $(\eps(\eta),\eta]$ is the maximal domain over which the solution exists (within $\Dc$). We prove additional properties of $(\varphi_\eta(\cdot), H_\eta(\cdot)\big)$ for $\eta\in(\eta_1,\eta_2)$ in Lemma \ref{lem:phiH-eta} in Appendix \ref{app:aux_constrained} and use those properties in the rest of this proof.

Note that, because the solution $(\varphi_\eta(\cdot), H_\eta(\cdot)\big)$ continuously depends on $\eta$ because of the aforementioned local Lipschitz property of $g_1$ and $g_2$, the mapping $\eta\mapsto \eps(\eta)$ is continuous for $\eta\in(\eta_1,\eta_2)$. Our goal is to show that there exists a constant $\ya\in(\eta_1,\eta_2)$ such that $\eps(\ya)=0$; that is, the solution $(\varphi_{\ya}(\cdot), H_{\ya}(\cdot)\big)$ is defined over the interval $(0,\ya]$.

To show the existence of such $\ya$, we first show that for every $y'\in(0,\eta_1)$, there exists a constant $\eta\in(\eta_1,\eta_2)$ such that the solution $(\varphi_\eta(\cdot), H_\eta(\cdot)\big)$ exists through a point $(y', \varphi', \kap/\rho)\in\overline{\Dc}_1$, in which we have defined
\begin{linenomath*}\begin{align}\label{eq:Dc1}
	\overline{\Dc}_1 := \big\{(y,\varphi,\kap/\rho):y\in(0,\eta_1), \varphi\in(0,\al^{-\gam}) \big\}.
\end{align}\end{linenomath*}
To prove this statement, let $\Bc$ be the set of all $y'\in(0,\eta_1)$ such that there exists a solution $(\varphi_\eta(\cdot), H_\eta(\cdot)\big)$, $\eta\in(\eta_1,\eta_2)$, which exits through a point $(y', \varphi', \kap/\rho)\in\overline{\Dc}_1$. We want to show that $\Bc=(0,\eta_1)$. By Lemma \ref{lem:phiH-eta}.$(ii)$, $\Bc$ is nonempty. 
From Lemma \ref{lem:phiH-eta}.$(iv)$ and the continuity of $(\varphi_\eta(\cdot), H_\eta(\cdot)\big)$ with respect to $\eta$, it follows that if $y_0\in \Bc$, then $y\in \Bc$ for all $y\in(y_0,\eta_1)$. Therefore, we have one of the following scenarios: (a) $\Bc=(\yt,\eta_1)$ for some $\yt\in(0,\eta_1)$, (b) $\Bc=[\yt,\eta_1)$ for some $\yt\in(0,\eta_1)$, or (c) $\Bc=(0,\eta_1)$.

In scenario (a), there exists a monotone increasing sequence $\{\xi_n\}_{n=1}^\infty$ in $(\eta_1,\eta_2)$, such that solutions  $\big(\varphi_{\xi_n}(\cdot), H_{\xi_n}(\cdot)\big)$ all exit from $\overline{\Dc}_1$ and $\lim_{n\to+\infty}\eps(\xi_n)=\yt$.  By Lemma \ref{lem:phiH-eta}.$(iii)$-$(iv)$, we must have $\xi_n<\eta_2-\epsilon$ for some $\epsilon>0$ and for all $n$. Thus, $\lim_{n\to+\infty}\xi_n = \xi_{\infty}$ for some constant $\xi_\infty\in(\eta_1,\eta_2-\eps]$.  Furthermore, by continuity of $(\varphi_\eta(\cdot), H_\eta(\cdot)\big)$ with respect to $\eta$, we must have that $(\varphi_{\xi_\infty}(\cdot), H_{\xi_\infty}(\cdot)\big)$ exits $\Dc$ through some point $(\yt, \pht, \kap/\rho)\in\overline{\Dc}_1$. This implies $\yt\in\Bc$, which contradicts the assumption that $\yt\notin\Bc$. Thus, scenario (a) is impossible.

In scenario (b), because $\yt \in \Bc$, it follows from Lemma \ref{lem:phiH-eta}.$(iii)$ that there exists a constant $\etat\in(\eta_1,\eta_2)$ such that $(\varphi_{\etat}(\cdot), H_{\etat}(\cdot)\big)$ exits $\Dc$ through a point $(\yt, \pht, \kap/\rho)\in\overline{\Dc}_1$. Since $\etat<\eta_2$ and $\yt>0$, from continuity of $(\varphi_\eta(\cdot), H_\eta(\cdot)\big)$ with respect to $\eta$, it follows that for some $y'\in(0,\yt)$, there exists an $\eta'\in(\etat,\eta_2)$ such that $(\varphi_{\eta'}(\cdot), H_{\eta'}(\cdot)\big)$ exits $\Dc$ through a point $(y', \varphi', \kap/\rho)\in\overline{\Dc}_1$. In other words, $y'\in\Bc$, which contradicts $\yt=\min \Bc$. Thus, scenario (b) is also impossible. We conclude that the only possible scenario is (c), in other words, $\Bc=(0,\eta_1)$.

Finally, define $\ya=\inf\big\{\eta\in(\eta_1,\eta_2): \eps(\eta)\in \Bc\big\}$. From Lemma \ref{lem:phiH-eta}.$(ii)$-$(iii)$, we must have $\ya\in(\eta_1,\eta_2)$. From continuity of $(\varphi_\eta(\cdot), H_\eta(\cdot)\big)$ with respect to $\eta$, we deduce $\eps(\ya)=0$, and
\begin{linenomath*}\begin{align}
	\lim_{y\to0^+}H_{\ya}(y)\le \frac{\kap}{\rho}.
\end{align}\end{linenomath*}
Thus, the solution $(\varphi_{\ya}(\cdot), H_{\ya}(\cdot)\big)$ satisfies \eqref{eq:phi-H-FBP-sys} for $y\in(0,\ya)$. Finally, that $\varphi_{\ya}(\cdot)$ is increasing follows from
\begin{linenomath*}\begin{align}
	\varphi'(y) = \frac{\rho}{\kap y}\left(\frac{\kap}{\rho}-H(y)\right)\varphi(y)>0,
\end{align}\end{linenomath*}
for all $y\in(0,\ya)$, since $\big(y, \varphi_{\ya}(y), H_{\ya}(y)\big)\in\Dc$.
\vspace{1em}

\noindent \underline{Proof of $(ii)$}:  The solution of the Euler equation \eqref{eq:FBPDual1} is
\begin{linenomath*}\begin{align}\label{eq:Euler}
	u(y) = C y^\lam + C' y^{\lam'} -\xu y + \frac{\al^{1-\gam}}{\del(1-\gam)};\qquad y\ge \ya,
\end{align}\end{linenomath*}
in which $C$ and $C'$ are constants to be determined, and $\lam\in(-\del/\kap, 0)$ and $\lam'\in(1,+\infty)$ are given by \eqref{eq:lambda} and \eqref{eq:lambdap}, respectively. By \eqref{eq:Euler}, conditions \eqref{eq:FBPDual3} and \eqref{eq:FBPDual4} become
\begin{linenomath*}\begin{align}
	\begin{cases}\displaystyle
		\lim_{y\to+\infty}\left(\lam C y^{\lam-1}+ \lam' C' y^{\lam'-1} \right) = 0,\\
		\displaystyle
		\lim_{y\to+\infty}\left(\lam(1-\lam) C y^{\lam-1}+ \lam'(1-\lam') C' y^{\lam'-1}\right)=0.
	\end{cases}
\end{align}\end{linenomath*}
Since $\lam<0$ and $\lam'>1$, the system above can only hold if $C'=0$. So, we must have,
\begin{linenomath*}\begin{align}
	u(y) = C y^{\lam} -\xu y + \frac{\al^{1-\gam}}{\del(1-\gam)};\qquad y\ge \ya.
\end{align}\end{linenomath*}
From \eqref{eq:FBPDual5}, we obtain $C = \frac{\ya(1+\rho \xu)-\al^{-\gam}}{\rho\lam(\ya)^{\lam}}$, which yields
\begin{linenomath*}\begin{align}\label{eq:u_ge_ys2}
	u(y) = \frac{\ya(1+\rho \xu)-\al^{-\gam}}{\rho\lam} \left(\frac{y}{\ya}\right)^{\lam} -\xu y + \frac{\al^{1-\gam}}{\del(1-\gam)};\qquad y\ge \ya.
\end{align}\end{linenomath*}
Thus, FBP \eqref{eq:FBPDual1}--\eqref{eq:FBPDual5} reduces to the following FBP:
\begin{linenomath*}\begin{align}\label{eq:u-FBP-Risky}
	\begin{cases}
		-\kap y^2 u''(y) + \left(r+\rho -\del\right)y u'(y) + \del u(y) = \frac{\gam}{1-\gam}\big(y-\rho y u'(y)\big)^{1-\frac{1}{\gam}};
		\qquad 0< y < \ya,\\
		\ya-\rho \ya u'(\ya) = \al^{-\gam},\\
		u(\ya)  = 
		\left(\frac{1+\rho\xu}{\rho\lam}-\xu\right)\ya + \left(\frac{\al}{\del(1-\gam)} - \frac{1}{\rho\lam}\right)\al^{-\gam}.
	\end{cases}
\end{align}\end{linenomath*}

Now, let $\varphi$, $H$, and $\ya$ be as determined in part $(i)$ of this proposition; then, we claim that $u$ defined by
\begin{linenomath*}\begin{align}\label{eq:u_given_H}
	u(y) = \frac{1}{\del}\left[
	\varphi(y)H(y)
	+ \frac{\gam}{1-\gam}\varphi(y)^{1-\frac{1}{\gam}}
	+ \frac{r+\rho-\del}{\rho}\big(\varphi(y)-y\big)
	\right];\quad y\in(0, \ya],
\end{align}\end{linenomath*}
satisfies FBP \eqref{eq:u-FBP-Risky}.  Indeed, because $\varphi(\ya) = \al^{-\gam}$ and $H(\ya) = \frac{\kap}{\rho}\left[1 -\lam \left(1-\frac{\ya}{\eta_1}\right)\right]$, one can show that $u$ in \eqref{eq:u_given_H} satisfies the second free-boundary condition in \eqref{eq:u-FBP-Risky}.  Next, if we differentiate $u$ twice, substitute for $\varphi'$ and $H'$ from \eqref{eq:phi-H-FBP-sys} each time, then we obtain
\begin{equation}\label{eq:u_prime}
	u'(y) = \frac{1}{\rho}-\frac{\varphi(y)}{\rho y},
\end{equation}
and
\begin{equation}\label{eq:u_prime2}
	u''(y) = \dfrac{\varphi(y)H(y)}{\kap y^2},
\end{equation}
for $0<y<\ya$. Note that \eqref{eq:u_prime} and $\varphi(\ya) = \al^{-\gam}$ give us the first free-boundary condition in \eqref{eq:u-FBP-Risky}.  If we substitute for $u'$ and $u''$ from \eqref{eq:u_prime} and \eqref{eq:u_prime2}, respectively, in the non-linear differential equation in \eqref{eq:u-FBP-Risky}, then we obtain
\begin{linenomath*}\begin{align}\label{eq:u-phi-FBP-sys}
	&-\kap y^2 u''(y) + \left(r+\rho -\del\right)y u'(y) + \del u(y) - \frac{\gam}{1-\gam}\big(y-\rho y u'(y)\big)^{1-\frac{1}{\gam}} \notag \\
	&= - \varphi(y)H(y) + \dfrac{r + \rho - \del}{\rho}\big(y - \varphi(y) \big) + \del u(y) - \frac{\gam}{1-\gam} \varphi(y)^{1 - \frac{1}{\gam}} = 0,
\end{align}\end{linenomath*}
in which the last equality follows from the definition of $u$ in \eqref{eq:u_given_H}.  We have, thereby, shown that $\ya$ from part $(i)$ and $u$ given by \eqref{eq:u_ge_ys} solve FBP \eqref{eq:FBPDual1}-\eqref{eq:FBPDual5}.

Next, we show that $u$ given by \eqref{eq:u_ge_ys} is decreasing and convex; note that $u \in \Cc^2(\R^+)$ is continuously twice differentiable by construction.  For $y \ge \ya$, these properties of $u$ directly follow by differentiating \eqref{eq:u_ge_ys2} as follows:
\begin{linenomath*}\begin{align}\label{eq:up_large_y}
	u'(y) = \frac{\ya(1+\rho \xu)-\al^{-\gam}}{\rho y^{*\lam}}y^{\lam-1} - \xu<0,
\end{align}\end{linenomath*}
and
\begin{linenomath*}\begin{align}\label{eq:upp_large_y}
	u''(y) = \frac{(\lam-1)\big(\ya(1+\rho \xu)-\al^{-\gam}\big)}{\rho y^{*\lam}}y^{\lam-2}>0,
\end{align}\end{linenomath*}
for $y\ge \ya$, in which, to get the inequalities, we used $\lam<0$ and $\ya<\eta_2=\frac{\al^{-\gam}}{1+\rho\xu}$, which we proved earlier.  That $u$ is convex on $(0, \ya)$ follows from \eqref{eq:u_prime2}, $\varphi > 0$, and $H > 0$; we proved the latter two inequalities in part $(i)$.  Also, \eqref{eq:up_large_y} implies $u'(\ya) < 0$, and $u$ convex on $(0, \ya)$ implies $u'(y) < 0$ for all $y \in (0, \ya)$.

It only remains to show that $\lim_{y\to0^+} u'(y)=-\infty$. Suppose, on the contrary, that $\lim_{y\to0^+} u'(y)\ne -\infty$. Because $u'$ is increasing and $\lim_{y\to+\infty}u'(y)=-\xu$, we must have $\lim_{y\to0^+} u'(y) = M$ for some constant $M<-\xu<0$. From \eqref{eq:u_prime}, we have $\varphi(y)/y = 1-\rho u'(y)$ for $0<y<\ya$. Therefore,
\begin{linenomath*}\begin{align}\label{eq:contradiction}
	\lim_{y\to 0^+}\frac{\varphi(y)}{y} = 1-\rho M >1.
\end{align}\end{linenomath*}
The above limit implies $\lim_{y\to0^+}\varphi(y)=0$. By L'H\^{o}pital's rule, \eqref{eq:phi-H-FBP-sys}, and $\lim_{y\to0^+}H(y)\in (0, \kap/\rho]$ from part $(i)$, we obtain a contradiction
\begin{linenomath*}\begin{align}
	\lim_{y\to 0^+}\frac{\varphi(y)}{y} = \lim_{y\to 0^+}\varphi'(y)
	=\lim_{y\to 0^+} \frac{\varphi(y)}{y}\left(1-\frac{\rho H(y)}{\kap}\right)
	< \lim_{y\to 0^+}\frac{\varphi(y)}{y},
\end{align}\end{linenomath*}
Thus, we must have $\lim_{y\to0^+} u'(y)=-\infty$.
\end{proof}

\begin{remark}
It is possible to find a differential equation for $\varphi$ of Proposition \ref{prop:FBP-sol}.$(i)$ that does not involve $H$. Indeed, substituting $u'=\frac{1}{\rho}-\frac{\varphi}{\rho y}$ and $u''=\frac{\varphi}{\rho y^2}-\frac{\varphi'}{\rho y}$ into \eqref{eq:FBPDual2} yields
\begin{linenomath*}\begin{align}\label{eq:u-phi-sys1}
	\del u(y) = -\frac{\kap y}{\rho} \varphi'(y)+\frac{\kap+r+\rho-\del}{\rho}\,\varphi(y)+\frac{\gam}{1-\gam}\varphi(y)^{1-\frac{1}{\gam}}-\frac{r+\rho-\del}{\rho}\,y;\quad 0<y\le \ya.
\end{align}\end{linenomath*}
By differentiating this equation and substituting $u'=\frac{1}{\rho}-\frac{\varphi}{\rho y}$, we obtain the following second-order differential equation for $\varphi$:
\begin{linenomath*}\begin{align}\label{eq:phi-TVP}
	\frac{\kap}{\rho}y^2\varphi''(y) + \left(\varphi(y)^{-\frac{1}{\gam}}-\frac{r+\rho-\del}{\rho}\right)y\varphi'(y) - \frac{\del}{\rho}\varphi(y) + \frac{r+\rho}{\rho}\,y=0;
	\quad 0<y\le\ya.
\end{align}\end{linenomath*}
The above equation provides a link between Proposition \ref{prop:FBP-sol} and Propositions 3.1 and 3.2 in \cite{AngoshtariBayraktarYoung2020}. Indeed, by setting $\kap=0$, the above equation reduces to the differential equation in (3.17) and (3.29) of \cite{AngoshtariBayraktarYoung2020}. Thus, as $\kap\to0^+$, $\varphi$ of Proposition \ref{prop:FBP-sol} becomes $\psi$ of Propositions 3.1 and 3.2 in \cite{AngoshtariBayraktarYoung2020}. This relationship is expected because, by letting $\kap\to 0^+$, the risky asset becomes redundant and the optimal policy only invests in the riskless asset, which is the scenario analyzed in \cite{AngoshtariBayraktarYoung2020}. \qed \vspace{1em}
\end{remark}

Proposition \ref{prop:FBP-sol} provides a strictly decreasing and convex function $u$ and corresponding free boundary $\ya$ that solve \eqref{eq:FBPDual1}--\eqref{eq:FBPDual5}. By reversing the Legendre transform \eqref{eq:convexcon2}, we obtain an increasing and concave solution of FBP \eqref{eq:FBP}. We prove this result in the following proposition.

\begin{proposition}\label{prop:HJB}
Let $\lam$, $\ya$, $\varphi(y)$, $H(y)$, and $u(y)$ be as in Proposition \ref{prop:FBP-sol}, and let $J(\xi):(-\infty, -\xu)\to(0,+\infty)$ be the inverse of $u'(y)$, that is, $u'\big(J(\xi)\big)=\xi$ for $\xi<-\xu$. Define
\begin{linenomath*}\begin{align}\label{eq:xa}
	\xa &:= -u'(\ya) = \frac{\al^{-\gam}}{\rho\ya}-\frac{1}{\rho},\\
	\vpt(x) &:= u\big(J(-x)\big) + x J(-x);\quad x>\xu,
\end{align}\end{linenomath*}
\begin{linenomath*}\begin{align}\label{eq:CS-sol}
	\cs(x) :=
	\begin{cases}
		\al; &\quad \xu\le x\le \xa,\vspace{1ex}\\
		\displaystyle
		\Big(\varphi\big(J(-x)\big)\Big)^{-\frac{1}{\gam}};&\quad x>\xa,
	\end{cases}
\end{align}\end{linenomath*}
and
\begin{linenomath*}\begin{align}\label{eq:thts}
	\thts(x) :=
	\begin{cases}
		\dfrac{(\mu-r)(1-\lam)}{\sig^2} \, (x-\xu);
		&\quad \xu\le x\le \xa,\vspace{1ex}\\
		\displaystyle
		\frac{\mu-r}{\kap\sig^2} \, H\big(J(-x)\big)(1+\rho x);
		&\quad x>\xa.
	\end{cases}
\end{align}\end{linenomath*}		
Then, $\xa$, $\vpt(x)$, $\thts(x)$, and $\cs(x)$ satisfy \eqref{eq:cs-cand}, \eqref{eq:thts-cand},  \eqref{eq:VE}, and \eqref{eq:FBP}. Furthermore, $\vpt \in \Cc^2\big([\xu, +\infty) \big)$ is strictly increasing and concave, $\xa>\xu$, and we can write $v$ as follows:
\begin{equation}\label{eq:v_explicit}
	v(x) = 
	\begin{cases}
		\left( \dfrac{\rho \ya}{\al^{-\gam} - \ya(1 + \rho \xu)} \right)^{\frac{1}{\gam - 1}} \left\{ - \dfrac{\ya}{\lam} (x - \xu)^{\frac{\lam}{\lam - 1}} + (x - \xu) \right\} + \dfrac{\al^{1 - \gam}}{\del(1 - \gam)}; &\quad \xu \le x \le \xa, \vspace{0.8em} \\
		\dfrac{1}{\del}\left[\varphi\big(J(-x)\big)H\big(J(-x)\big) + \dfrac{\gam}{1-\gam}\varphi\big(J(-x)\big)^{1-\frac{1}{\gam}} + \dfrac{r+\rho}{\rho}\big(\varphi\big(J(-x)\big)-J(-x)\big) \right]; &\quad x > \xa.
	\end{cases}
\end{equation}
In particular, 
the expression for $v$ in \eqref{eq:v_explicit} implies that $\lim_{x \to \xu^+} v'(x) = +\infty$.
\end{proposition}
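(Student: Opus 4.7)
The strategy is to invert the Legendre transform and inherit all required properties from Proposition \ref{prop:FBP-sol}. From that proposition, $u\in\Cc^2(\Rb^+)$ is strictly convex with $\lim_{y\to 0^+} u'(y) = -\infty$ and $\lim_{y\to+\infty} u'(y) = -\xu$, so $u'$ is a $\Cc^1$ bijection from $(0,+\infty)$ onto $(-\infty,-\xu)$, and its inverse $J$ is strictly increasing, $\Cc^1$, with $J'(\xi) = 1/u''(J(\xi))$. Differentiating $\vpt(x) = u(J(-x)) + xJ(-x)$ and using $u'(J(-x))=-x$, the cross terms cancel and I obtain $\vpt'(x) = J(-x)$ and $\vpt''(x) = -J'(-x) = -1/u''(J(-x))$; since $u''>0$, this shows that $\vpt$ is strictly increasing, strictly concave, and in $\Cc^2\big((\xu,+\infty)\big)$. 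The limit $\vpt'(x) = J(-x) \to +\infty$ as $x\to\xu^+$ (because $u'(y)\to -\xu$ as $y\to+\infty$) delivers the last assertion of the proposition, while $\lim_{x\to\xu^+}\vpt'(x)/\vpt''(x) = -\lim_{y\to+\infty} y\,u''(y) = 0$ by \eqref{eq:FBPDual4}, confirming the first line of \eqref{eq:VE}.

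Next, from continuity of $u'$ at $\ya$ combined with \eqref{eq:u_prime} and $\varphi(\ya)=\al^{-\gam}$, I obtain $u'(\ya) = 1/\rho - \al^{-\gam}/(\rho\ya)$, whence $\xa = -u'(\ya) = \al^{-\gam}/(\rho\ya) - 1/\rho$, and because $\ya<\eta_2=\al^{-\gam}/(1+\rho\xu)$, it follows that $\xa>\xu$. The identity $\varphi(J(-x)) = J(-x)\big(1-\rho u'(J(-x))\big) = (1+\rho x)\,\vpt'(x)$ translates the two cases of \eqref{eq:cs-cand} into the two sides of the free boundary $\xa$, provided the extension $\varphi(y):=y(1-\rho u'(y))$ to $y\ge\ya$ satisfies $\varphi(y)\ge\al^{-\gam}$. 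Substituting the explicit form of $u$ on $[\ya,+\infty)$ from \eqref{eq:u_ge_ys} gives $\varphi(y) = (1+\rho\xu)y + \big(\al^{-\gam}-\ya(1+\rho\xu)\big)(y/\ya)^\lam$, and a direct calculation exploiting $\ya>\eta_1$ shows $\varphi'(\ya)>0$; since $\lam<0$, the term $\lam y^{\lam-1}$ is increasing in $y$, so the sign of $\varphi'$ persists on all of $[\ya,+\infty)$, yielding $\varphi(y)\ge\al^{-\gam}$ there. The formulas for $\thts$ follow by plugging $\vpt'$ and $\vpt''$ into \eqref{eq:thts-cand}: on $[\xu,\xa]$, inverting $u'(y)=-x$ via the Euler formula yields $y\,u''(y)=(1-\lam)(x-\xu)$, whereas on $(\xa,+\infty)$, \eqref{eq:u_prime2} gives $y\,u''(y)=\varphi(y)H(y)/(\kap y)$ together with $\varphi(y)/y = 1+\rho x$.

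Finally, the FBP equations \eqref{eq:FBP} are verified through the Legendre dictionary: substituting $\vpt'(x)=y$, $\vpt''(x)=-1/u''(y)$, and $\vpt(x)=u(y)-y u'(y)$, together with $x=-u'(y)$ and, for the upper piece, the identity $\al/\xu = r+\rho(1-\al)$ from \eqref{eq:xu}, converts each equation of \eqref{eq:FBP} into the corresponding equation of \eqref{eq:FBPDual1}-\eqref{eq:FBPDual2}, which $u$ satisfies by construction. The closed form of $\vpt$ on $[\xu,\xa]$ is then obtained by inverting $u'(y)=-x$ on $[\ya,+\infty)$, namely $J(-x) = \big(\rho\ya^\lam(x-\xu)/(\al^{-\gam}-\ya(1+\rho\xu))\big)^{1/(\lam-1)}$, and substituting into $\vpt(x)=u(J(-x))+xJ(-x)$ using \eqref{eq:u_ge_ys}; on $(\xa,+\infty)$ one combines the lower piece of \eqref{eq:u_ge_ys} with $\vpt = u + xy$ and $xy = -y u'(y) = (\varphi(y)-y)/\rho$. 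I expect the main technical obstacle to be establishing $\varphi\ge\al^{-\gam}$ on $[\ya,+\infty)$: the sign of $\varphi'(\ya^+)$ is not obvious a priori, and the strict inequality $\ya>\eta_1$ from Proposition \ref{prop:FBP-sol}$(i)$ is precisely what is needed to push the monotonicity argument through.
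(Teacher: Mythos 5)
Your proposal is correct and follows essentially the same route as the paper: invert the Legendre transform, read off $\vpt'(x)=J(-x)$ and $\vpt''(x)=-1/u''(J(-x))$, transfer monotonicity, concavity, the boundary conditions in \eqref{eq:VE}, and the FBP from Proposition \ref{prop:FBP-sol} via the dictionary \eqref{eq:Legendre}, and then extract \eqref{eq:CS-sol} and \eqref{eq:thts} case by case using \eqref{eq:u_prime}--\eqref{eq:u_prime2} and \eqref{eq:up_large_y}--\eqref{eq:upp_large_y}. Your explicit check that the extension $\varphi(y)=y(1-\rho u'(y))$ satisfies $\varphi\ge\al^{-\gamma}$ on $[\ya,+\infty)$ (via $\varphi'(\ya^+)=(1-\lam)(1+\rho\xu)(1-\eta_1/\ya)>0$ and monotonicity of $\varphi'$) spells out a detail the paper leaves to the reader as ``straightforward,'' and is a welcome addition rather than a deviation.
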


\begin{proof}
By Proposition \ref{prop:FBP-sol}, $u':(0,+\infty) \to (-\infty, -\xu)$ is an increasing function such that $\lim_{y\to0^+}u'(y) = -\infty$ and $\lim_{y\to+\infty} u'(y) =-\xu$. Therefore, its inverse $J:(-\infty,-\xu)\to (0,+\infty)$ is an increasing function such that $\lim_{\xi\to-\xu^{-}}J(\xi)=+\infty$ and $\lim_{\xi\to-\infty} J(\xi)=0$.

The expression for $\xa$ follows from \eqref{eq:u_prime}, the expression for $v$ follow \eqref{eq:Legendre}, and the expression for $\cs$ follows from \eqref{eq:cs-cand}, \eqref{eq:Legendre}, and \eqref{eq:u_prime}. To obtain \eqref{eq:thts}, use \eqref{eq:thts-cand} and \eqref{eq:Legendre} to obtain
\begin{linenomath*}\begin{align}\label{eq:thts2}
	\thts(x) &:= - \, \frac{\mu-r}{\sigma^2} \frac{v'(x)}{v''(x)} = \frac{\mu-r}{\sigma^2} J(-x)u''\big(J(-x)\big);\quad x>\xu.
\end{align}\end{linenomath*}
We consider two cases: $x\in[\xu,\xa]$ and $x>\xa$. For the former case, we argue as follows. By \eqref{eq:up_large_y}, $u'(y)\in(-\xa,-\xu)$ for $y>\ya$ and, therefore, $J(\xi)>\ya$ for $\xi\in(-\xa,-\xu)$. It then follows from \eqref{eq:up_large_y} that
\begin{linenomath*}\begin{align}
	\xi=u'\big(J(\xi)\big)=\frac{\ya(1+\rho \xu)-\al^{-\gam}}{\rho y^{*\lam}}J(\xi)^{\lam-1} - \xu,
	\quad\Longrightarrow\quad
	J(\xi)^{\lam-1} = \left(\frac{\rho y^{*\lam}(\xu+\xi)}{\ya(1+\rho \xu)-\al^{-\gam}}\right),
\end{align}\end{linenomath*}
for $\xi\in(-\xa,-\xu)$. By using \eqref{eq:upp_large_y} and \eqref{eq:thts}, we obtain
\begin{linenomath*}\begin{align}
	\thts(x) &= \frac{\mu-r}{\sigma^2} J(-x)u''\big(J(-x)\big)
	= \frac{\mu-r}{\sigma^2} \frac{(\lam-1)\big(\ya(1+\rho \xu)-\al^{-\gam}\big)}{\rho y^{*\lam}}J(-x)^{\lam-1}
	=\frac{(\mu-r)(1-\lam)}{\sig^2}(x-\xu),
\end{align}\end{linenomath*}
for $x\in[\xu,\xa]$. To obtain \eqref{eq:thts} for $x>\xa$, note that by the definition of $J$ and \eqref{eq:u_prime}, we have
\begin{linenomath*}\begin{align}
	-x = u'\big(J(-x)\big) = \frac{1}{\rho}\left(1-\frac{\varphi\big(J(-x)\big)}{J(-x)}\right)
	\quad\Longrightarrow\quad
	\frac{\varphi\big(J(-x)\big)}{J(-x)} = 1 + \rho x,
\end{align}\end{linenomath*}
for $x>\xa$. From \eqref{eq:u_prime2}, it follows that
\begin{linenomath*}\begin{align}
	J(-x)u''\big(J(-x)\big) = \frac{1}{\kap}H\big(J(-x)\big)(1+\rho x),
\end{align}\end{linenomath*}
for $x>\xa$. By substituting for $J(-x)u''\big(J(-x)\big)$ in \eqref{eq:thts2}, we obtain \eqref{eq:thts} for $x>\xa$. We can double check that $\thts(x)$ is continuous at $x=x^*$ as follows:
\begin{linenomath*}\begin{align}
	&\frac{1}{\kap}H\big(J(-\xa)\big)(1+\rho\xa)
	= \frac{1}{\kap}H\big(\ya\big)(1+\rho\xa)
	= \frac{1}{\rho}\left(1 -\lam + \frac{\lam\ya}{\eta_1}\right)(1+\rho\xa)\\
	&= \frac{1}{\rho}\left(1 -\lam + (\lam-1)\frac{1+\rho\xu}{1+\rho\xa}\right)(1+\rho\xa)
	= \frac{1}{\rho}(1-\lam)\left(1 -\frac{1+\rho\xu}{1+\rho\xa}\right)(1+\rho\xa)
	= (1-\lam)(\xa-\xu),
\end{align}\end{linenomath*}
in which we used $u'(\ya)=-\xa$ to get the first equality and the second terminal condition in \eqref{eq:phi-H-FBP-sys} for the second equality. To get the third equality, we used the boundary condition $(1+\rho\xa)v'(\xa) = \al^{-\gam}$ in \eqref{eq:FBP} and the definition of $\eta_1$ in \eqref{eq:eta1eta2} to obtain
\begin{linenomath*}\begin{align}
	(1+\rho\xa)\ya = \al^{-\gam} = \frac{\lam-1}{\lam}(1+\rho\xu)\eta_1
	\quad\Longrightarrow\quad
	\frac{\lam\ya}{\eta_1} = (\lam-1)\frac{1+\rho\xu}{1+\rho\xa}.
\end{align}\end{linenomath*}

It is, then, straightforward to show that $\xa$, $\vpt(\cdot)$, $\thts(\cdot)$, and $\cs(\cdot)$ satisfy \eqref{eq:cs-cand}, \eqref{eq:thts-cand},  \eqref{eq:VE}, and \eqref{eq:FBP} by reversing the transformation \eqref{eq:convexcon2} and by using the fact that $\ya$ and $u(\cdot)$ solve FBP \eqref{eq:FBPDual1}--\eqref{eq:FBPDual5}. That $\vpt(\cdot)$ is increasing and strictly concave follows from \eqref{eq:Legendre} since $u(\cdot)$ is decreasing and strictly convex as established by Proposition \ref{prop:FBP-sol}.$(ii)$. Furthermore,
\begin{linenomath*}\begin{align}
	\xa = \frac{\al^{-\gam}}{\rho\ya}-\frac{1}{\rho}
	> \frac{\al^{-\gam}}{\rho\eta_2}-\frac{1}{\rho}=\xu,
\end{align}\end{linenomath*}
because $0 < \ya<\eta_2=\frac{\al^{-\gam}}{1+\rho\xu}$ by Proposition \ref{prop:FBP-sol}.$(i)$.  Finally, the expression for $v$ in \eqref{eq:v_explicit} follows from $\vpt(x) = u\big(J(-x)\big) + x J(-x)$ and the expressions in Proposition \ref{prop:FBP-sol}.
\end{proof}

The next theorem is the main result of the paper and provides the solution of the stochastic control problem \eqref{eq:VF}.

\begin{theorem}\label{thm:VF}
Let $\xa$, $\vpt(x)$, $\thts(x)$, and $\cs(x)$ be as in Proposition \ref{prop:HJB}; then, $V(x,\al) = \vpt(x)$ for all $x \ge \xu$. Furthermore, the optimal investment-to-habit and consumption-to-habit processes are given by $\theta^*_t := \thts(X^*_t)$ and $c^*_t:=\cs(X^*_t)$, respectively, for all $t \ge 0$, in which $(X^*_t)_{t\ge0}$ solves the stochastic differential equation
\begin{linenomath*}\begin{align}\label{eq:XStar}
	\begin{cases}
		\dd X^*_t = \big((\rho+r)X^*_t + (\mu-r)\thts(X^*_t) - (1+\rho X^*_t)\cs(X^*_t)\big)\dd t + \sig\thts(X^*_t) \dd B_t;\quad t\ge0,\\
		X_0=x.
	\end{cases}
\end{align}\end{linenomath*}
\end{theorem}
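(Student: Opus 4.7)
The plan is to apply Theorem \ref{thm:verif} to the candidate $(\vpt, \thts, \cs)$ constructed in Proposition \ref{prop:HJB} by checking each of the five hypotheses $(i)$--$(v)$ in turn. Two of them are essentially tautological. Condition $(iv)$ is already the content of Proposition \ref{prop:HJB}: by construction $\thts$ and $\cs$ are the pointwise maximizers of the strictly concave map $(\theta,c)\mapsto\Lc_{\theta,c}\vpt(x)$ over $\Rb\times[\al,+\infty)$ (strict concavity follows from $\vpt''<0$), and $\vpt$ solves \eqref{eq:FBP}, hence $\Lc_{\thts(x),\cs(x)}\vpt(x)=0$. Condition $(i)$ then follows because this zero is the supremum, so $\Lc_{\theta,c}\vpt\le 0$ for every feasible $(\theta,c)$. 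Condition $(ii)$ is read off the explicit formula \eqref{eq:v_explicit}: evaluating at $x=\xu$ gives $\vpt(\xu)=\al^{1-\gam}/[\del(1-\gam)]$, strict monotonicity of $\vpt$ established in Proposition \ref{prop:HJB} gives $\vpt'>0$, and $\lim_{x\to\xu^+}\vpt'(x)=+\infty$ is the last assertion of Proposition \ref{prop:HJB}.

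For the transversality condition $(iii)$, I would first prove that $\vpt$ is bounded on $[\xu,+\infty)$. Since $\gam>1$ we have $\vpt(\xu)=\al^{1-\gam}/[\del(1-\gam)]<0$, and $\vpt$ is increasing. Taking $x\to+\infty$ in the second branch of \eqref{eq:v_explicit} and using that $J(-x)\to 0^+$ forces $\varphi(J(-x))\to 0^+$, combined with the exponent $1-1/\gam\in(0,1)$ controlling the nonlinear term and $H$ staying bounded (Proposition \ref{prop:FBP-sol}$(i)$), each of the four terms tends to $0$, yielding $\lim_{x\to+\infty}\vpt(x)=0$. Therefore $\vpt(\xu)\le\vpt(X_T)\le 0$ pathwise for any admissible policy, so $|e^{-\del T}\vpt(X_T)|\le|\vpt(\xu)|e^{-\del T}\to 0$ and $(iii)$ follows by bounded convergence.

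Condition $(v)$ is the main obstacle and I expect it to occupy the bulk of Appendix \ref{app:verif}. Because $\vpt\in\Cc^2([\xu,+\infty))$, the feedback coefficients $\thts$ and $\cs$ are continuous on $[\xu,+\infty)$ and locally Lipschitz on $(\xu,+\infty)$, so \eqref{eq:XStar} admits a unique strong solution up to $\tau=\tau_{\xu}\wedge\tau_{\mathrm{expl}}$. Both boundary behaviors require care. At $x=\xu$ the drift equals $(\rho+r)\xu-(1+\rho\xu)\al=0$ (using \eqref{eq:xu}) and the diffusion $\sig\thts(\xu)=0$; moreover on $[\xu,\xa]$ the coefficients are \emph{linear} in $x-\xu$ by \eqref{eq:thts} and the choice $\cs=\al$, so the process behaves locally like a geometric-type diffusion and Feller's test applies to show $\Pb(\tau_{\xu}<+\infty)=0$. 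Non-explosion is where the drift's superlinear growth on $(\xa,+\infty)$ — coming from $\cs(x)=\varphi(J(-x))^{-1/\gam}\to+\infty$ as $x\to+\infty$ — must be tamed. The idea is to exploit the HJB identity $\Lc_{\thts,\cs}\vpt\equiv 0$: applying It\^o to $e^{-\del t}\vpt(X^*_{t\wedge\tau_n})$ with a localizing sequence $\{\tau_n\}$ and using boundedness of $\vpt$ gives uniform-in-$n$ control of $\Eb_x[\int_0^{T\wedge\tau_n}e^{-\del s}(\cs(X^*_s))^{1-\gam}/(1-\gam)\,\dd s]$, which after passing to the limit rules out explosion, and simultaneously upgrades the stochastic-integral term to a genuine martingale so that the verification identity in Theorem \ref{thm:verif} yields an equality along the candidate optimum. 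Once $X^*$ is globally defined with $X^*_t>\xu$, admissibility — both \eqref{eq:integrable} and $\cs(X^*_t)\ge\al$ — follows from continuity of $\thts,\cs$ and the explicit form in \eqref{eq:CS-sol}.
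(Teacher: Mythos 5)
Your handling of conditions $(i)$, $(ii)$, and $(iv)$ matches the paper. For condition $(iii)$ you take a genuinely different and, I think, cleaner route: you argue directly that $v$ is bounded on $[\xu,+\infty)$, which suffices (you do not actually need $\lim_{x\to+\infty}v(x)=0$, only that the limit is finite — which follows since $\varphi\le\al^{-\gam}$, $H\le\kap/\rho$, and $J(-x)\to 0^+$ make all terms in the second branch of \eqref{eq:v_explicit} bounded). The paper instead introduces the state-price-density-type process $Y_t$ solving \eqref{eq:Y-SDE}, shows $\big(X_tY_t+\int_0^t c_sY_s\,\dd s\big)$ is a nonnegative supermartingale, and uses the Legendre inequality $v(X_T)\le u(Y_T)+X_TY_T$ together with a uniform bound on $u$ to control $\Eb_x[\ee^{-\del T}v(X_T)]$. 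Your pathwise bound is shorter, so this part is a reasonable simplification.

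The genuine gap is in your non-explosion argument for condition $(v)$. You correctly identify that near $\xu$ the drift and diffusion are both linear in $x-\xu$ and that Feller's test then rules out hitting $\xu$ (this agrees with the paper: with $b(x)=(x-\xu)b_0$ and $a(x)=(x-\xu)a_0$, one computes $\psi$ explicitly and finds a logarithmic divergence as $x\to\xu^+$). But your proposed argument for non-explosion at $+\infty$ — apply It\^o to $\ee^{-\del t}v(X^*_{t\wedge\tau_n})$, use $\Lc_{\thts,\cs}v\equiv0$, and invoke boundedness of $v$ — does not preclude explosion. If $X^*_t\to+\infty$ in finite time on an event of positive probability, then $v(X^*_t)$ simply converges to the finite limit $v_\infty$, producing no contradiction; and the a priori control you obtain, $\Eb_x\left[\int_0^{T\wedge\tau_n}\ee^{-\del s}\frac{-\cs(X^*_s)^{1-\gam}}{\gam-1}\,\dd s\right]\le 2\sup|v|$, is also uninformative because $\cs(x)^{1-\gam}\to0$ as $x\to+\infty$ (the integrand vanishes near explosion). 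A bounded $v$ cannot serve as a Lyapunov function. The paper instead applies Feller's test at $+\infty$ directly: it uses the two-sided bound $0<H_0\le H\big(J(-x)\big)\le\kap/\rho$ for $x>\xa$ to show $\int_z^y b/a^2\,\dd\eta\le\frac{\kap(a_1+\rho)}{\rho^2H_0}\ln\frac{1+\rho y}{1+\rho z}$, and then a second explicit computation establishes $\lim_{x\to+\infty}\psi(x)=+\infty$ (with the crucial fact that the resulting exponent $b_1>2$, guaranteeing divergence). You would need a similar quantitative argument on the ratio $b/a^2$, exploiting the specific structure $\cs(x)=O\!\left((1+\rho x)^{1/\gam}\right)$ and $\thts(x)\asymp 1+\rho x$, to close this gap.
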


\begin{proof}
It suffices to show that $\vpt$, $\thts$, and $\cs$ satisfy conditions $(i)$--$(v)$ of Theorem \ref{thm:verif}. Conditions $(i)$, $(ii)$, and $(iv)$ directly follow from Proposition \ref{prop:HJB}. Below, we prove conditions $(iii)$ and $(v)$ of that theorem.\vspace{1em}

\noindent\textbf{Condition $(iii)$:} Let $(X_t)_{t\ge0}$ be an admissible wealth-to-habit process corresponding to a relative investment and consumption policy $(\theta_t,c_t)_{t\ge0}\in\Ac(\al)$. By Proposition \ref{prop:HJB}, $\vpt$ is increasing and $\vp(\xu)=\frac{\al^{1-\gam}}{\del(1-\gam)}$; therefore,
\begin{linenomath*}\begin{align}\label{eq:transversality_lb}
	\ee^{-\del T} \frac{\al^{1-\gam}}{\del(1-\gam)} \le \Eb_{x}\big(\ee^{-\del T} \vp(X_T)\big),
\end{align}\end{linenomath*}
for all $T \ge 0$.
Define the non-negative process $(Y_t)_{t\ge0}$ by
\begin{linenomath*}\begin{align}\label{eq:Y-SDE}
	\begin{cases}
		\dd Y_t = -\big(r+\rho(1-c_t)\big)Y_t\dd t - \dfrac{\mu-r}{\sig} Y_t \dd B_t;\quad t\ge0,\\
		Y_0 = 1.
	\end{cases}
\end{align}\end{linenomath*}
From \eqref{eq:X-SDE}, it follows that
\begin{linenomath*}\begin{align}
	X_t Y_t + \int_0^t c_s Y_s \dd s = x + \int_0^t \left(\sig \theta_s - \frac{\mu-r}{\sig} X_s\right) Y_s \dd B_s,
\end{align}\end{linenomath*}
for any $t>0$. In particular, $\big(X_t Y_t + \int_0^t c_s Y_s \dd s\big)_{t\ge0}$ is a non-negative local martingale and, hence, a supermartingale. Therefore, $\Eb_x\big(X_t Y_t + \int_0^t c_s Y_s \dd s\big) \le x$ which, in turn, yields
\begin{linenomath*}\begin{align}\label{eq:budget}
	0<\Eb_x(X_t Y_t) \le x;\quad t\ge0,
\end{align}\end{linenomath*}
because $X_t,Y_t, c_t>0$, $\Pb$-a.s., for all $t \ge 0$.

Let $u$ be as in Proposition \ref{prop:FBP-sol}.$(ii)$. From \eqref{eq:convexcon2}, we obtain
\begin{linenomath*}\begin{align}\label{eq:v_u_inequality}
	\Eb_{x}\big(\vp(X_T)\big)
	= \Eb_{x}\big( \vp(X_T) - X_T Y_T + X_T Y_T\big)
	\le \Eb_{x}\big( u(Y_T) + X_T Y_T\big)
	\le \Eb\big( u(Y_T)\big) + x,
\end{align}\end{linenomath*}
for all $T>0$, in which we used \eqref{eq:budget} to get the last inequality.

For $0<y<\ya$, \eqref{eq:u_ge_ys} yields
\begin{linenomath*}\begin{align}
	u(y) &= \frac{1}{\del}\left[
	\varphi(y)H(y)
	+ \frac{\gam}{1-\gam}\varphi(y)^{1-\frac{1}{\gam}}
	+ \frac{r+\rho-\del}{\rho}\big(\varphi(y)-y\big)
	\right]\\
	\label{eq:u_upperbound}
	&\le \frac{\gam}{\del(1-\gam)}\varphi(y)^{1-\frac{1}{\gam}}
	+ \frac{(\kap+r+\rho-\del)\al^{-\gam}}{\del\rho}\\
	&\le \frac{(\kap+r+\rho-\del)\al^{-\gam}}{\del\rho},
\end{align}\end{linenomath*}
because $\gam>1$, $\varphi(y)\in(0,\al^{-\gam})$, and $H(y)\in(0,\frac{\kap}{\rho})$ by Proposition \ref{prop:FBP-sol}.$(i)$.
Because $u$ is decreasing by Proposition \ref{prop:FBP-sol}.$(ii)$, we have $u(y) \le \frac{(\kap+r+\rho-\del)\al^{-\gam}}{\del\rho}$ for all $y>0$. Inequalities \eqref{eq:transversality_lb} and \eqref{eq:v_u_inequality}, then, yield
\begin{linenomath*}\begin{align}
	&\ee^{-\del T} \frac{\al^{1-\gam}}{\del(1-\gam)} \le \Eb_{x}\big(\ee^{-\del T} \vp(X_T)\big)
	\le  \ee^{-\del T}\left( x + \Eb\big( u(Y_T)\big)\right)
	\le  \ee^{-\del T}\left( x + \frac{(\kap+r+\rho-\del)\al^{-\gam}}{\del\rho}\right).
\end{align}\end{linenomath*}
Condition $(iii)$ of Theorem \ref{thm:verif} follows by taking the limit as $T\to+\infty$.
\vspace{1em}

\noindent\textbf{Condition $(v)$:}
It suffices to show that \eqref{eq:XStar} has a unique strong solution $(X^*_t)_{t\ge0}$ taking values in the open interval $I:=(\xu, +\infty)$. For $x\in I$, let
\begin{linenomath*}\begin{align}\label{eq:ab}
	b(x) := (r+\rho)x + (\mu-r)\thts(x)-(1+\rho x)\cs(x),
	\quad\text{and} \quad
	a(x):=\sig\thts(x),
\end{align}\end{linenomath*}
be the drift and diffusion terms of \eqref{eq:XStar}, respectively. Note that the drift function $b(x)$ in \eqref{eq:ab} is not globally Lipschitz because of the term $x\cs(x)$. Therefore, standard existence results, such as Theorem 5.2.9 on page 289 of \cite{KS1991}, are not directly applicable here.

Since $b(x)$ and $a(x)$ are locally Lipschitz for $x\in I$, a standard localization argument yields that \eqref{eq:XStar} has a unique strong solution up to an explosion time. In the remaining part of the proof, we show that \eqref{eq:XStar} does not have an exploding solution (that is, a solution that exits $I$ in finite time). For $x\in I$, define
\begin{linenomath*}\begin{align}\label{eq:Feller_fn}
	\psi(x) = \int_{\xa}^x\int_{\xa}^y \frac{2}{a(z)^2}\exp\left(-2\int_z^y \frac{b(\eta)}{a(\eta)^2}\dd\eta\right) \dd z\dd y,
\end{align}\end{linenomath*}
By Feller's test for explosions (see, for example, Theorem 5.5.29 on page 348 of \cite{KS1991}), \eqref{eq:XStar} does not have an exploding solution if $\lim_{x\to+\infty} \psi(x) = \lim_{x\to\xu^+} \psi(x) = +\infty$, which we show next.

For $x\in(\xu,\xa)$, \eqref{eq:CS-sol} and \eqref{eq:thts} yield that $b(x)=(x-\xu)b_0$ and $a(x)=(x-\xu)a_0$, in which $b_0:=r+\rho(1-\al)+\left(\frac{\mu-r}{\sig}\right)^2(1-\lam)>0$ and $a_0:=(\mu-r)(1-\lam)/\sig>0$. It then follows that,
\begin{linenomath*}\begin{align}
	\psi(x) = \frac{2}{a_0^2+b_0}\left\{
	\frac{a_0^2}{a_0^2+b_0}\left[
	\left(\frac{x-\xu}{\xa-\xu}\right)^{1+\frac{b_0}{a_0^2}}
	-1
	\right]
	+\ln\left(\frac{\xa-\xu}{x-\xu}\right)
	\right\},
\end{align}\end{linenomath*}
for $x\in(\xu,\xa)$, which yields that $\lim_{x\to\xu^+} \psi(x) = +\infty$.

It only remains to show that $\lim_{x\to+\infty} \psi(x) = +\infty$. By \eqref{eq:CS-sol} and \eqref{eq:thts}, we have $\cs(x)=\big(\varphi\big(J(-x)\big)\big)^{-\frac{1}{\gam}}$ and $\thts(x)=\frac{\mu-r}{\kap\sig^2}H\big(J(-x)\big)(1+\rho x)$, for $x>\xa$. Furthermore, by the proof of Proposition \ref{prop:FBP-sol}, there exists a constant $H_0$ such that
\begin{linenomath*}\begin{align}\label{eq:H_inequality}
	0<H_0\le H\big(J(-x)\big)\le \frac{\kap}{\rho},
\end{align}\end{linenomath*}
for $x> \xa$. For $y>z>\xa$, we, then, have
\begin{linenomath*}\begin{align}
	\int_z^y \frac{b(\eta)}{a(\eta)^2}\dd\eta
	&= \int_z^y \left(\frac{r+\rho}{\sig^2}\frac{x}{\thts(x)^2} 
	+ \frac{\mu-r}{\sig^2}\frac{1}{\thts(x)}
	-\frac{(1+\rho x)\cs(x)}{\thts(x)^2}\right)\dd\eta\\
	&\le \int_z^y \left(\frac{\kap(r+\rho)}{2}\frac{x}{H\big(J(-x)\big)^2(1+\rho x)^2}
	+ \frac{\kap}{H\big(J(-x)\big)(1+\rho x)}\right)\dd\eta\\
	&\le \frac{\kap}{H_0}\int_z^y \left(\frac{a_1 x}{(1+\rho x)^2}
	+ \frac{1}{1+\rho x}\right)\dd\eta\\
	\label{eq:ab_inequality}
	&= \frac{\kap}{\rho^2H_0}\left[\frac{a_1}{1+\rho y}-\frac{a_1}{1+\rho z}+(a_1+\rho)\ln\left(\frac{1+\rho y}{1+\rho z}\right)\right]
	\le \frac{\kap(a_1+\rho)}{\rho^2H_0} \ln\left(\frac{1+\rho y}{1+\rho z}\right),
\end{align}\end{linenomath*}
in which $a_1:=\frac{r+\rho}{2H_0}$. Let $b_1:= \frac{2\kap(a_1+\rho)}{\rho^2H_0}$, and note that, because $0<H_0<\kap/\rho$, we have
\begin{linenomath*}\begin{align}
	b_1 = \frac{2\kap}{\rho^2H_0}\left(\frac{r+\rho}{2H_0}+\rho\right)
	\ge \frac{2\kap}{\rho^2\frac{\kap}{\rho}}\left(\frac{r+\rho}{2\frac{\kap}{\rho}}+\rho\right)
	= \frac{r+\rho}{\kap}+2 > 2.
\end{align}\end{linenomath*}
For $x>\xa$, \eqref{eq:H_inequality} and \eqref{eq:ab_inequality} yield
\begin{linenomath*}\begin{align}
	\psi(x) &= \int_{\xa}^x\int_{\xa}^y \frac{2}{a(z)^2}\exp\left(-2\int_z^y \frac{b(\eta)}{a(\eta)^2}\dd\eta\right) \dd z\dd y\\
	&\ge \int_{\xa}^x\int_{\xa}^y \frac{2}{\frac{2\kap}{\rho^2}(1+\rho z)^2}\exp\left(
	-\frac{2\kap(a_1+\rho)}{\rho^2H_0} \ln\left(\frac{1+\rho y}{1+\rho z}\right)\right) \dd z\dd y\\
	&= \int_{\xa}^x\int_{\xa}^y \frac{\rho^2}{\kap}(1+\rho z)^{-2}
	\left(\frac{1+\rho z}{1+\rho y}\right)^{b_1} \dd z\dd y
	= \frac{\rho}{\kap(b_1-1)}\int_{\xa}^x\left(\frac{1}{1+\rho y}- \frac{(1+\rho\xa)^{b_1-1}}{(1+\rho y)^b_1}\right) \dd y\\
	&= \frac{\rho}{\kap(b_1-1)}\left[\frac{1}{\rho}\ln\left(\frac{1+\rho x}{1+\rho\xa}\right)+\frac{\rho(1+\rho\xa)^{b_1-1}}{b_1-1}\left((1+\rho x)^{1-b_1}-(1+\rho\xa)^{1-b_1}\right)\right].
\end{align}\end{linenomath*}
Finally, by letting $x\to+\infty$, it follows that $\lim_{x\to+\infty} \psi(x) = +\infty$.
\end{proof}

We end this section by proving certain properties of the optimal policy.

\begin{corollary}\label{cor:thts_c_props}
	The optimal relative consumption policy $\cs(x)$ is increasing in $x$. The optimal relative investment policy $\thts(x)$ is asymptotically linear in $x$. Specifically, $\lim_{x\to+\infty}\thts(x)/x = \beta\frac{\mu-r}{\sig^2}$ for some constant $\beta\in(0,1]$.
\end{corollary}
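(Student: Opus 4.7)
The plan is to derive both claims by assembling the explicit expressions for $\cs$ and $\thts$ from \eqref{eq:CS-sol}--\eqref{eq:thts} with the monotonicity and limit properties of $\varphi$, $J$, and $H$ established in Proposition \ref{prop:FBP-sol}.

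For the monotonicity of $\cs$, on $[\xu,\xa]$ the function is the constant $\al$, so I focus on $(\xa,+\infty)$, where $\cs(x)=\varphi\big(J(-x)\big)^{-1/\gam}$. Since $u$ is strictly convex by Proposition \ref{prop:FBP-sol}.(ii), $u'$ is strictly increasing and its inverse $J$ is strictly increasing; hence $x\mapsto J(-x)$ is strictly decreasing on $(\xa,+\infty)$. By Proposition \ref{prop:FBP-sol}.(i), $\varphi$ is strictly increasing on $(0,\ya]$, so $x\mapsto\varphi(J(-x))$ is strictly decreasing on $(\xa,+\infty)$. Raising to the negative power $-1/\gam$ reverses the monotonicity and yields $\cs$ strictly increasing on $(\xa,+\infty)$. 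Continuity at $\xa$ is immediate from $\varphi(\ya)=\al^{-\gam}$, which gives $\cs(\xa^+)=\al=\cs(\xa^-)$.

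For the asymptotic linearity of $\thts$, I start from the representation in \eqref{eq:thts} valid for $x>\xa$ and write $\thts(x)/x = \tfrac{\mu-r}{\kap\sig^2}\,H\big(J(-x)\big)\,(\rho+1/x)$. Since $u':(0,+\infty)\to(-\infty,-\xu)$ is a strictly increasing bijection (Proposition \ref{prop:FBP-sol}.(ii)), its inverse satisfies $J(-x)\to 0^+$ as $x\to+\infty$. Using the limit $H_0:=\lim_{y\to 0^+}H(y)\in(0,\kap/\rho]$ asserted in Proposition \ref{prop:FBP-sol}.(i) (with positivity established in the proof of Theorem \ref{thm:VF}) together with $\kap=(\mu-r)^2/(2\sig^2)$, I obtain
\begin{equation*}
\lim_{x\to+\infty}\frac{\thts(x)}{x} \;=\; \frac{\rho(\mu-r)H_0}{\kap\sig^2} \;=\; \beta\,\frac{\mu-r}{\sig^2}, \qquad \beta := \frac{\rho H_0}{\kap}\in(0,1].
\end{equation*}

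The only delicate point is confirming that $H$ admits a strictly positive one-sided limit at $y=0^+$, which is what gives $\beta\in(0,1]$ rather than a mere $\liminf/\limsup$ bound. Proposition \ref{prop:FBP-sol}.(i) supplies the upper bound on this limit, and the proof of Theorem \ref{thm:VF} produces a constant $H_0>0$ with $H(J(-x))\ge H_0$ on $(\xa,+\infty)$. If additional clarity is needed, I would augment the ODE analysis for $(\varphi,H)$ near $y=0^+$ by noting that $\varphi(y)\to 0$ and $\varphi(y)/y\to+\infty$ (from \eqref{eq:u_prime} and $\lim_{y\to 0^+}u'(y)=-\infty$), and then using the $H$-equation in \eqref{eq:phi-H-FBP-sys} to show that $H(y)$ is of bounded variation on a right-neighborhood of $0$, hence convergent to the claimed $H_0\in(0,\kap/\rho]$.
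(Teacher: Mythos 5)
Your proof is correct and follows essentially the same route as the paper: monotonicity of $\cs$ from the monotonicity of $\varphi$ and $J$ together with the two sign reversals ($x\mapsto -x$ and the negative exponent $-1/\gam$), and the asymptotic slope of $\thts$ from $\beta = \tfrac{\rho}{\kap}\lim_{y\to 0^+}H(y)$. You are a bit more scrupulous than the paper in two respects: you spell out the sign flips (the paper just cites that $\varphi$ and $J$ are increasing) and you flag that the strict positivity of $\lim_{y\to 0^+}H(y)$ is the only nontrivial ingredient—indeed, the displayed conclusion of Proposition \ref{prop:FBP-sol}.(i) literally gives only $\lim_{y\to 0^+}H(y)\le\kap/\rho$, and the lower bound $H_0>0$ is invoked but not re-derived in the proofs of Corollary \ref{cor:thts_c_props} and Theorem \ref{thm:VF}, so your offer to supply an ODE-based argument near $y=0^+$ addresses a genuine soft spot rather than an imagined one.
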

\begin{proof}
	That $\cs(x)$ is increasing follows from \eqref{eq:CS-sol}, the fact that $\varphi(y)$ is increasing by Proposition \ref{prop:FBP-sol}.(i), and that $J(\xi)$ is increasing since its inverse $u'(y)$ is increasing by Proposition \ref{prop:FBP-sol}.(ii). To obtain the second statement, note that $\beta:= \frac{\rho}{\kap}\lim_{y\to0^+}H(y)\in(0, 1]$ by Proposition \ref{prop:FBP-sol}.(i). From \eqref{eq:thts}, we then obtain that
	\begin{align}
		\lim_{x\to+\infty} \frac{\theta^*(x)}{x} =
		\frac{\mu-r}{\sig^2}\lim_{x\to+\infty} H\big(J(-x)\big)\frac{1+\rho x}{\kap x}
		= \frac{\mu-r}{\sig^2}\beta.
	\end{align}
\end{proof}

%
%

\section{Numerical illustrations}\label{sec:numerics}

%
%

\begin{figure}[p]
\centerline{
	\adjustbox{trim={0.0\width} {0.0\height} {0.0\width} {0.0\height},clip}
	{\includegraphics[scale=0.36, page=1]{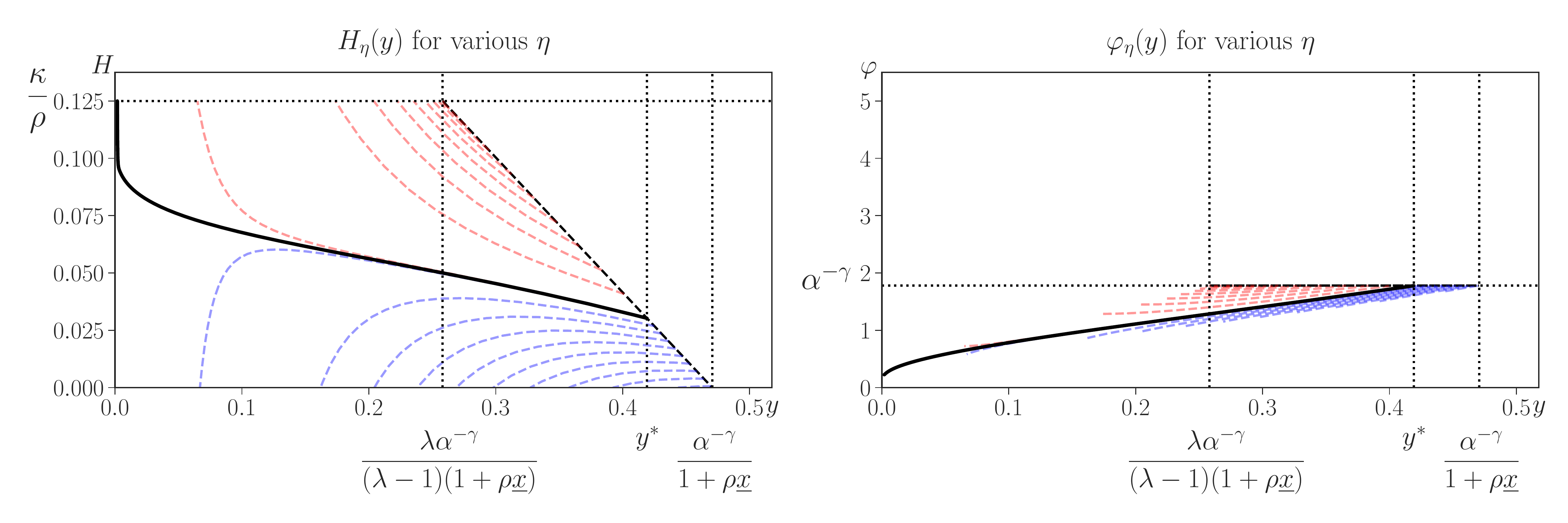}}
}
\caption{The solid black curves represent the (approximate) solution of the free-boundary problem \eqref{eq:phi-H-FBP-sys}. The dashed red and blue curves are the upper and lower solutions that satisfy the boundary-value problem \eqref{eq:phiH-eta} within the set $\Dc$ given by \eqref{eq:DC}. $\ya$ is the value of $\eta$ such that the solution exists for all $y\in(0,\eta)$.
	\label{fig:phi_H}}
\end{figure}
%
%
\begin{figure}[p]
\centerline{
	\adjustbox{trim={0.0\width} {0.02\height} {0.0\width} {0.02\height},clip}
	{\includegraphics[scale=0.36, page=1]{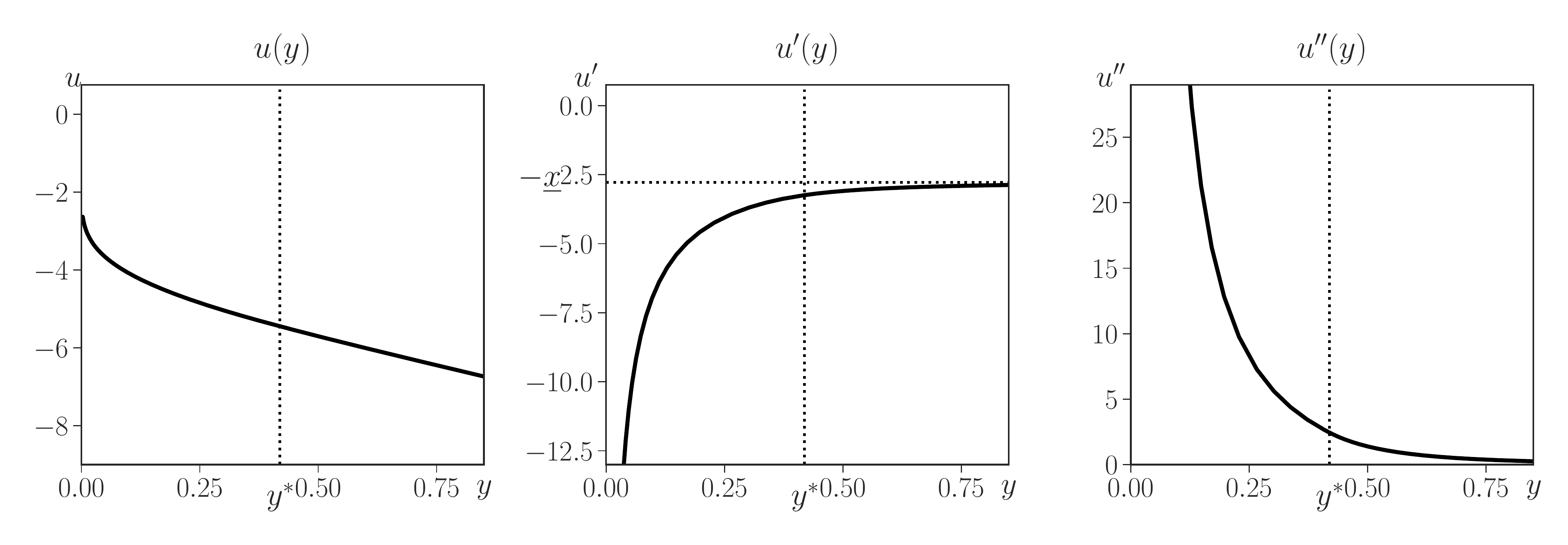}}
}
\caption{The solution $(\ya, u(\cdot))$ of FBP \eqref{eq:FBPDual1}-\eqref{eq:FBPDual5} and its first two derivatives.
	\label{fig:u}}
\end{figure}
%
%
\begin{figure}[p]
\centerline{
	\adjustbox{trim={0.0\width} {0.02\height} {0.0\width} {0.0\height},clip}
	{\includegraphics[scale=0.36, page=1]{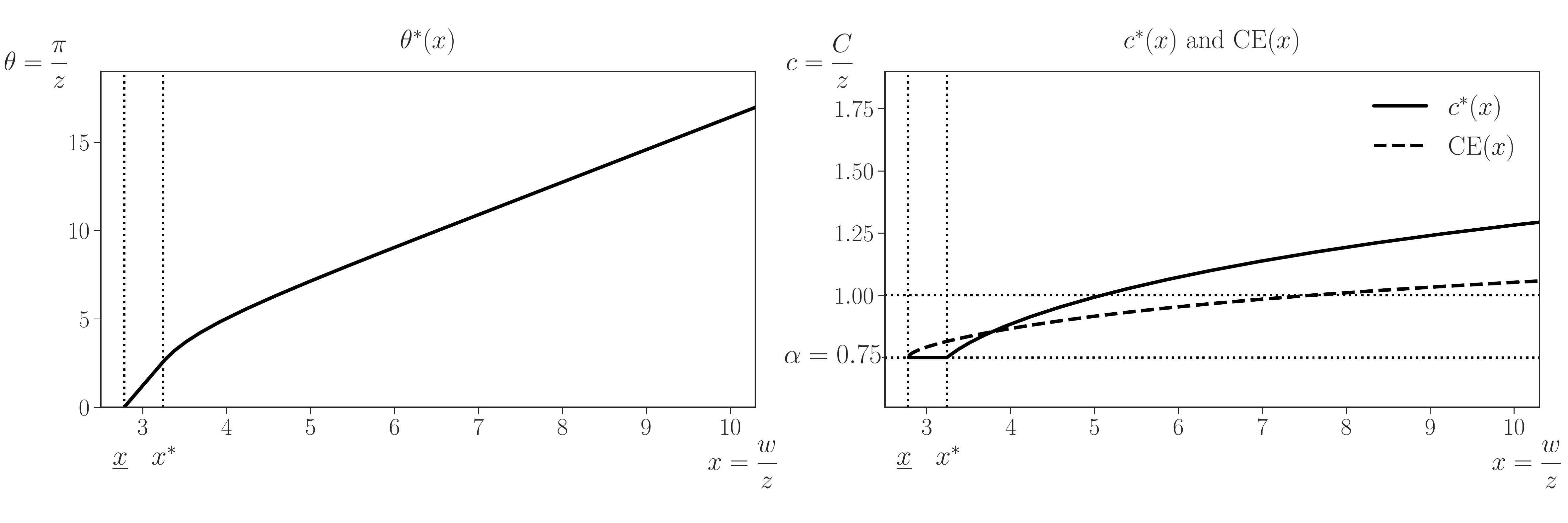}}
}
\caption{The optimal investment function $\thts(x)$, the optimal consumption function $\cs(x)$, the certainty equivalent function CE$(x)=\big(\del(1-\gam)V(x)\big)^{\frac{1}{1-\gam}}$.
	\label{fig:c_tht_Star}}
\end{figure}

We end the paper by providing a series of numerical examples to highlight certain properties of the optimal investment and consumption policy. Throughout the section, we choose the following values for the model parameters: $r=0.02, \mu=0.12, \sig=0.2, \rho=1, \al=0.75, \del=0.3,$ and $\gam=2$. On occasions, however, we will change the value of a parameter (while keeping other parameters fixed) to show sensitivity of the solution with respect to that parameter.

To obtain the solution, we first numerically solve FBP \eqref{eq:phi-H-FBP-sys} as follows. For a given value of $\ya$, \eqref{eq:phi-H-FBP-sys} can be solved using an ODE solver
(we used ``\texttt{RK45}'' through Python's \texttt{scipy.integrate.solve\_ivp()} function). By using a simple bisection search, we then find the smallest value of $\ya\in(\eta_1, \eta_2)$\footnote{Recall that $\eta_i$ are the constants in \eqref{eq:eta1eta2}.} for which $H$ exits from the top boundary $H=\kap/\rho$. The algorithm is illustrated by Figure \ref{fig:phi_H}. With $\ya$, $H$, and $\varphi$ at hand, we can use \eqref{eq:u_ge_ys} to find $u(y)$ and its first two derivatives for all $y>0$, as shown in Figure \ref{fig:u}.

Proposition \eqref{prop:HJB} then yields $\xa$, $v$, $\cs$, and $\thts$. The left plot of Figure \ref{fig:c_tht_Star} shows the optimal investment function $\thts(x)$. As indicated by \eqref{eq:thts}, for $x\in[\xu, \xa]$, $\thts(x)$ is linear with slope $\frac{\mu-r}{\sig^2}(1-\lam)$. For $x>\xa$, $\thts(x)$ is asymptotically linear with the slope $\frac{\mu-r}{\sig^2}$ since $\lim_{x\to+\infty} H\big(J(-x)\big)=\kap/\rho$. Indeed, as Figure \ref{fig:c_tht_Star} shows, this asymptotic linearity can occur for small values of $x$. Since $\lam<0$, the slope of $\thts(x)$ is greater in the range $x\in[\xu, \xa]$ than in the range $x>\xa$. In other words, the individual invests extra wealth more aggressively when her wealth-to-habit ratio is below the critical level $\xa$ compared to when her relative wealth is above $\xa$.

The right plot in Figure \ref{fig:c_tht_Star} shows the optimal consumption function $\cs(x)$ by the solid black curve. As indicated by \eqref{eq:CS-sol}, the optimal policy is to consume at the lowest consumption to habit ratio of $\al$ while wealth-to-habit ratio is below $\xa$ and to increase relative consumption once the relative wealth becomes larger than $\xa$. In the same plot, the dashed curve represents the \emph{certainty equivalent} (CE) function, which we define as follows. Assume that the individual maintains a constant consumption-to-habit ratio of $\ct$. Then, her utility of this consumption stream is
\begin{linenomath*}\begin{align}
\int_{0}^{+\infty} \ee^{-\del t} \frac{\ct^{1-\gam}}{1-\gam} \dd t = \frac{\ct^{1-\gam}}{\del(1-\gam)}.
\end{align}\end{linenomath*}
We define $CE(x)$ as the value of the constant consumption-to-habit process that yields the same utility as $V(x)$ of \eqref{eq:VF}. In other words, the individual is indifferent between receiving a constant consumption-to-habit ratio of CE$(x)$ versus consuming according to Theorem \ref{thm:VF}. It follows that we must have
\begin{linenomath*}\begin{align}
\frac{\text{CE}(x)^{1-\gam}}{\del(1-\gam)} = V(x)
\quad\Longrightarrow\quad
\text{CE}(x) = \big(\del(1-\gam)V(x)\big)^{\frac{1}{1-\gam}}.
\end{align}\end{linenomath*}
From the plot, we observe that the optimal consumption and CE functions meet at a point $(x_0, c_0)\approx(3.8, 0.85)$ such that $\cs(x)<\text{CE}(x)$ (resp.\ $\cs(x)>\text{CE}(x)$) for $x\in(\xu,x_0)$ (resp.\ $x>x_0$). Thus, by following the optimal consumption policy, the individual consumes less than (resp.\ greater than) her ``overall'' consumption rate if her wealth-to-habit ratio is below (resp.\ above) the relative wealth $x_0$. This observation indicates that the individual has a preference for specific levels of consumption-to-habit and wealth-to-habit ratios. In \cite{AngoshtariBayraktarYoung2020}, for the case when risky investment is not allowed, we showed a strong form of this property and explicitly identified the corresponding relative wealth and consumption levels $(x_0, c_0)$.

In Figure \ref{fig:xs_del_sens}, we investigate the dependence of the critical wealth-to-habit ratio $\xa$ on the subjective discount rate $\del$ in \eqref{eq:VF}.  We find $\xa$ to be decreasing in $\del$, which indicates that impatient individuals (that is, with higher $\del$) are more eager to consume at a rate higher than $\al$ than patient individuals (that is, with lower $\del$). We also saw this relationship in \cite{AngoshtariBayraktarYoung2020} for the case of riskless investment. In \cite{AngoshtariBayraktarYoung2020}, we also found that $\xa=\xu$ for $\del\ge r+\rho(1-\al)$. In contrast, Figure \ref{fig:xs_del_sens} highlights that $\xa>\xu$ for all values of $\del>0$, which we proved in Section \ref{sec:optimal}. Indeed, Proposition \ref{prop:FBP-sol}.(i) implies that $\ya<\eta_2$, from which it follows that $\xa>\xu$ by \eqref{eq:xa}. 

%
%

\begin{figure}[t]
\centerline{
	\adjustbox{trim={0.0\width} {0.02\height} {0.0\width} {0.0\height},clip}
	{\includegraphics[scale=0.35, page=1]{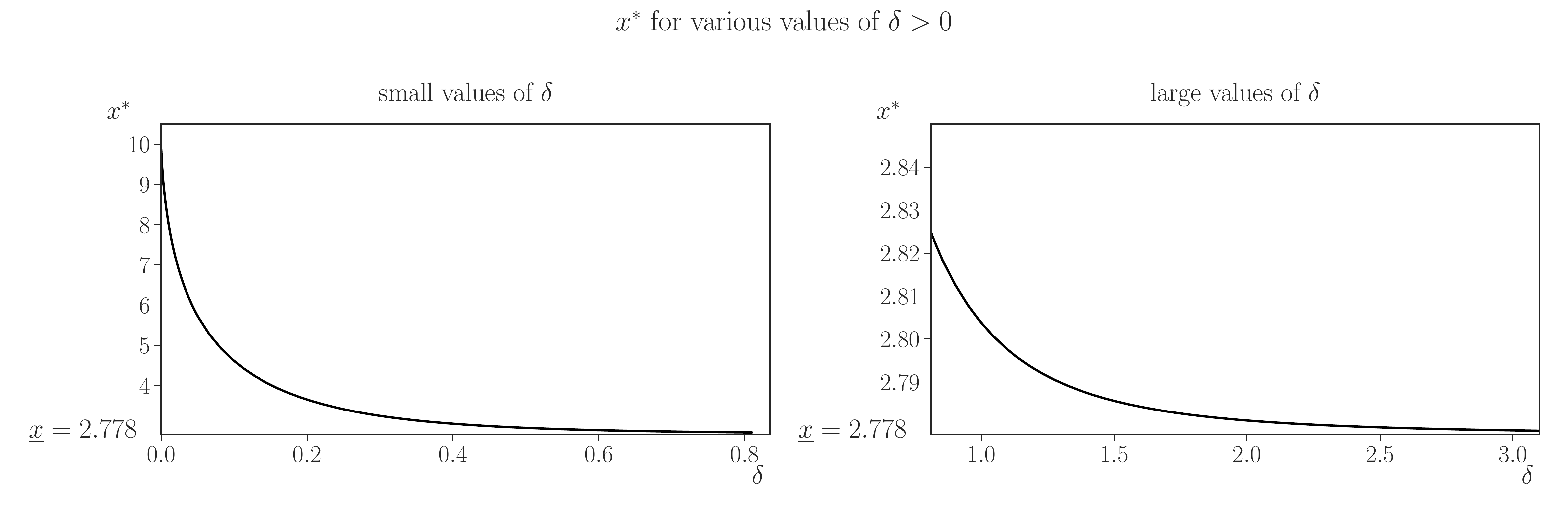}}
}
\caption{Sensitivity of the critical threshold $\xa$ with respect to $\delta$. Because of the difference in scale of $\xa$ values, we have separated the plot for small (on left) and large (on right) values of $\del$. Note that the lowest range of the vertical axes is $\xu$ and not zero. 
	\vspace{1em}
	\label{fig:xs_del_sens}}
\end{figure}

%
%

\begin{figure}[t]
	\centerline{
			\adjustbox{trim={0.0\width} {0.02\height} {0.0\width} {0.0\height},clip}
			{\includegraphics[scale=0.47, page=1]{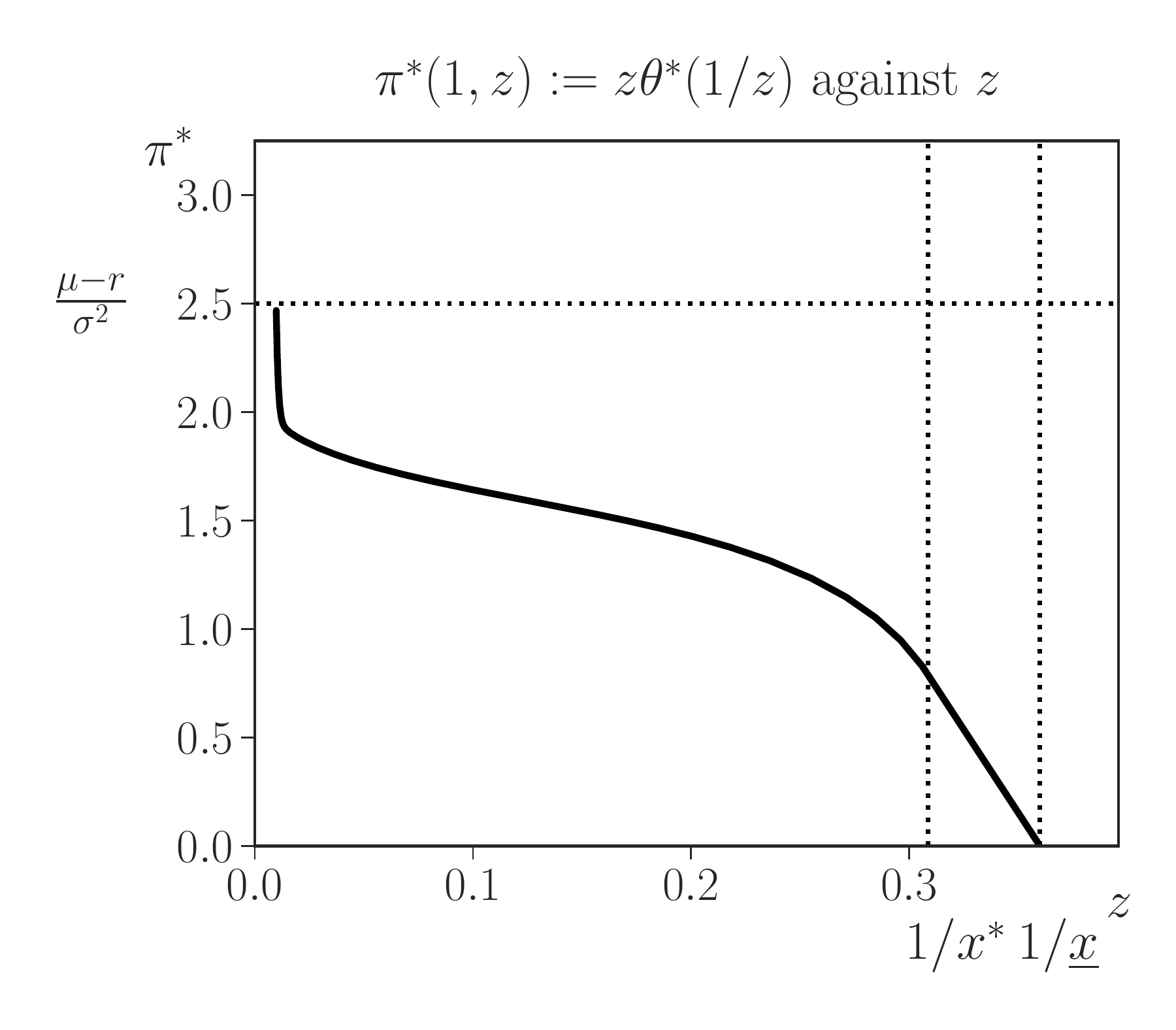}}
		\adjustbox{trim={0.0\width} {0.02\height} {0.0\width} {0.0\height},clip}
		{\includegraphics[scale=0.47, page=1]{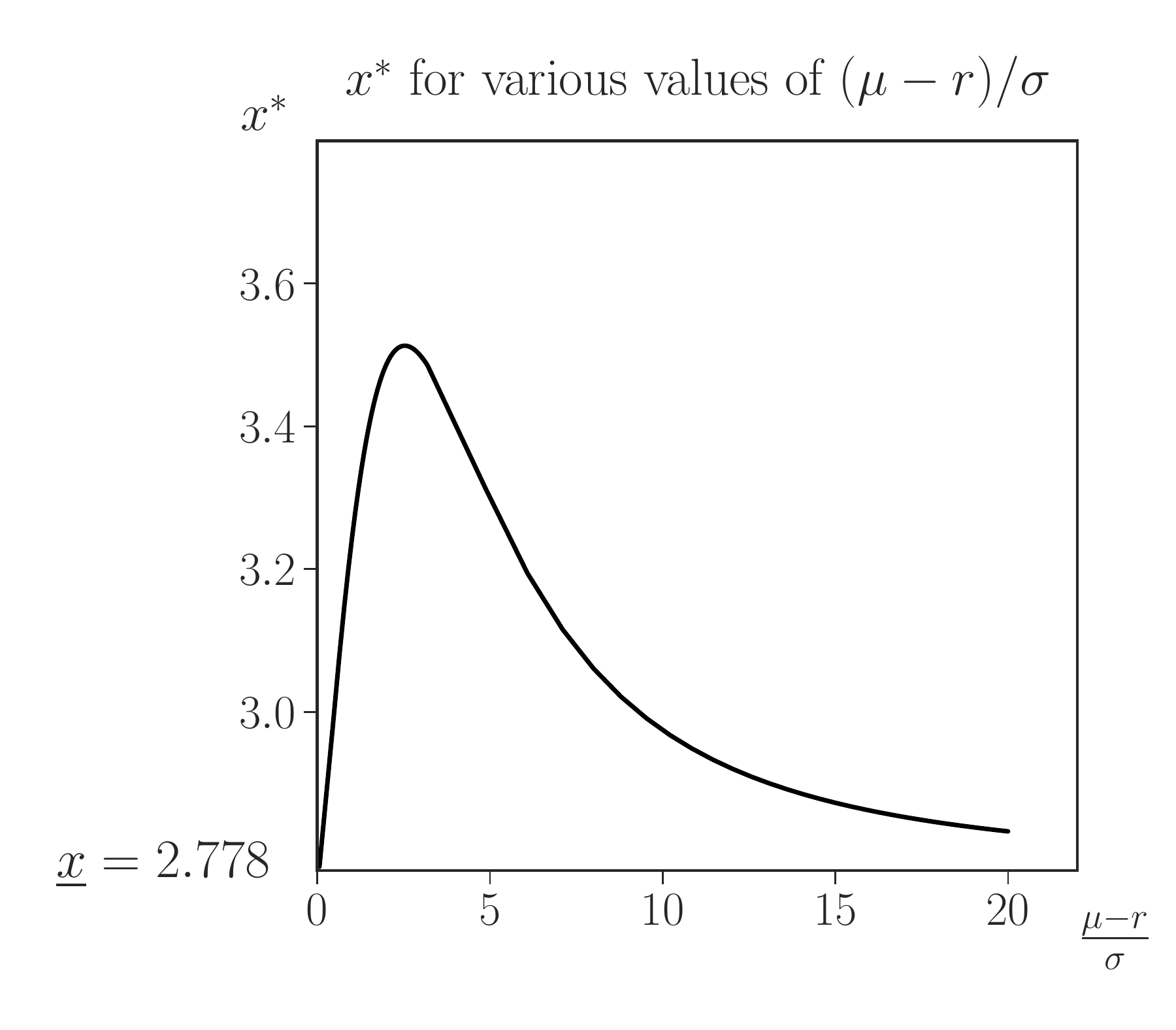}}
	}
	\caption{\textbf{Left:} Plot of the optimal (absolute) investment policy as a function of consumption habit and for fixed wealth, that is, the plot of $\pi^*(w,z) = z\thts(w/z)$ for $w=1$. Note that the plot is not defined in a neighborhood of $z=0$ because our numerical approximation for $H(y)$ is not defined for very small value of $y$ (i.e. very large values of $x$). See Figure \ref{fig:phi_H}.
		\textbf{Right:} Sensitivity of the critical threshold $\xa$ with respect to the ratio $(\mu-r)/\sig$. Note that the lowest range of the vertical axes is $\xu$ and not zero. 
		\label{fig:pi_Z}}
	\vspace{1em}
\end{figure}

The left plot in Figure \ref{fig:pi_Z} shows dependence of the optimal absolute investment policy $\pi^*_t$ on the consumption habit. Let $W^*_t$ and $Z^*_t$ be the optimally controlled wealth and consumption habit. By Proposition \ref{prop:Admiss}, the optimal investment in the stock is $\pi^*_t=Z^*_t\thts(W^*_t/Z^*_t)$. The left plot of Figure \ref{fig:pi_Z} shows the function $\pi^*(w,z):=z\thts(w/z)$ against the habit $z$ and for the fixed wealth $w=1$. Note that, for $w=1$, we must have $z\ge 1/\xu$ by \eqref{eq:NoRuin}. For $\frac{1}{\xa}\le z\le \frac{1}{\xu}$ (equivalently $x=w/z\in[\xu,\xa])$, \eqref{eq:thts} yields that $\pi^*(1,z)=\dfrac{(\mu-r)(1-\lam)}{\sig^2} \, (1-\xu\,z)$, so, $\pis$ is linear in $z$ for this range.
For $0<z\le 1/\xa$, the plot shows that $\pi^*$ increases as $z$ decreases, and it seems that $\pi^*$ has a limit in the interval $(0,(\mu-r)/\sig^2]$ as $z\to0^+$. The latter statement follows from Corollary \ref{cor:thts_c_props} as follows:
\begin{align}
	\lim_{z\to0^+} z\,\thts\left(\frac{1}{z}\right) = \lim_{x\to+\infty} \frac{\thts(x)}{x} = \frac{\mu-r}{\sig^2} \beta, 
\end{align}
for the constant $\beta\in(0,1]$ in the corollary. Note, also, that our numerical solution is not accurate as $z\to 0^+$ (equivalently, $x\to+\infty$), since our approximation of $H(y)$ is not accurate as $y\to0^+$.

The right plot of Figure \ref{fig:pi_Z} shows sensitivity of the threshold $\xa$ on the expected return $\mu$ and volatility $\sigma$ of the risky asset. By Propositions \ref{prop:FBP-sol}.(i) and \eqref{eq:xa}, $\mu$ and $\sigma$ affect $\xa$ through $\kap=\dfrac{(\mu-r)^2}{2\sig^2}$. Thus, it suffices to investigate the dependence of $\xa$ on the value of $\kap$ or, equivalently, on the Sharpe ratio (SR) $(\mu-r)/\sig=\sqrt{2\kap}$. The right plot of Figure \ref{fig:pi_Z} shows that $\xa$ is increasing for small values of SR, and it is decreasing for large values of SR. 

We interpret this result as follows. For small values of SR, the investor mostly uses the riskless asset for building up her wealth. Thus, her optimal consumption policy is close to the one studied by \cite{AngoshtariBayraktarYoung2020}, who showed that the threshold $\xa$ is close to $\xu$ (indeed, impatient individuals with $\del<r+\rho(1-\al)$ would have $\xa=\xu$). If SR increases, the investor would start using the risky asset and will be willing to wait longer before increasing her consumption above its minimum. Thus, $\xa$ is increasing in SR for small values of SR. If SR is sufficiently large, however, increasing SR would enable the investor to reach her ideal wealth-to-habit ratio more quickly, and thus, she could afford to consume above her minimum rate sooner. Thus, $\xa$ is decreasing in SR for large values of SR.

Figure \ref{fig:cs_thts_alpha_sens} shows dependence of the optimal policy on the parameter $\al$ in \eqref{eq:Habit}. Note that, by \eqref{eq:xu}, $\xu$ is increasing in $\al$. Thus, the domains of $\cs$ and $\thts$ in Figure \ref{fig:cs_thts_alpha_sens} shift to right as $\al$ increases. The top-left plot indicates that increasing $\al$ decreases the optimal investment-to-habit ratio $\thts(x)$, as long as the current level of wealth-to-habit ratio stays admissible (that is, $x\ge\xu$). The top-right plot shows that an increase in $\al$ increases (resp.\ decreases) $\cs(x)$ if $x\in(\xu, \xa)$ (resp.\ $x>\xa$). In other words, an individual who is more amenable to addiction (that is, higher $\al$) optimally invests less in the risky asset than an individual with less addictive personality and the same wealth-to-habit ratio. Furthermore, the individual with more addictive personality optimally consumes less than the individual with less addictive personality, unless the former individual's consumption is driven by the habit-formation constraint (that is, $x\in(\xu, \xa)$ such that $\cs(x)=\al$ for the individual with higher $\al$).

In the bottom plots of Figure \ref{fig:cs_thts_alpha_sens}, we investigate the asymptotic behavior of $\cs(x)$ and $\thts(x)$ for large values of $x$. The bottom-left plot is the log-log plot of $\thts(x)$ which shows that the optimal investment-to-habit ratio has linear growth in wealth-to-habit ratio $x$ (as indicated by Corollary \ref{cor:thts_c_props}). The bottom-right plot is the log-log plot of $\cs(x)$ which shows that the optimal consumption-to-habit ratio has sub-linear growth in wealth-to-habit ratio $x$. These plots also indicate that $\thts$ and $\cs$ are asymptotically independent of the value of $\al$ (as $x\to+\infty$), which is expected since the habit-formation constraint $C_t\ge \al Z_t$ (or, equivalently $X_t\ge \al$) should be asymptotically redundant for large $x$. Note, however, that removing the habit formation constraint will not yield the standard Merton problem because of dependence of our objective function \eqref{eq:EU} on the habit process $Z_t$. Indeed, the asymptotic model (as $\al\to0^+$) will be
\begin{align}\label{eq:VF_al0}
	V(x) = \sup_{\theta, c} \Eb_{x}\left(\int_0^{+\infty} \frac{c_t^{1-\gam}}{1-\gam} \,\ee^{-\del t}\, \dd t\right);\quad x>0,
\end{align}
with $(X_t,\theta_t,c_t)_{t\ge0}$ satisfying \eqref{eq:X-SDE}. To the best of our knowledge, the stochastic control problem \eqref{eq:VF_al0} has only been considered in Section 2.3 of \cite{2013Rogers} who only provided limited numerical results showing that the investment and consumption policies are very different from those in the classical Merton problem. As in our model, the numerical results in \cite{2013Rogers} indicate that $\theta^*(x)$ has linear growth (like the Merton problem) and that $c^*(x)$ has sublinear growth (unlike the Merton problem).

%
%

\begin{figure}[p]
	\centerline{
			\adjustbox{trim={0.0\width} {0.5\height} {0.0\width} {0.0\height},clip}
			{\includegraphics[scale=0.34, page=1]{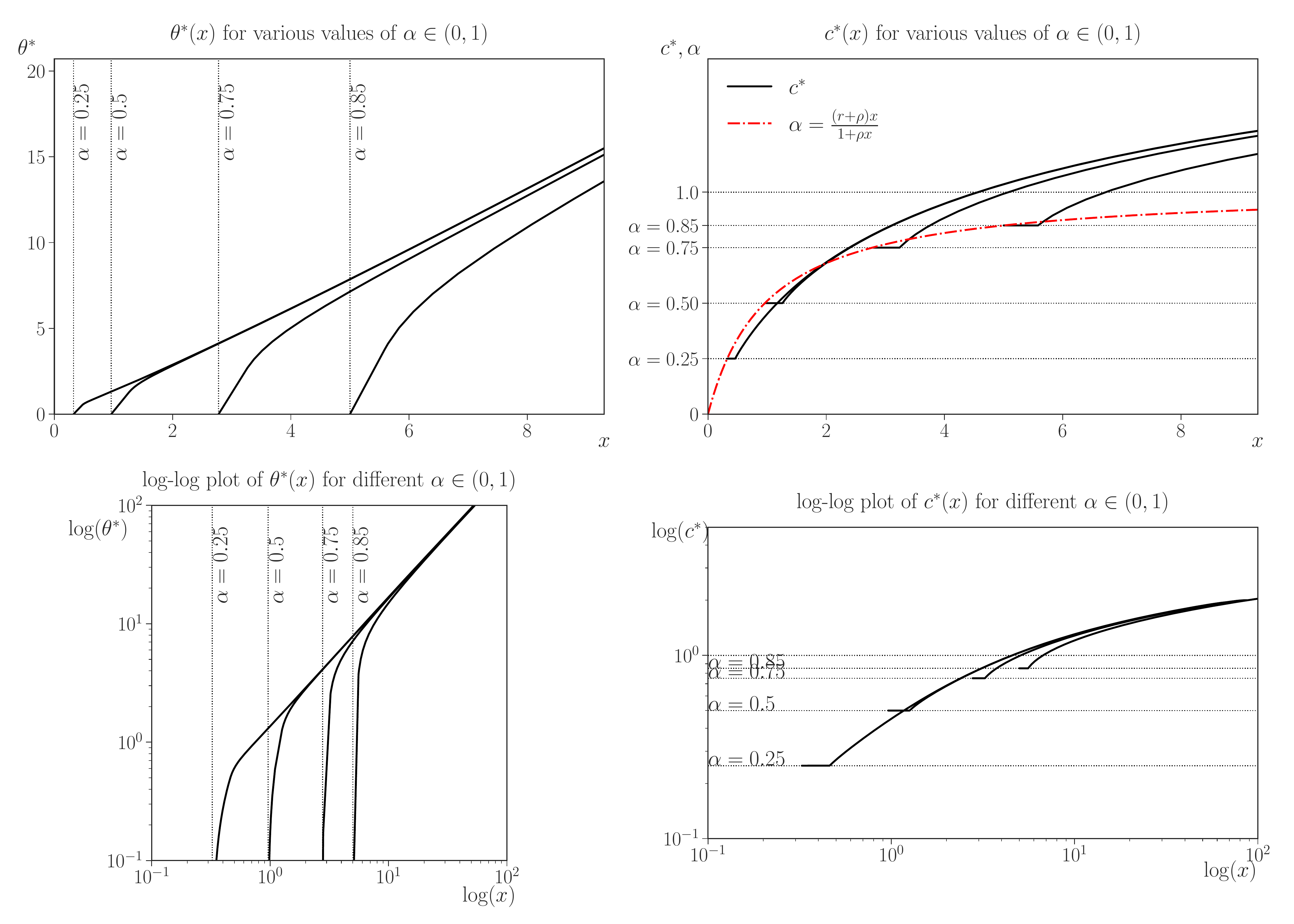}}
	}
	\vspace{1ex}
	\centerline{
			\adjustbox{trim={0.0\width} {0.02\height} {0.0\width} {0.5\height},clip}
			{\includegraphics[scale=0.34, page=1]{cs_thts_alpha_sens.pdf}}
	}
	\caption{\textbf{Top:} Sensitivity of the optimal investment function $\thts(x)$ and the optimal consumption function $\cs(x)$ with respect to $\alpha$. \textbf{Bottom:} Log-log plots corresponding to the top plots. For large values of $x$, $\theta^*(x)$ shows linear growth in $x$ while $c^*(x)$ shows sublinear growth. 
		\label{fig:cs_thts_alpha_sens}}
	\vspace{2em}
	\centerline{
			\adjustbox{trim={0.0\width} {0.02\height} {0.0\width} {0.0\height},clip}
			{\includegraphics[scale=0.34, page=1]{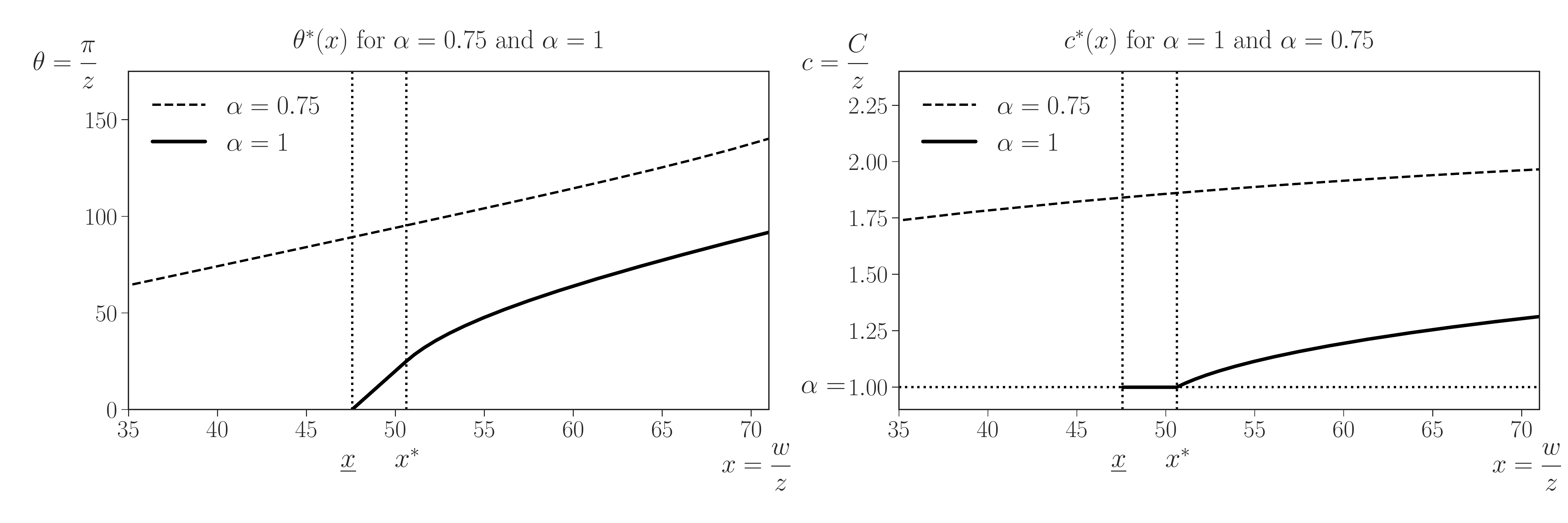}}
	}
	\caption{The optimal investment function $\thts(x)$ and the optimal consumption function $\cs(x)$ for the addictive habits, that is, $\al=1$. For reference, the optimal investment and consumption function for the nonaddictive habits $\al=0.75$ are also shown by the dashed curves. Note that the left limit on the horizontal axis is $x=35$.
		\label{fig:cStar_1_alpha}}
\end{figure}

Figure \ref{fig:cStar_1_alpha} shows the optimal policies for the case $\al=1$, which was included in the analysis of Section \ref{sec:optimal}. For this case, the individual's consumption rate is forced to be at least as large as her habit by \eqref{eq:Habit}, that is, $C_t\ge Z_t$. This scenario is usually referred to as \emph{addictive} habit formation, while the case in which $C_t<Z_t$ is allowed is called \emph{nonaddictive} habit formation.\footnote{See, for instance, \cite{DetempleKaratzas2003}.} Therefore, in our setting, $\al=1$ (resp.\ $\al<1$) represents addictive (resp.\ nonaddictive) habit formation. As Figure \ref{fig:cStar_1_alpha} shows, the optimal policies of the addictive and nonaddictive cases have a similar structure. Their main difference is that the amount of wealth needed to support a certain level of consumption is significantly higher for addictive habits. For instance, for our chosen parameter values, an addictive individual needs a wealth that is 47 times her habit to avoid bankruptcy (that is, $\xu\approx 47$ in the right plot of Figure \ref{fig:cStar_1_alpha}), and a wealth of about 50 times her habit to consume above the minimum rate. On the other hand, Figure \ref{fig:cs_thts_alpha_sens} shows that for a nonaddictive habit formation with $\al=0.75$, the individual needs a wealth-to-habit ratio of around 3 to optimally consume above her minimum rate.

Finally, Figure \ref{fig:cStar_1_alpha} shows that, for the same values of risk aversion and wealth-to-habit ratio, addictive habits (that is, $\al=1$) correspond to significantly lower levels of optimal consumption and optimal investment in the risky asset than nonaddictive habits (with $\al=0.75$). In other words, individuals with more addictive habits (optimally) invest less in the risky asset. To attract such individuals, the market premiums needs to be higher than they would be for individuals with less addictive habits. This observation provides an explanation for the \emph{equity premium puzzle} of \cite{mehra1985equity}, which states that the historical risk premium offered by stock markets has been significantly higher than the level that could be explained by investors' risk aversion alone. See \cite{Constantinides1990} for further discussion on the puzzle and how it can be explained by habit-formation models.

%
%

\bibliographystyle{chicago}
\bibliography{references}

%
%
\appendix

%
%
\section{Proof of Theorem \ref{thm:verif}}\label{app:verif}

We complete the proof in two steps by showing (1) $v \ge V$ and (2) $v \le V$.
\smallskip

\noindent \textbf{Step 1:}  Let $(\theta_t,c_t)_{t\ge0} \in \Ac(\al)$ and $\{X_t \}_{t \ge 0}$ be the corresponding wealth-to-habit process given by \eqref{eq:X-SDE}. Define the non-decreasing sequence of stopping times $\{\tau_n\}_{n=1}^\infty$ by
\begin{linenomath*}\begin{align}\label{eq:tau_n}
\tau_n := \inf \left\{ t \ge 0: \int_0^t \ee^{-\del s} \big(\theta_s v'(X_s)\big)^2 \dd s \ge n \right\},
\end{align}\end{linenomath*}
for $n\ge 1$. For all $T\ge0$, applying It\^o's lemma to $\ee^{-\del t} v(X_t)$, $t\in[0,T\wedge \tau_n]$ yields
\begin{linenomath*}\begin{align}
&\ee^{-\del (T\wedge\tau_n)} v\big(X_{T\wedge\tau_n}\big) + \int_0^{T\wedge\tau_n} \frac{c_t^{1-\gam}}{1-\gam} \, \ee^{-\del t} \, \dd t = v(x) + \int_0^{T\wedge\tau_n} \ee^{-\del t} \, \Lc_{\theta_t, c_t} v(X_t) \, \dd t
+ \int_0^{T\wedge\tau_n} \sig \theta_t \ee^{-\del t} v'(X_t) \, \dd B_t.
\end{align}\end{linenomath*}
Condition $(i)$ implies that the first integral on the right is non-positive; thus, we have
\begin{linenomath*}\begin{align}
\frac{\al^{1-\gam}}{\del(1-\gam)}
\le \ee^{-\del (T\wedge\tau_n)} v\big(X_{T\wedge\tau_n}\big) + \int_0^{T\wedge\tau_n} \frac{c_t^{1-\gam}}{1-\gam} \, \ee^{-\del t} \, \dd t
\le v(x) + \int_0^{T\wedge\tau_n} \sig \theta_t \ee^{-\del t} v'(X_t) \, \dd B_t,
\end{align}\end{linenomath*}
in which we used condition $(ii)$ to get the first inequality.
The definition of $\tau_n$ implies that the expectation of the remaining integral on the right is zero, which implies
\begin{linenomath*}\begin{align}\label{eq:finite_n}
\frac{\al^{1-\gam}}{\del(1-\gam)} \le \Eb_x \left(\ee^{-\del (T\wedge\tau_n)} v\left(X_{T\wedge\tau_n}\right) + \int_0^{T\wedge\tau_n} \frac{c_t^{1-\gam}}{1-\gam} \, \ee^{-\del t} \, \dd t \right)
\le v(x).
\end{align}\end{linenomath*}
Define $\tau_\infty:=\displaystyle\esup\{\tau_n:n\ge1\}$, in which we include the possibility of $\Pb\left(\tau_\infty=+\infty\right)>0$.  From the dominated convergence theorem, because $\{\tau_n\}$ is non-decreasing, we deduce
\begin{linenomath*}\begin{align}\label{eq:E_tau_n}
\lim_{n\to+\infty} \Eb_x\Big(\ee^{-\del (T\wedge\tau_n)}\Big) = \Eb_x\Big(\ee^{-\del (T\wedge\tau_\infty)}\Big)\in[0,1).
\end{align}\end{linenomath*}
Because $v'(\xu^+)=+\infty$ by condition $(ii)$, we have $\tau_\infty < +\infty$ only if $X_{\tau_\infty}=\xu$ which, in turn, is equivalent to $X_t=\xu$ and $c_t=\al$ for all $t\ge \tau_\infty$ by the proof of Lemma 2.2 in \cite{AngoshtariBayraktarYoung2020}. By letting $n\to\infty$ in \eqref{eq:finite_n} and by using the dominated convergence theorem to exchange expectation and limit, we obtain
\begin{linenomath*}\begin{align}
&\frac{\al^{1-\gam}}{\del(1-\gam)} 
\le\lim_{n\to+\infty}\Eb_x \left(\ee^{-\del (T\wedge\tau_n)} v\left(X_{T\wedge\tau_n}\right) + \int_0^{T\wedge\tau_n} \frac{c_t^{1-\gam}}{1-\gam} \, \ee^{-\del t} \, \dd t \right)\\*
&= \Eb_x\left[
\mathds{1}_{\{ \tau_\infty < T \}}
\left(\ee^{-\del \tau_\infty} \frac{\al^{1-\gam}}{\del(1-\gam)} + \int_0^{\tau_\infty} \frac{c_t^{1-\gam}}{1-\gam} \, \ee^{-\del t} \, \dd t \right)
+ \mathds{1}_{\{ \tau_\infty \ge T \}}
\left(\ee^{-\del T} v(X_{T}) + \int_0^{T} \frac{c_t^{1-\gam}}{1-\gam} \, \ee^{-\del t} \, \dd t \right)
\right]\\
&= \Eb_x\left[
\Ib_{\{ \tau_\infty < T \}}\int_0^{+\infty} \frac{c_t^{1-\gam}}{1-\gam} \, \ee^{-\del t} \, \dd t
+\Ib_{\{ \tau_\infty \ge T \}}\int_0^{T} \frac{c_t^{1-\gam}}{1-\gam} \, \ee^{-\del t} \, \dd t
\right]\\
\label{eq:finite_T}
&\quad{}+\Eb_x\left[\mathds{1}_{\{ \tau_\infty \ge T \}}\ee^{-\del T} v(X_{T})\right]
\le v(x).
\end{align}\end{linenomath*}
To get the first equality, we used $X_{T\wedge\tau_\infty}=\xu$ when $\tau_\infty<T$ and $\vp(\xu) = \frac{\al^{1-\gam}}{\del(1-\gam)}$ from condition $(ii)$.  The second equality holds since, if $\tau_\infty<T$, then we have $c_t=\al$ for all $t \ge \tau_\infty$ and, thus, 
$\ee^{-\del \tau_\infty} \frac{\al^{1-\gam}}{\del(1-\gam)} = \int_{\tau_\infty}^{+\infty} \frac{c_t^{1-\gam}}{1-\gam} \, \ee^{-\del t} \, \dd t$. Next, we use condition $(iii)$ to deduce that
\begin{linenomath*}\begin{align}
0\le \lim_{T\to +\infty} \Eb_x\left[\mathds{1}_{\{ \tau_\infty \ge T \}}\ee^{-\del T} v(X_{T})\right]
\le \lim_{T\to +\infty} \Eb_x\left[\ee^{-\del T} v(X_{T})\right]=0.
\end{align}\end{linenomath*} 
Thus, by taking the limit as $T\to+\infty$ in \eqref{eq:finite_T} and by using the dominated convergence theorem, it follows that
\begin{linenomath*}\begin{align}
\Eb_x\left[\int_0^{+\infty} \frac{c_t^{1-\gam}}{1-\gam} \, \ee^{-\del t} \, \dd t\right]
\le v(x).
\end{align}\end{linenomath*}
Finally, by taking the supremum over admissible policies, we deduce $v \ge V$ on $[\xu, +\infty)$.

\bigskip	

\noindent \textbf{Step 2:} 	For this step, consider the admissible policy $\big(\thts(X^*_t), \cs(X^*_t)\big)_{t\ge0}$, and define the stopping time $\widehat{\tau}_n$ by
\[
\widehat{\tau}_n := \inf \left\{ t \ge 0: \int_0^t \ee^{-\del s} \big(\theta^*_s v_x(X^*_s) \big)^2 ds \ge n \right\}.
\]
Then, by repeating the argument in Step 1 and by using condition $(iv)$, we obtain
\begin{linenomath*}\begin{align}
v(x) = \Eb_x \Bigg(\ee^{-\del (T\wedge\widehat{\tau}_n)} v\left(X_{T\wedge\widehat{\tau}_n}\right) + 
\int_0^{T\wedge\widehat{\tau}} \frac{c_t^{1-\gam}}{1-\gam} \, \ee^{-\del t} \, \dd t \Bigg)
\ge
\frac{\al^{1-\gam}}{\del(1-\gam)} \Eb_x\Big(\ee^{-\del (T\wedge\widehat{\tau}_n)}\Big).
\end{align}\end{linenomath*}
By arguing as in Step 1, and by taking the limit as $n \to + \infty$ and, then, as $T\to+\infty$, we have
\[
v(x) = \Eb_x \Bigg(\int_0^{+\infty} \frac{(\cs(X^*_t))^{1-\gam}}{1-\gam} \, \ee^{-\del t} \, \dd t \Bigg).
\]
Thus, because $v$ is the value function corresponding to an admissible policy, we deduce $v \le V$ on $[\xu, +\infty)$.

%
%
\section{Auxiliary lemmas for Section \ref{sec:optimal}}\label{app:aux_constrained}

The following Lemma is used in the proof of Proposition \ref{prop:FBP-sol}.

\begin{lemma}\label{lem:phiH-eta}
For $\eta\in(\eta_1,\eta_2)$, let $(\varphi_\eta(\cdot), H_\eta(\cdot)\big)$ be the solution of the boundary-value problem \eqref{eq:phiH-eta} such that $(\eps(\eta),\eta]$ is the maximal domain over which the solution exists within $\Dc$ given by \eqref{eq:DC}. We, then, have:
\begin{itemize}
	\item[$(i)$] If $\eps(\eta)>0$, then $(\varphi_\eta(\cdot), H_\eta(\cdot)\big)$ exits $\Dc$ either through the boundary $\overline{\Dc}_1$ given by  \eqref{eq:Dc1} or through the boundary $\overline{\Dc}_2$ given by
	\begin{linenomath*}\begin{align}\label{eq:Dc2}
		\overline{\Dc}_2 := \big\{(y,\varphi,0):y\in(0,\eta_2), \, \varphi\in(0,\al^{-\gam}) \big\}.
	\end{align}\end{linenomath*}
	
	\item[$(ii)$] For values of $\eta\in(\eta_1, \eta_2)$ that are sufficiently close to $\eta_1$, the solution $(\varphi_\eta(\cdot), H_\eta(\cdot)\big)$ exits $\Dc$ through $\overline{\Dc}_1$.
	
	\item[$(iii)$] For values of $\eta\in(\eta_1, \eta_2)$ that are sufficiently close to $\eta_2$, the solution $(\varphi_\eta(\cdot), H_\eta(\cdot)\big)$ exits $\Dc$ through $\overline{\Dc}_2$.
	
	\item[$(iv)$] Assume that $\eta,\eta'\in(\eta_1,\eta_2)$ are such that $\eta<\eta'$ and the solutions $(\varphi_\eta(\cdot), H_\eta(\cdot)\big)$ and $(\varphi_{\eta'}(\cdot), H_{\eta'}(\cdot)\big)$ do not have disjoint domains, that is $\max\big\{\eps(\eta'),\eps(\eta)\big\} < \eta$. Then, $\varphi_\eta(y)>\varphi_{\eta'}(y)$ and $H_\eta(y)>H_{\eta'}(y)$ for all $y\in\big(\max\big\{\eps(\eta'),\eps(\eta)\big\}, \, \eta]$. 
\end{itemize}
\end{lemma}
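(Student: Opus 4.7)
My plan is to treat the four parts in sequence, with (ii), (iii), and (iv) requiring increasingly delicate analyses of the vector field $(g_1,g_2)$ near the boundary data. For (i), I would observe that $(g_1,g_2)$ is locally Lipschitz on $\Dc$ and diverges only on $\{y=0\}\cup\{\varphi=0\}$, so by standard ODE theory the maximal solution terminates at $\partial\Dc$; moreover $\varphi'/\varphi=(\rho/\kap y)(\kap/\rho-H)$ is bounded on any subinterval of $(\eps(\eta),\eta]$ on which $H$ is bounded away from $\kap/\rho$, so $\ln\varphi$ cannot diverge to $-\infty$ at finite positive $y$, ruling out exit through $\{\varphi=0\}$. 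Hence, if $\eps(\eta)>0$, the exit occurs through $\{H=\kap/\rho\}=\overline{\Dc}_1$ or $\{H=0\}=\overline{\Dc}_2$.

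For (ii), using $(1+\rho\xu)(r+\rho(1-\al))=r+\rho$ together with the quadratic identity \eqref{eq:lam_quadratic_eq}, a direct computation gives $g_2(\eta_1,\al^{-\gam},\kap/\rho)=\kap(\lam-1)(1+\rho\xu)\al^\gam/\rho<0$ since $\lam<0$. Continuity then yields a neighborhood $U$ of this point on which $g_2<-c<0$, and for $\eta$ close to $\eta_1$ the boundary data lies in $U$ with $H$-component arbitrarily close to $\kap/\rho$; backward integration keeps the trajectory in $U$ just long enough to push $H$ past $\kap/\rho$, giving exit through $\overline{\Dc}_1$. For (iii), the analogous computation at $(\eta_2,\al^{-\gam},0)$ instead yields $g_2=0$ (using $(1+\rho\xu)\al=(r+\rho)\xu$), so I would examine first-order behavior. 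A check shows $\partial_y g_2\big|_{(\eta_2,\al^{-\gam},0)}=(r+\rho(1-\al))/(\rho\eta_2^2)>0$, so $g_2<0$ just inside $\Dc$ for $y<\eta_2$. Since the boundary data yields $H(\eta)=\kap|\lam|(\eta_2-\eta)/(\rho\eta_1)+o(\eta_2-\eta)$ for $\eta$ close to $\eta_2$, the trajectory starts just above $H=0$ and is driven down to $0$ at some $y^*>0$, giving exit through $\overline{\Dc}_2$.

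For (iv), the $\varphi$-ordering $\varphi_{\eta'}(\eta)<\al^{-\gam}=\varphi_\eta(\eta)$ at the right endpoint follows from $\varphi'>0$ on $\Dc$ and $\varphi_{\eta'}(\eta')=\al^{-\gam}$. The $H$-ordering $H_\eta(\eta)>H_{\eta'}(\eta)$ requires a sensitivity analysis, for which the infinitesimal identity
\begin{equation*}
\partial_{\eta'} H_{\eta'}(\eta)\big|_{\eta'=\eta} = \frac{\kap\lam}{\rho\eta_1}-g_2\big(\eta,\al^{-\gam},H_\eta(\eta)\big)
\end{equation*}
would be combined with positivity of $h(\eta):=g_2(\eta,\al^{-\gam},H_\eta(\eta))-\kap\lam/(\rho\eta_1)$ on $(\eta_1,\eta_2]$: I would verify $h(\eta_1)=0$ (direct), $h'(\eta_1)=\kap\lam^2/(\rho\eta_1^2)>0$ (using $\lam\lam'=-\del/\kap$), and $h(\eta_2)=-\kap\lam/(\rho\eta_1)>0$, and then upgrade to a global version via the variational equation. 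Once both strict inequalities are established at $y=\eta$, they propagate to the common interval by contradiction: if $y^*$ is the largest interior point where one inequality fails, uniqueness rules out simultaneous equality, and in the remaining cases direct sign analysis of either
\begin{equation*}
\varphi'_\eta(y^*)-\varphi'_{\eta'}(y^*)=\tfrac{\rho\,\varphi_\eta(y^*)}{\kap y^*}\bigl(H_{\eta'}(y^*)-H_\eta(y^*)\bigr)
\end{equation*}
or the analogous expression for $H'_\eta-H'_{\eta'}$ (whose two terms share a negative sign) produces a slope incompatible with the difference being positive on $(y^*,\eta]$ and vanishing at $y^*$.

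The main obstacle is part (iv), specifically establishing $H_\eta(\eta)>H_{\eta'}(\eta)$ at the right endpoint for arbitrary $\eta<\eta'$ in $(\eta_1,\eta_2)$. The propagation step is a robust maximum-principle-style calculation and the $\varphi$-ordering is trivial, but upgrading the infinitesimal sensitivity identity to a non-infinitesimal comparison requires positivity of $h$ on the full interval together with an integration of the variational equation along the parameterized family, which is the most delicate ingredient of the argument.
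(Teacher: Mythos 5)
Your proposal for part (i) has a genuine gap that propagates into the remaining parts. You show that exit through $\{\varphi=0\}$ is impossible (via the bound $\varphi'/\varphi < 1/\eps(\eta)$, which is a nice, more elementary route than the paper's comparison argument) and conclude that exit is through $\{H=\kap/\rho\}$ or $\{H=0\}$. But $\overline{\Dc}_1$ is, by definition \eqref{eq:Dc1}, only the portion of $\{H=\kap/\rho\}$ with $y<\eta_1$; your identification ``$\{H=\kap/\rho\}=\overline{\Dc}_1$'' is false, and you never rule out exit through $\overline{\Dc}_1'\setminus\overline{\Dc}_1=\{(y,\varphi,\kap/\rho):y\in[\eta_1,\eta_2)\}$. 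The paper spends the second half of its proof of (i) precisely on this point, constructing the linear upper barrier $w_2(y)=\frac{\kap}{\rho}[1-\lam(1-y/\eta_1)]$ and verifying the Walter-type differential inequality $w_2'(y)-g_2(y,\al^{-\gam},w_2(y))=\frac{\kap\lam^2}{\rho\eta_1}(1-y/\eta_1)<0$ for $y>\eta_1$, which pins $H_\eta(y)\le w_2(y)<\kap/\rho$ on $(\eta_1,\eta]$. Without this, the restriction to $\overline{\Dc}_1$ in the lemma (and the selection of $\ya$ in the proof of Proposition \ref{prop:FBP-sol}) does not follow. This barrier is also exactly what the paper uses (tacitly) to get the right-endpoint ordering $H_{\eta'}(\eta)\le w_2(\eta)=H_\eta(\eta)$ needed in part (iv) before invoking the quasimonotone comparison Lemma \ref{lem:sys-comparison}.(ii); your variational/sensitivity route to that same ordering is considerably heavier and, as you acknowledge, the integration of the variational equation along the family is left as a ``delicate ingredient'' rather than carried out.

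Part (iii) has a direction error. You correctly observe that $g_2(\eta_2,\al^{-\gam},0)=0$ (using $(1+\rho\xu)\al=(r+\rho)\xu$), which the paper's ``similar argument'' glosses over — that is a good catch. But your conclusion does not follow from ``$g_2<0$ just inside $\Dc$ for $y<\eta_2$''. The boundary data sits at the right endpoint $y=\eta$ and the solution is sought on $(\eps(\eta),\eta]$, i.e.\ by backward integration. With $g_2<0$, one has $H'<0$, so as $y$ decreases from $\eta$ the value of $H$ \emph{increases}; the trajectory is initially repelled from $\{H=0\}$, not ``driven down to $0$''. Indeed, the paper's barrier identity gives $H_\eta'(\eta)=g_2(\eta,\al^{-\gam},H_\eta(\eta))=\frac{\lam}{\eta_1}H_\eta(\eta)<0$ for every $\eta\in(\eta_1,\eta_2)$, so the initial motion is always away from $\{H=0\}$. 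Any proof of (iii) must account for a turnaround, or argue differently (e.g.\ via continuous dependence on $\eta$ near the degenerate boundary datum $(\eta_2,\al^{-\gam},0)\in\partial\overline{\Dc}_2$, as the paper intends). As written, your step (iii) does not establish the claim. Your (ii) computation of $g_2(\eta_1,\al^{-\gam},\kap/\rho)<0$ is correct and agrees with the paper, but the conclusion that the exit is through $\overline{\Dc}_1$ (as opposed to merely through $\{H=\kap/\rho\}$) again requires the missing $w_2$ barrier from (i).
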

\begin{proof}
\noindent \underline{Proof of $(i)$}:
From the differential equation for $\varphi$ in \eqref{eq:phiH-eta}, we deduce that $\varphi_\eta'(y)>0$ for $y\in(\eps(\eta),\eta)$, since $H_\eta(y)<\kap/\rho$. So, it can only be possible for $(\varphi_\eta(\cdot), H_\eta(\cdot)\big)$ to exit $\Dc$ from the boundary
\begin{linenomath*}\begin{align}
	\overline{\Dc}_0 := \big\{(y,0,H):y\in(0,\eta_2), \, H\in(0,\kap/\rho) \big\},
\end{align}\end{linenomath*}
the boundary
\begin{linenomath*}\begin{align}
	\overline{\Dc}_1' := \big\{(y,\varphi,\kap/\rho):y\in(0,\eta_2), \, \varphi\in(0,\al^{-\gam}) \big\}.
\end{align}\end{linenomath*}
or the boundary $\overline{\Dc}_2$.
We can eliminate the possibility of exiting through the boundary $\overline{\Dc}_0$ by the following argument. On the contrary, suppose $(\varphi_\eta(\cdot), H_\eta(\cdot)\big)$ exits $\Dc$ thorough $\overline{\Dc}_0$, that is, $0<H_\eta(y)<\kap/\rho$ for $\eps(\eta)<y\le\eta$ and $\lim_{y\to\eps(\eta)^+}\varphi_\eta(y)=0$. For $y>0$, define $u_1(y)=(1+\rho\xu)y$ and $u_2(y)=0$. Note that $u_1\big(\eps(\eta)) > 0 = \lim_{y\to\eps(\eta)^+}\varphi_\eta(y)$ and $u_2\big(\eps(\eta)) = 0 < \lim_{y\to\eps(\eta)^+}\varphi_\eta(y)$. Furthermore, for $\eps(\eta)<y\le \eta$, we have 
\begin{linenomath*}\begin{align}
	u_1'(y) - g_1\big(y,u_1(y),u_2(y)\big) =
	0 = \varphi_\eta'(y) - g_1\big(y,\varphi_\eta(y),H_\eta(y)\big),
\end{align}\end{linenomath*}
and
\begin{linenomath*}\begin{align}\label{eq:u_2p}
	u_2'(y) - g_2\big(y,u_1(y),u_2(y)\big)
	= 0 - \frac{1}{y}\left((1+\rho\xu)^{-\frac{1}{\gam}} y^{-\frac{1}{\gam}} - \al \right)
	< 0 
	= \varphi_\eta'(y) - g_2\big(y,\varphi_\eta(y),H_\eta(y)\big),
\end{align}\end{linenomath*}
in which $g_1$ and $g_2$ are given by \eqref{eq:g1} and \eqref{eq:g2}, respectively. To get the first equality in \eqref{eq:u_2p}, we used $\frac{\rho\xu}{1+\rho\xu} = \al$ which follows from \eqref{eq:xu}. To get the inequality in \eqref{eq:u_2p}, we used $0<y\le \eta < \eta_2=\frac{\al^{-\gam}}{1+\rho\xu}$.  Because $g_1(y,\varphi, H)$ is decreasing in $H$ and $g_2(y,\varphi,H)$ is decreasing in $\varphi$, we can apply Lemma \ref{lem:sys-comparison}.$(i)$ below to conclude that $\varphi_\eta(\eta)\le (1+\rho\xu)\eta$. The last statement, however, contradicts the boundary condition in \eqref{eq:phiH-eta}, namely, $\varphi_\eta(\eta) = \al^{-\gam}$ and $\eta<\eta_1\Rightarrow (1+\rho\xu)\eta<\al^{-\gam}$. Thus, $(\varphi_\eta(\cdot), H_\eta(\cdot)\big)$ can only exit $\Dc$ through either $\overline{\Dc}_1'$ or $\overline{\Dc}_2$.

To finish proving $(i)$, it remains to show that $(\varphi_\eta(\cdot), H_\eta(\cdot)\big)$ cannot exit through the boundary
\begin{linenomath*}\begin{align}
	\overline{\Dc}'_1\backslash \overline{\Dc}_1 = \left\{\left(y,\varphi,\kap/\rho\right): y\in[\eta_1 ,\eta_2), \varphi\in(0, \al^{-\gam})\right\}.
\end{align}\end{linenomath*}
To show this statement, it suffices to show
\begin{linenomath*}\begin{align}\label{eq:H_w2}
	H_\eta(y)\le w_2(y);\quad \max\big\{\eps(\eta),\eta_1\big\}<y\le\eta,
\end{align}\end{linenomath*}
in which $w_2$ is defined by
\begin{linenomath*}\begin{align}
	w_2(y)= \frac{\kap}{\rho}\left[1-\lam\left(1-\frac{y}{\eta_1}\right)\right];\quad y\in(0,\eta_2).
\end{align}\end{linenomath*}
Recall that $\lam < 0$, and note that $\frac{\kap}{\rho} = w_2(\eta_1)>w_2(y)>w_2(\eta_2)=0$ for $y \in (\eta_1, \eta_2)$. To show inequality \eqref{eq:H_w2}, let $w_1(y)=\al^{-\gam}$ for $y \in (0, \eta_2)$.  From \eqref{eq:phiH-eta}, we have $\varphi_\eta(\eta)=w_1(\eta)$ and $H_\eta(\eta)=w_2(\eta)$. Furthermore, for $y\in\big(\max\big\{\eps(\eta), \eta_1\big\}, \eta\big]$, we have
\begin{linenomath*}\begin{align}
	w_1'(y) - g_1\big(y,w_1(y),w_2(y)\big) = 0 - \frac{\rho}{\kap y}\left(\frac{\kap}{\rho} - w_2(y)\right)\al^{-\gam}
	<0 = \varphi_\eta'(y) - g_1\big(y,\varphi_\eta(y),H_\eta(y)\big), 
\end{align}\end{linenomath*}
and
\begin{linenomath*}\begin{align}
	&w_2'(y) - g_2\big(y,w_1(y),w_2(y)\big)\\
	&= \frac{\kap\lam}{\rho\eta_1}
	- \frac{\rho}{\kap y}\left(\frac{\kap}{\rho}-w_2(y)\right)\left(\frac{\del-r-\rho(1-\al)}{\rho}-w_2(y)\right)
	-\frac{r+\rho}{\rho\al^{-\gam}} + \frac{\del}{\rho y}\\
	&= \frac{\kap\lam}{\rho\eta_1} -\frac{r+\rho}{\rho\al^{-\gam}} + \frac{\del}{\rho y}
	+ \frac{1}{y} \left(1 - \dfrac{y}{\eta_1} \right) \left(- \, \dfrac{\kap \lam^2}{\rho} + \frac{(\kap + r + \rho(1-\al) - \del)\lam}{\rho} + \dfrac{\kap \lam^2 y}{\rho \eta_1} \right)\\
	&= \frac{\kap\lam}{\rho\eta_1} -\frac{r+\rho}{\rho\al^{-\gam}} + \frac{\del}{\rho y}
	+ \frac{1}{y} \left(1 - \dfrac{y}{\eta_1} \right) \left(- \, \dfrac{\del}{\rho} + \dfrac{\kap \lam^2 y}{\rho \eta_1} \right)\\
	&= \frac{\kap\lam}{\rho\eta_1} -\frac{r+\rho}{\rho\al^{-\gam}} + \frac{\del}{\rho \eta_1} + \dfrac{\kap \lam^2}{\rho \eta_1} \left(1 - \dfrac{y}{\eta_1} \right) \\
	&= \dfrac{1}{\rho \eta_1} \left[ \kap \lam - \dfrac{\lam(r + \rho(1 - \al))}{\lam -1} + \del + \kap \lam^2  \left(1 - \dfrac{y}{\eta_1} \right) \right] \\
	&= \frac{\kap \lam^2}{\rho\eta_1}\left(1-\frac{y}{\eta_1}\right)\\
	&< 0 
	= H_\eta'(y) - g_2\big(y,\varphi_\eta(y),H_\eta(y)\big).
\end{align}\end{linenomath*}
In two steps of the calculation for $w_2'-g_2$, we used the fact that $\lam$ satisfies \eqref{eq:lam_quadratic_eq}, and we used the definition of $\eta_1$ in \eqref{eq:eta1eta2}. To get the last inequality, we used $y>\eta_1$. Finally, inequality \eqref{eq:H_w2} follows from Lemma \ref{lem:sys-comparison}.$(ii)$ below.	
\vspace{1em}

\noindent \underline{Proofs of $(ii)$ and $(iii)$}:
As $\eta\to\eta_1^+$, The boundary condition in \eqref{eq:phiH-eta} approaches the point $(y,\varphi,H)=(\eta_1, \al^{-\gam}, \kap/\rho)$, which lies on the boundary of $\overline{\Dc}_1$. Furthermore,
\begin{linenomath*}\begin{align}
	g_2\big(\eta_1, \al^{-\gam}, \kap/\rho)= \frac{r+\rho}{\rho\al^{-\gam}}-\frac{\del}{\rho \eta_1}
	= \frac{r+\rho}{\rho\al^{-\gam}}\left(1+\frac{\del(1-\lam)}{\lam(r+\rho(1-\al))}\right)
	=\frac{\kap(r+\rho)(\lam-1)}{\rho\al^{-\gam}\big(r+\rho(1-\al)\big)}<0,
\end{align}\end{linenomath*}
in which we used \eqref{eq:xu} and \eqref{eq:eta1eta2} to get the second equality, \eqref{eq:lam_quadratic_eq} to get the third equality, and $\lam<0$ to get the inequality. From continuous dependence of the solution $(\varphi_\eta(\cdot), H_\eta(\cdot)\big)$ on $\eta$, it follows that $(\varphi_\eta(\cdot), H_\eta(\cdot)\big)$ exits $\Dc$ through $\overline{\Dc}_1$ for values of $\eta$ in a right neighborhood $(\eta_1,\eta_1+\epsilon)$ of $\eta_1$. With a similar argument, we conclude that $(\varphi_\eta(\cdot), H_\eta(\cdot)\big)$ exits $\Dc$ through $\overline{\Dc}_2$ for values of $\eta$ in a left neighborhood $(\eta_2-\epsilon',\eta_2)$ of $\eta_2$.
\vspace{1em}

\noindent \underline{Proof of $(iv)$}: The statement directly follows from Lemma \ref{lem:sys-comparison}.$(ii)$ below by taking into account that $(\varphi_\eta(\cdot), H_\eta(\cdot)\big)$ and $(\varphi_{\eta'}(\cdot), H_{\eta'}(\cdot)\big)$ are unique solutions of \eqref{eq:phiH-eta}.
\end{proof}

We refer to the following lemma in the proof of Lemma \ref{lem:phiH-eta}.

\begin{lemma}\label{lem:sys-comparison}
For an open set $D\subseteq\Rb^2$ and an interval $J=(a,b)$, assume that the vector-valued function $(f_1,f_2)=\fv(x,\yv):J\times D\to \Rb^2$ is locally Lipschitz continuous with respect to $\yv$, that $f_1(x,y_1,y_2)$ is decreasing in $y_2$, and that $f_2(x,y_1,y_2)$ is decreasing in $y_1$. Let $\uv=(u_1,u_2):J\to D$ and $\wv=(w_1,w_2):J\to D$ be differentiable functions. Then:
\begin{itemize}
	\item[$(i)$] If $u_1(a^+)\ge w_1(a^+)$, $u_2(a^+)\le w_2(a^+)$, $u_1'(x)-f_1\big(x,\uv(x)\big)\ge w_1'(x)-f_1\big(x,\wv(x)\big)$, and $u_2'(x)-f_2\big(x,\uv(x)\big)\le w_2'(x)-f_2\big(x,\wv(x)\big)$ for $x\in J$, then $u_1(x)\ge w_1(x)$ and $u_2(x)\le w_2(x)$ for $x\in J$.
	
	\item[$(ii)$] If $u_i(b^-)\le w_i(b^-)$ and $u_i'(x)-f_i\big(x,\uv(x)\big)\ge w_i'(x)-f_i\big(x,\wv(x)\big)$ for $x\in J$ and $i\in\{1,2\}$, then $u_i(x)\le w_i(x)$ for $x\in J$ and $i\in\{1,2\}$.
\end{itemize}
\end{lemma}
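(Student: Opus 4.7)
My plan is to prove each part by first establishing a \emph{strict} version of the comparison (every hypothesized inequality replaced by a strict one) via a first-breakdown-point argument, and then recovering the stated non-strict claim by perturbing $\wv$ with a small exponential term and letting the perturbation tend to zero. All work will be carried out on an arbitrary compact subinterval $[\alpha,\beta]\subset J$, on which $\fv$ admits a finite Lipschitz constant $L$, and the conclusion on $J$ will follow by exhausting $J$ with such subintervals.

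For the strict version of $(i)$, I would set
\[
\tau := \inf\bigl\{x\in J : u_1(x)\le w_1(x)\text{ or }u_2(x)\ge w_2(x)\bigr\},
\]
observe that strict initial conditions and continuity force $\tau>a$, and argue that if $\tau\in J$ then by minimality at least one of $u_1(\tau)=w_1(\tau)$ or $u_2(\tau)=w_2(\tau)$ holds, the other coordinate already being weakly ordered. Assuming without loss of generality that $u_1(\tau)=w_1(\tau)$, the left-neighborhood ordering $u_1\ge w_1$ gives $u_1'(\tau)\le w_1'(\tau)$, while the strict differential inequality, combined with $u_1(\tau)=w_1(\tau)$, $u_2(\tau)\le w_2(\tau)$, and the monotonicity of $f_1$ in $y_2$, yields
\[
u_1'(\tau) \,>\, w_1'(\tau) + \bigl[f_1(\tau,w_1,u_2) - f_1(\tau,w_1,w_2)\bigr] \,\ge\, w_1'(\tau),
\]
a contradiction; the case $u_2(\tau)=w_2(\tau)$ is symmetric, using that $f_2$ is decreasing in $y_1$.

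To reduce the stated non-strict hypothesis of $(i)$ to this strict version, I would fix any $K>L\sqrt{2}$, set $g(x):=e^{Kx}$, and introduce the perturbation $\wv^\eps:=(w_1-\eps g,\, w_2+\eps g)$. The initial inequalities against $\uv$ become strict at $a^+$. Writing
\[
(w_1^\eps)' - f_1(x,\wv^\eps) = \bigl[w_1' - f_1(x,\wv)\bigr] - \eps g'(x) + \bigl[f_1(x,\wv)-f_1(x,\wv^\eps)\bigr],
\]
and using the Lipschitz bound $|f_1(x,\wv^\eps)-f_1(x,\wv)|\le L\sqrt{2}\,\eps g(x)$ together with $\eps g'(x)=K\eps g(x)>L\sqrt{2}\,\eps g(x)$, the first-coordinate differential inequality against $\uv$ becomes strict; the second-coordinate computation is analogous. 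The strict version then yields $u_1>w_1^\eps$ and $u_2<w_2^\eps$ on $[\alpha,\beta]$, and letting $\eps\to 0^+$ completes part $(i)$.

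For part $(ii)$ I would apply the time reversal $t:=b-x$, $\tilde u_i(t):=u_i(b-t)$, $\tilde w_i(t):=w_i(b-t)$, $\tilde f_i(t,\yv):=-f_i(b-t,\yv)$: then $\tilde f_1$ is increasing in $y_2$ and $\tilde f_2$ is increasing in $y_1$ (a cooperative system), the boundary inequality at $b^-$ becomes an initial inequality at $0^+$, and the differential hypothesis transforms into $\tilde u_i'-\tilde f_i(t,\tilde\uv)\le\tilde w_i'-\tilde f_i(t,\tilde\wv)$ for both $i=1,2$, with the goal $\tilde u_i\le\tilde w_i$. The identical two-step program applies, using the perturbation $\tilde w_i^\eps:=\tilde w_i+\eps g$ in \emph{both} components. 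The principal obstacle throughout is precisely the non-strictness of the hypotheses: at a first-breakdown point the naive comparison only yields an equality of derivatives rather than a strict violation, so the exponential perturbation --- with $K$ chosen large enough to dominate the local Lipschitz error --- is genuinely essential; everything else is bookkeeping over the compact exhaustion and symmetric handling of the two coordinates.
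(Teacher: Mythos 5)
Your argument is correct in its essentials and takes a genuinely different route from the paper: the paper simply invokes the comparison theorem in \cite{Walter1998} (page 112), observing that $\fv$ is quasimonotone decreasing and that part $(ii)$ is a right-boundary-value statement, whereas you give a self-contained proof via the classical two-step scheme (strict comparison by first breakdown point, then $\eps e^{Kx}$-perturbation to recover the non-strict case). What this buys is transparency: your first-breakdown computation makes explicit exactly where the quasimonotone-decreasing structure ($f_1$ decreasing in $y_2$, $f_2$ decreasing in $y_1$) and the opposite-signed perturbation $\wv^\eps = (w_1 - \eps g,\, w_2 + \eps g)$ enter, and your time reversal $t = b - x$, $\tilde f_i = -f_i(b-\cdot,\cdot)$ correctly converts $(ii)$ into a cooperative initial-value system to which the same argument (now with same-signed perturbations $\tilde w_i + \eps g$) applies. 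All of the monotonicity and Lipschitz estimates, including the choice $K > L\sqrt{2}$ that makes $\eps g' = K\eps g$ dominate the Lipschitz error $L\sqrt{2}\,\eps g$, check out.

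The one place that needs more than ``bookkeeping'' is the interface between the compact exhaustion and the $a^+$ boundary data. You propose to work on a compact $[\alpha,\beta]\subset J$ where a uniform Lipschitz constant $L$ is available, but the strict version requires strict inequalities at the left endpoint of whatever interval you run it on, and the only data given are at $a^+$, not at $\alpha > a$. Simply setting $\alpha = a$ is not automatic either, since $\uv(a^+), \wv(a^+)$ need only lie in $\overline{D}$ (not $D$), so the local Lipschitz constant can blow up as $x\to a^+$, and then $K$, and hence $g(\beta)=e^{K\beta}$, are no longer under control. The clean resolutions are either to note that in the intended application the trajectories together with their $a^+$-limits lie in a compact subset $K_0\subset D$ (which gives a single $L$ valid on all of $(a,\beta]$ and lets the perturbation run from $a^+$ directly), or to bypass the issue entirely by running the first-breakdown-point argument on the perturbed comparison with $\tau^* := \inf\{x: \text{conclusion fails}\}$: if $\tau^* > a$ one bootstraps from $\tau^*$ using a Lipschitz constant on $[\tau^*, \tau^*+\delta]$; if $\tau^*=a$ one must pass to the limit $\alpha_n\downarrow a$ and invoke boundedness of $L$ near $a$ anyway. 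Either way, this step should be written out explicitly rather than subsumed under ``exhausting $J$''.
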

\begin{proof}
See, for instance, the comparison theorem on page 112 of \cite{Walter1998}.  Note, however, that $\fv$ is quasimonotone decreasing and that we have stated the lemma for a right-boundary-value problem in $(ii)$.
\end{proof}

We use the following Lemma in the proof of Theorem \ref{thm:VF}.

\begin{lemma}\label{lem:phiy_lim_zero}
Let $\varphi$ be as in Proposition \ref{prop:FBP-sol}.$(i)$. For any $\bet>0$, $\lim_{y\to0^+}\frac{\varphi(y)}{y^\bet} = +\infty$.
\end{lemma}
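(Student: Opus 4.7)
The plan is to exploit the clean identity obtained from the ODE for $\varphi$ in \eqref{eq:phi-H-FBP-sys}:
\begin{linenomath*}\begin{align}
\bigl(\ln(\varphi(y)/y^\beta)\bigr)' = \frac{\varphi'(y)}{\varphi(y)} - \frac{\beta}{y} = \frac{1 - \beta - \rho H(y)/\kap}{y},
\end{align}\end{linenomath*}
so the lemma would follow at once if $1 - \beta - \rho H(y)/\kap$ stayed below a strictly negative constant for $y$ near $0$. Proposition \ref{prop:FBP-sol}.$(i)$ tells us only that $H_0 := \lim_{y\to 0^+} H(y)$ exists in $(0, \kap/\rho]$, so the crux of the argument is to upgrade this to the sharp equality $H_0 = \kap/\rho$; everything else is a short integration.

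To establish $H_0 = \kap/\rho$, we argue by contradiction and suppose $H_0 < \kap/\rho$. Then $1 - \rho H(y)/\kap \geq c_1$ for some $c_1 > 0$ and all $y$ sufficiently small, and integrating $\varphi'/\varphi \geq c_1/y$ yields the polynomial upper bound $\varphi(y) \leq C y^{c_1}$ near $0$. Hence $\varphi(y)^{-1/\gam} \geq C' y^{-c_1/\gam}$ blows up polynomially, and substituting this into the ODE for $H$ in \eqref{eq:phi-H-FBP-sys} shows that the first term
\begin{linenomath*}\begin{align}
\frac{\rho}{\kap y}\Bigl(\frac{\kap}{\rho} - H(y)\Bigr)\Bigl(\varphi(y)^{-1/\gam} - H(y) - \frac{r + \rho - \del}{\rho}\Bigr)
\end{align}\end{linenomath*}
is bounded below by $C'' y^{-1 - c_1/\gam}$, which dominates the competing contributions $(r+\rho)/(\rho\varphi(y))$ and $-\del/(\rho y)$ for $y$ small. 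Integrating the resulting lower bound on $H'(y)$ over $(y, y_0]$ forces $H(y) \leq H(y_0) - C'''\bigl(y^{-c_1/\gam} - y_0^{-c_1/\gam}\bigr) \to -\infty$ as $y \to 0^+$, contradicting the positivity $H(y) > 0$ from Proposition \ref{prop:FBP-sol}.$(i)$.

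With $H_0 = \kap/\rho$ in hand, the lemma follows immediately. For any $\beta > 0$, choose $\delta \in (0, \ya)$ so small that $1 - \beta - \rho H(y)/\kap \leq -\beta/2$ for all $y \in (0, \delta]$. Integrating $\bigl(\ln(\varphi(y)/y^\beta)\bigr)' \leq -\beta/(2y)$ from $y$ to $\delta$ then yields
\begin{linenomath*}\begin{align}
\frac{\varphi(y)}{y^\beta} \geq \frac{\varphi(\delta)}{\delta^\beta}\left(\frac{\delta}{y}\right)^{\beta/2},
\end{align}\end{linenomath*}
which tends to $+\infty$ as $y \to 0^+$. The main obstacle is the contradiction step, and it rests on two order-of-magnitude comparisons, both immediate: that $\varphi^{-1/\gam}$ dominates $H + (r+\rho-\del)/\rho$ for $y$ small (since $H$ is bounded), and that $y^{-1-c_1/\gam}$ dominates $y^{-1}$ (since $c_1/\gam > 0$). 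No finer control of the joint $(\varphi, H)$ dynamics is needed beyond these comparisons.
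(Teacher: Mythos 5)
Your proof is correct, and it is in fact \emph{more complete} than the paper's own argument. The paper's proof of this lemma begins by citing Proposition~\ref{prop:FBP-sol}.$(i)$ for the statement $\lim_{y\to 0^+}H(y) = \kap/\rho$, then observes that $F(y):=\varphi(y)/y^\bet$ satisfies $F'(y) = \frac{\varphi(y)}{y^{\bet+1}}\bigl(1 - \rho H(y)/\kap - \bet\bigr) < 0$ near $0$, and rules out a finite limit via L'H\^{o}pital. But Proposition~\ref{prop:FBP-sol}.$(i)$ only proves $\lim_{y\to 0^+} H(y)\le\kap/\rho$, and the statement of the lemma genuinely requires the sharp equality: if $H$ tended to some $H_0 < \kap/\rho$, then integrating $\varphi'/\varphi = \bigl(1-\rho H/\kap\bigr)/y$ would give $\varphi(y) \le C y^{c_1}$ with $c_1 = 1-\rho H_0/\kap > 0$, and the conclusion would fail for every $\bet < c_1$. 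You identified this dependence and supplied the missing step. Your contradiction argument is sound: the uniform gap $\kap/\rho - H \ge c_1\kap/\rho$ forces polynomial decay of $\varphi$, hence polynomial blow-up of $\varphi^{-1/\gam}$, hence $H'(y)\ge C y^{-1-c_1/\gam}$ near $0$ (the $-\del/(\rho y)$ term is indeed dominated, and $(r+\rho)/(\rho\varphi) > 0$ only helps); integrating then drives $H(y)\to-\infty$, contradicting $H>0$. Your final step (integrating $\bigl(\ln(\varphi/y^\bet)\bigr)' \le -\bet/(2y)$) is a clean alternative to the paper's ``monotone plus L'H\^{o}pital'' step; both are valid once $\lim_{y\to0^+}H(y)=\kap/\rho$ is in hand. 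One small imprecision: you attribute ``$H_0\in(0,\kap/\rho]$'' to Proposition~\ref{prop:FBP-sol}.$(i)$, whereas that proposition only asserts $H_0\le\kap/\rho$ (positivity of the limit is not claimed there), but your argument never uses $H_0>0$, only the pointwise bound $H(y)>0$, so this does not affect correctness.
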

\begin{proof}
The statement is trivial if $\lim_{y\to0^+} \varphi(y)>0$; therefore, suppose $\lim_{y\to0^+} \varphi(y)=0$, and define
\begin{linenomath*}\begin{align}
	F(y):=\frac{\varphi(y)}{y^\bet},
\end{align}\end{linenomath*}
for $0<y<\ya$. Our goal is to show that $\lim_{y\to0^+} F(y)=+\infty$. Assume, on the contrary, $\lim_{y\to0^+} F(y)\ne+\infty$. We compute
\begin{linenomath*}\begin{align}
	F'(y) = \frac{\varphi(y)}{y^{\bet+1}}\left(\frac{\rho}{\kap}\left(\frac{\kap}{\rho} - H\right) - \bet\right),
\end{align}\end{linenomath*}
for $0<y<\ya$.  Because $\lim_{y\to0^+}H(y)=\kap/\rho$ by Proposition \ref{prop:FBP-sol}.$(i)$, there exists an $\epsilon>0$ such that $F(y)$ is decreasing for $y\in(0,\epsilon)$.  Because $F$ is decreasing and positive on $(0,\epsilon)$, and because we assume $\lim_{y\to0^+} F(y)\ne+\infty$, we must have  $\lim_{y\to0^+} F(y)= M$ for some constant $M>0$.  From L'H\^{o}pital's rule, \eqref{eq:phi-H-FBP-sys}, and $\lim_{y\to0^+}H(y)=\kap/\rho$, we  deduce
\begin{linenomath*}\begin{align}
	M=\lim_{y\to 0^+} F(y) = \lim_{y\to 0^+}\frac{\varphi'(y)}{\bet y^{\bet-1}} 
	=\lim_{y\to 0^+} \frac{\rho}{\bet\kap}F(y)\left(\frac{\kap}{\rho}-H(y)\right)
	= 0,
\end{align}\end{linenomath*}
which contradicts $M>0$. Thus, we must have $\lim_{y\to0^+} F(y)=+\infty$.
\end{proof}

\end{document}